\documentclass[10pt,journal]{IEEEtran}
\usepackage{ifpdf}
\usepackage{manfnt}
\usepackage[mathscr]{eucal}
\usepackage{graphicx}

\usepackage{algorithmicx}
\usepackage[ruled,vlined,commentsnumbered]{algorithm2e}
\usepackage{amssymb}
\usepackage{theorem}
\usepackage{caption}
\usepackage{subcaption}
\usepackage{cite}
\usepackage{color}
\usepackage{colortbl}
\definecolor{colorref}{rgb}{0.4648,0,0} 
\definecolor{colorcite}{rgb}{0,0.2902,0.1765}


\usepackage{bm}

\usepackage{amsmath}
\interdisplaylinepenalty=2500


\newtheorem{proposition}{Proposition}

\newtheorem{lemma}{Lemma}

\newtheorem{Cor}{Corollary}

\newcommand{\setn}{\mathcal{N}}
\newcommand{\setm}{\mathcal{M}}

\abovecaptionskip1pt \belowcaptionskip1pt

\newcommand{\Rmnum}[1]{\uppercase\expandafter{\romannumeral #1}}






\newcommand{\signal}[1]{{\boldsymbol{#1}}}

\newcommand{\real}{{\mathbb R}}

\newtheorem{definition}{Definition}
\newtheorem{remark}{Remark}
\newtheorem{fact}{Fact}

\newtheorem{problem}{Problem}
\newtheorem{example}{Example}
\newcommand{\Natural}{{\mathbb N}}
\newcommand{\refeq}[1]{(\ref{#1})}



\title{Connections between spectral properties of asymptotic mappings and solutions to wireless network problems}
\author{R.~L.~G.~Cavalcante $\dagger$\thanks{This work was supported by a Nokia University Donation.}, {\it Member, IEEE}, Qi Liao $\ddagger$, {\it Member, IEEE}, and S.~Sta\'nczak $\dagger$, {\it Senior Member, IEEE} \\ 
	$\dagger$ Fraunhofer Heinrich Hertz Institute and Technical University of Berlin \\   $\ddagger$ Nokia Bell Labs, Stuttgart}

\newenvironment{proof}{{\it Proof:}}{\hfill$\square$\\}

\begin{document}
\maketitle

\begin{abstract}

	In this study we establish connections between asymptotic functions and properties of solutions to important problems in wireless networks. We start by introducing a  class of self-mappings (called asymptotic mappings) constructed with asymptotic functions, and we show that spectral properties of these mappings explain the behavior of solutions to some max-min utility optimization problems. For example, in a common family of max-min utility power control problems, we prove that the optimal utility as a function of the power available to transmitters is approximately linear in the low power regime. However, as we move away from this regime, there exists a transition point, easily computed from the spectral radius of an asymptotic mapping, from which gains in utility become increasingly marginal. From these results we derive analogous properties of the transmit energy efficiency. In this study we also generalize and unify existing approaches for feasibility analysis in wireless networks. Feasibility problems often reduce to determining the existence of the fixed point of a standard interference mapping, and we show that the spectral radius of an asymptotic mapping provides a necessary and sufficient  condition for the existence of such a fixed point. We further present a result that determines whether the fixed point satisfies a constraint given in terms of a monotone norm.  
		
\end{abstract}

\begin{IEEEkeywords}
Utility optimization, fixed point algorithms, interference functions, asymptotic mappings, nonconvex optimization, power control, interference management
\end{IEEEkeywords}

\section{Introduction}

Asymptotic functions play a prominent role in nonlinear analysis \cite{aus03,rock70}, an area widely used to address important problems in wireless communications. However, with few exceptions \cite{renato2016,renato2016maxmin}, asymptotic functions rarely appear in studies on wireless networks, so these functions have not been fully exploited in the context of wireless systems. A fact that may partially explain the absence of asymptotic functions in studies on wireless networks is the difficulty of obtaining simple analytic representations of these functions if we depart from the field of convex analysis. However, as we prove in this study, the same analytic simplification that is applied to compute asymptotic functions associated with convex functions can also be used to compute asymptotic functions associated with, among others, the well-known standard interference functions  \cite{yates95}, which have been extensively used in applications in wireless networks  \cite{martin11,slawomir09,renato2016,renato2016maxmin,renato2016power,feh2013,nuzman07,ho2015,qi2017}. The problems being solved in these applications can often be formalized as fixed point problems, feasibility problems, or optimization problems involving (self)-mappings constructed with standard interference functions. By computing the asymptotic function of each coordinate of these mappings, we introduce the hereafter called asymptotic (self-)mappings, which are shown to belong to a class of mappings that has received a great deal of attention from the mathematical literature \cite{gau04,gun95,krause1986perron,krause01,nussbaum1986convexity,lem13,krause2015}. 

Asymptotic mappings set the stage for the next main contributions of this study. In particular, we show that asymptotic mappings are useful to obtain rigorous insights into properties of the solutions to some problems in wireless networks, and we note that some of these properties have been proved on a case-by-case basis or only observed  in simulations. For example, building upon the results in \cite{nuzman07}, we use the concept of (nonlinear) spectral radius (see Definition~\ref{def.nl_radius}) of asymptotic mappings to derive upper bounds for both the utility and the efficiency (e.g, rate over power) of solutions to some  max-min utility optimization problems. These bounds are simple functions of the budget $\bar{p}$ (e.g., the power) available to transmitters, and they are asymptotically tight. Furthermore, all constants are known, so, in standard power control problems, the bounds are particularly useful to determine a transition point (e.g., a power budget) that gives a coarse indication of whether a wireless network is operating in a noise-limited regime or in an interference-limited regime. Moreover, they reveal that the network utility and the efficiency scale as $\Theta(1)$ and $\Theta(1/\bar{p})$, respectively, as $\bar{p}\to\infty$ (see Sect.~\ref{sect.preliminaries} for the formal definition of the big-$\Theta$ notation). Analogously, as $\bar{p}\to 0^+$, the bounds show that the network utility and the efficiency scale as $\Theta(\bar{p})$ and $\Theta(1)$, respectively.

As a further application of asymptotic mappings, we demonstrate that their spectral radius can be used to unify and generalize existing results in \cite{slawomir09,martin11,siomina12,renato2016,ho2014data,chiang2008power} related to feasibility of network designs. Formally, feasibility studies typically amount to determining whether a standard interference mapping has a fixed point, and previous studies have obtained necessary and sufficient conditions by assuming the interference mappings to be affine \cite[Ch.~2]{slawomir09}\cite[Sect.~2.5.2]{chiang2008power}, to be concave \cite{renato2016}, or to take a very special form \cite{siomina12,ho2014data}. Here we remove all these assumptions. We derive a simple necessary and sufficient  condition for the existence of the fixed point of an arbitrary (continuous) standard interference mapping. By doing so, previous results in the literature can be seen as corollaries of our result, which can also be used in the  analysis of existing interference models that do not satisfy the assumptions required by previous mathematical tools. In addition, to study the feasibility of network designs with power constraints, we show a simple result that provides us with information about the location of the fixed point. All the theory developed here is illustrated in network problems involving the well-known load coupled interference model investigated in \cite{Majewski2010,renato14SPM,renato2016power,siomina12,ho2014data,ho2015,feh2013}, among other studies.\footnote{A conference version of this study appeared in \cite{renato2017SIP}, and some proofs have been partially available in the accompanying unpublished technical reports in \cite{renato2017performance,renatomaxmin}. In particular, the latter reference shows similar bounds on the utility and efficiency of solutions to utility max-min problems. However, unlike the bounds shown in the next sections, those in \cite{renatomaxmin} are not necessarily asymptotically tight, and they are only valid for concave mappings. Furthermore, the results in \cite{renatomaxmin} do not establish  connections between the spectral radius of nonlinear mappings and problems in network max-min utility optimization and feasibility analysis.}

This study is structured as follows. In Sect.~\ref{sect.preliminaries} we introduce notation, review existing mathematical results in the literature, obtain the first main technical contribution of this study (Proposition~\ref{prop.af_properties}), and prove auxiliary results necessary for the following sections. Sect.~\ref{sect.bounds} and Sect.~\ref{sect.existence} provide novel insights into properties of general classes of problems common in wireless networks. Concrete applications of the theory developed here are shown in Sect.~\ref{sect.applications}.

\section{Mathematical preliminaries}
\label{sect.preliminaries}

The objective of this section is to develop part of the mathematical machinery required for the applications in this study. We further introduce notation and results in mathematics that are necessary to keep the presentation as self-contained as possible. The main technical contribution of this section is  Proposition~\ref{prop.af_properties}, which shows properties of asymptotic functions (Definition~\ref{def.asymp_func}) associated with classes of interference functions (Definition~\ref{def.inter_func}) that are common in problems in wireless networks. These properties enable us to introduce a simple self-mapping, hereafter called {asymptotic mapping (Definition~\ref{def.amap}),  associated with classes of self-mappings that have been used to analyze and optimize wireless networks for many years. As shown later in this study, spectral properties of asymptotic mappings lead to results that unify existing approaches in the literature (see, for example, Proposition~\ref{proposition.existence} and Proposition~\ref{proposition.feasibility_constraints}), and they also provide us with valuable insights into the behavior of wireless networks (see Sect.~\ref{sect.bounds}).

We now turn our attention to the standard notation and definitions used in this study. By $\real_+$ and $\real_{++}$ we denote the sets of, respectively, nonnegative reals and positive reals. The (effective) \emph{domain} of a function $f:\real^N\to \real\cup\{-\infty,\infty\}$ is given by $\mathrm{dom}~f:=\{\signal{x}\in\real^N~|~f(\signal{x})<\infty\}$, and $f$ is \emph{proper} if $\mathrm{dom}~f \neq\emptyset$ and $(\forall\signal{x}\in\real^N)~ f(\signal{x}) > -\infty$.  A function $f:\real^N\to \real\cup\{-\infty,\infty\}$ is \emph{positively homogeneous} if $\signal{0}\in\mathrm{dom}~f$ and $(\forall \signal{x}\in\real^N)(\forall \alpha> 0)f(\alpha\signal{x})=\alpha~f(\signal{x})$. We say that a proper function $f:\real^N\to\real\cup\{\infty\}$ is \emph{continuous if restricted to $C\subset \mathrm{dom}~f \subset \real^N$}  if $(\forall\signal{x}\in C) (\forall(\signal{x}_n)_{n\in\Natural}\subset C) \lim_{n\to\infty} \signal{x}_n=\signal{x}\Rightarrow \lim_{n\to\infty} f(\signal{x}_n)=f(\signal{x})$, and we write $\lim_{n\to\infty}\signal{x}_n=\signal{x}$ if $\lim_{n\to\infty}\|\signal{x}_n-\signal{x}\|=0$ for some (and hence for every) norm $\|\cdot\|$ in $\real^N$. The notions of upper and lower semicontinuity for proper functions $f:\real^N\to \real \cup \{\infty\}$ restricted to sets $C\subset \mathrm{dom}~f \subset \real^N$ are defined similarly.  Given two functions $f:\real_+\to\real_+$ and $g:\real_+\to\real_+$ we say that $f$ \emph{scales} as $\Theta(g(x))$ as $x\to\infty$ (or, in set notation, $f(x)\in\Theta(g(x))$ as $x\to\infty$) if $(\exists k_0\in\real_{++}) (\exists k_1\in\real_{++}) (\exists x_0\in\real_{++})(\forall x\in\real_+) 
x\ge x_0 \Rightarrow k_0 g(x)\le f(x)\le k_1 g(x)$. If $g$ is a constant function, then we use the convention $f(x)\in\Theta(1)$. The scaling of functions for $x\to 0^+$ is defined similarly. We  consider a positive neighborhood of zero in the above definition.

 Given an arbitrary pair of vectors $(\signal{x},\signal{y})\in\real^N\times\real^N$, the inequality  $\signal{x}\le\signal{y}$ should be understood coordinate-wise. A norm $\|\cdot\|$ in $\real^N$ is said to be \emph{monotone} if $(\forall\signal{x}\in\real^N)(\forall\signal{y}\in\real^N)$ $\signal{0}\le\signal{x}\le\signal{y}\Rightarrow\|\signal{x}\| \le \|\signal{y}\|$, and we refer readers to \cite{john91} for nonequivalent notions of monotonicity that are also common in the literature. By $(\cdot)^t$ we denote the transpose of a vector or matrix.

 For clarity of exposition, we say that $T:C\to D$ is a \emph{mapping} only if $D\subset C \subset \real^N$. In other cases, we use the word \emph{function}. The set of fixed points of a mapping $T:C\to C$, $C\subset\real^N$, is denoted by $\mathrm{Fix}(T):=\{\signal{x}\in C~|~\signal{x}=T(\signal{x})\}$. If $C\subset \real^N$ is a convex set, we say that a mapping $T:C\to\real^N:\signal{x}\mapsto[t_1(\signal{x}), \cdots, t_N(\signal{x})]$ is {\it concave} if the function $t_i:C\to\real$ is concave for every $i\in\{1,\ldots,N\}$. Given a mapping $T:C\to C$, we denote by $T^n$ the $n$-fold composition of $T$ with itself.

The fundamental mathematical tool used in this study is the analytic representation of asymptotic functions, which we state as a definition:

\begin{definition}
	\label{def.asymp_func} {(\cite[Theorem~2.5.1]{aus03} Asymptotic function)}
	The asymptotic function associated with a proper function $f:\real^N\to\real\cup \{\infty\}$ is the function given by 
\begin{align*}
f_{\infty}:\real^N  \to  \real \cup \{-\infty,~\infty\}:\signal{x}\mapsto \liminf_{h\to\infty,\signal{y}\to\signal{x}}{f(h\signal{y})}/{h},
\end{align*}
 or, equivalently,
	\begin{align*}
		\begin{array}{rl}
		f_{\infty}:\real^N  \to & \real \cup \{-\infty,~\infty\} \\ 
		\signal{x}  \mapsto & \inf\left\{\displaystyle \liminf_{n\to \infty} \dfrac{f(h_n\signal{x}_n)}{h_n}~|~h_n\to\infty,~\signal{x}_n\to\signal{x} \right\},
		\end{array}
	\end{align*} 
		where $(\signal{x}_n)_{n\in\Natural}$ and $(h_n)_{n\in\Natural}$ are sequences in $\real^N$ and $\real$, respectively.
\end{definition}

Obtaining asymptotic functions associated with arbitrary proper functions directly from the definition can be difficult, but we show in the next subsection that, for some functions that are common in network feasibility and optimization problems (more precisely, for the functions in Definition~\ref{def.inter_func} below), their associated asymptotic functions are easy to compute [see, in particular,  Proposition~\ref{prop.af_properties}(i)]. 

\subsection{Interference functions}

The main contributions of this study are obtained from fundamental properties of the following classes of interference functions:

\begin{definition}
	\label{def.inter_func}
	Consider the following statements for a (proper) function $f: \real^N \to \real_{+} \cup \{\infty\}$ with $\mathrm{dom} f = \real_{+}^N$: \par 

		\item[{[P1]}] ({\it Scalability}) $(\forall \signal{x}\in\real^N_+)$ $(\forall \alpha>1)$  $\alpha {f}(\signal{x})>f(\alpha\signal{x})$. \par
		
		\item[{[P2]}] ({\it Weak scalability}) $(\forall \signal{x}\in\real^N_+)$ $(\forall \alpha\ge 1)$  $\alpha {f}(\signal{x}) \ge f(\alpha\signal{x})$. \par
				
		\item[{[P3]}] ({\it Homogeneity}) $(\forall \alpha \ge 0)$ $(\forall \signal{x}\in\real_+^N)~ {f}(\alpha \signal{x}) = \alpha f(\signal{x})$. \par		
		
		\item[{[P4]}] ({\it Monotonicity}) $(\forall \signal{x}_1\in\real_+^N)$ $(\forall \signal{x}_2\in\real_+^N)$ $\signal{x}_1\ge\signal{x}_2 \Rightarrow{f}(\signal{x}_1)\ge f(\signal{x}_2)$. \par

If (P1) and (P4) are satisfied, then $f$ is said to be a standard interference (SI) function  \cite{yates95}. If (P2) and (P4) are satisfied, then $f$ is said to be a weakly standard interference (WSI) function. If (P3) and (P4) are satisfied, then $f$ is said to be a general interference (GI) function  \cite[p.~4]{martin11}.~\footnote{In the original definition in \cite[p.~4]{martin11}, general interference functions should also satisfy $f(\signal{x})>0$ for some $\signal{x}\in\real_{++}^N$, but we do not impose this requirement here.}

\end{definition}

We now show some useful relations among the classes of functions in Definition~\ref{def.inter_func} and also among classes of self-mappings constructed with these functions. We start with a lemma that can be easily proved by replacing strict inequalities with weak inequalities in the proof of \cite[Proposition~1]{renato2016}. Details of the proof are omitted for brevity.

\begin{lemma}
	\label{lemma.concave}
	Let $f:\real_+^N\to\real_+ $ be a nonnegative concave (NC) function. Consider the extension given by
	\begin{align}
	\label{eq.ext}
  \tilde{f}:\real^N\to\real_+\cup\{\infty\}: \signal{x}\mapsto 
    \begin{cases}
	  f(\signal{x}), &\text{if }\signal{x}\in \real_+^N \\
	  \infty, &\text{otherwise}.
    \end{cases}
\end{align}
Then $\tilde{f}:\real^N\to\real_+\cup\{\infty\}$ is a WSI function. In particular, if the range of ${f}$ is in the set of positive reals [or, equivalently, $f:\real^N_+\to\real_{++}$ is a positive concave (PC) function], then $\tilde{f}$ is a SI function.
\end{lemma}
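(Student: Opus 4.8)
Here is how I would prove Lemma~\ref{lemma.concave}. The plan is to check the extension in \refeq{eq.ext} directly against the defining properties of the interference classes in Definition~\ref{def.inter_func}. Properness of $\tilde{f}$ and the identity $\mathrm{dom}~\tilde{f}=\real_+^N$ are immediate from the case distinction in \refeq{eq.ext} (the value $\infty$ is attained precisely off $\real_+^N$, and $f$ is finite and $\ge 0$ on $\real_+^N$). Moreover, since $\tilde{f}=\infty$ outside $\real_+^N$, in properties (P1)/(P2)/(P4) only arguments in $\real_+^N$ matter, where $\tilde{f}$ coincides with $f$. So the work reduces to: (i) monotonicity (P4); (ii) weak scalability (P2); and, under the extra positivity hypothesis, (iii) strict scalability (P1).

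For (P4) I would fix $\signal{x}_1\ge\signal{x}_2\ge\signal{0}$ in $\real_+^N$ and look at the one-dimensional restriction $g:[0,\infty)\to\real_+:t\mapsto f\bigl(\signal{x}_2+t(\signal{x}_1-\signal{x}_2)\bigr)$. Because $\signal{x}_1-\signal{x}_2\ge\signal{0}$ and $\signal{x}_2\ge\signal{0}$, the argument stays in $\real_+^N$ for all $t\ge 0$, so $g$ is well defined, concave (restriction of a concave function to a segment), and nonnegative. A concave function on a half-line that is bounded below is nondecreasing — otherwise its nonincreasing difference quotients would force $g(t)\to-\infty$. Hence $g(0)\le g(1)$, i.e.\ $f(\signal{x}_2)\le f(\signal{x}_1)$, which is (P4).

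For (P2) I would fix $\signal{x}\in\real_+^N$ and $\alpha\ge 1$. Since $\signal{0}\in\real_+^N$ and $f(\signal{0})\ge 0$ by hypothesis, applying concavity to the convex combination $\signal{x}=\tfrac{1}{\alpha}(\alpha\signal{x})+\left(1-\tfrac{1}{\alpha}\right)\signal{0}$ gives
\begin{align*}
f(\signal{x})\ \ge\ \tfrac{1}{\alpha}\,f(\alpha\signal{x})+\left(1-\tfrac{1}{\alpha}\right)f(\signal{0})\ \ge\ \tfrac{1}{\alpha}\,f(\alpha\signal{x}),
\end{align*}
and multiplying by $\alpha>0$ yields $\alpha f(\signal{x})\ge f(\alpha\signal{x})$. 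Together with (P4) this shows $\tilde{f}$ is a WSI function. For the SI refinement, if in addition $f(\signal{x})>0$ for every $\signal{x}\in\real_+^N$ (the stated equivalent of $f$ being a PC function), then $f(\signal{0})>0$, and for $\alpha>1$ the coefficient $1-\tfrac{1}{\alpha}$ is strictly positive, so the second inequality in the display becomes strict: $\alpha f(\signal{x})>f(\alpha\signal{x})$, which is (P1); with (P4) this makes $\tilde{f}$ a SI function.

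The only nonroutine step is the monotonicity argument: concavity by itself does not imply monotonicity, and the proof genuinely exploits that $f$ is nonnegative on the \emph{unbounded} cone $\real_+^N$, through the scalar fact that a concave, below-bounded function on a half-line is nondecreasing. Everything else is a single application of Jensen's inequality with reference point $\signal{0}$ — which is precisely the "replace strict inequalities by weak ones in the proof of Proposition~1 of \cite{renato2016}" remark made in the statement.
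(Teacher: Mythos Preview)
Your proof is correct and is essentially the argument the paper has in mind: the paper omits the details and simply points to \cite[Proposition~1]{renato2016} with strict inequalities relaxed to weak ones, and your Jensen-at-the-origin step for (P2)/(P1) together with the one-dimensional concave/bounded-below argument for (P4) is exactly that standard route. There is nothing to add.
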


The functions in Definition~\ref{def.inter_func} are typically used to construct self-mappings that have many applications in wireless networks  (see, for example 
 \cite{slawomir09,martin11,yates95,nuzman07,renato2016,feh2013}). In particular, in this study we consider the following mappings:

\begin{definition}
	\label{definition.mappings}
	Given $N\in\Natural$ functions $f_i:\real_+^N \to \real_+$, $i\in\{1,\ldots,N\}$, we say that the self-mapping $T:\real^N_+\to \real^N_+:\signal{x}\mapsto[f_1(\signal{x}),\ldots, f_N(\signal{x})]$ is a weakly standard interference mapping (WSI mapping) if all coordinate functions $f_1,\ldots,f_N$ extended as done in \refeq{eq.ext} are WSI functions. Standard interference mappings (SI mappings),  general interference mappings (GI mappings), nonnegative concave mappings (NC mappings), and positive concave mappings (PC mappings) are defined similarly [NOTE: for NC and PC mappings, only the domain  $\real_+^N$ is considered to determine concavity of the extensions in \refeq{lemma.concave}].
\end{definition}

For convenience, denote by $\mathcal{F}_\mathrm{WSI}$, $\mathcal{F}_\mathrm{SI}$, $\mathcal{F}_\mathrm{GI}$, $\mathcal{F}_\mathrm{NC}$, and $\mathcal{F}_\mathrm{PC}$ the sets of, respectively, WSI mappings, SI mappings, GI mappings, NC mappings, and PC mappings. The following relations follow directly from the definitions of the mappings: $\mathcal{F}_\mathrm{SI} \subset \mathcal{F}_\mathrm{WSI}$, $\mathcal{F}_\mathrm{GI} \subset \mathcal{F}_\mathrm{WSI}$, and $\mathcal{F}_\mathrm{PC} \subset \mathcal{F}_\mathrm{NC}$. Furthermore, by Lemma~\ref{lemma.concave}, we have $\mathcal{F}_\mathrm{PC}\subset \mathcal{F}_\mathrm{SI}\subset \mathcal{F}_\mathrm{WSI}$ and $\mathcal{F}_\mathrm{NC}\subset \mathcal{F}_\mathrm{WSI}$. Moreover,  \cite[Lemma~1]{leung2004} [see Fact~\ref{fact.p_si}(i) below] and the homogeneity of general interference functions imply that the sets $\mathcal{F}_\mathrm{SI}$ and $\mathcal{F}_\mathrm{GI}$ are disjoint. For ease of reference, in Fig.~\ref{fig.classes} we related these sets of mappings with a simple diagram, and, for concreteness, in Example~\ref{example.simple_affine} we show that the above mappings are common in problems in wireless networks.

\begin{figure}
	\begin{center}
		\includegraphics[width=0.6\columnwidth]{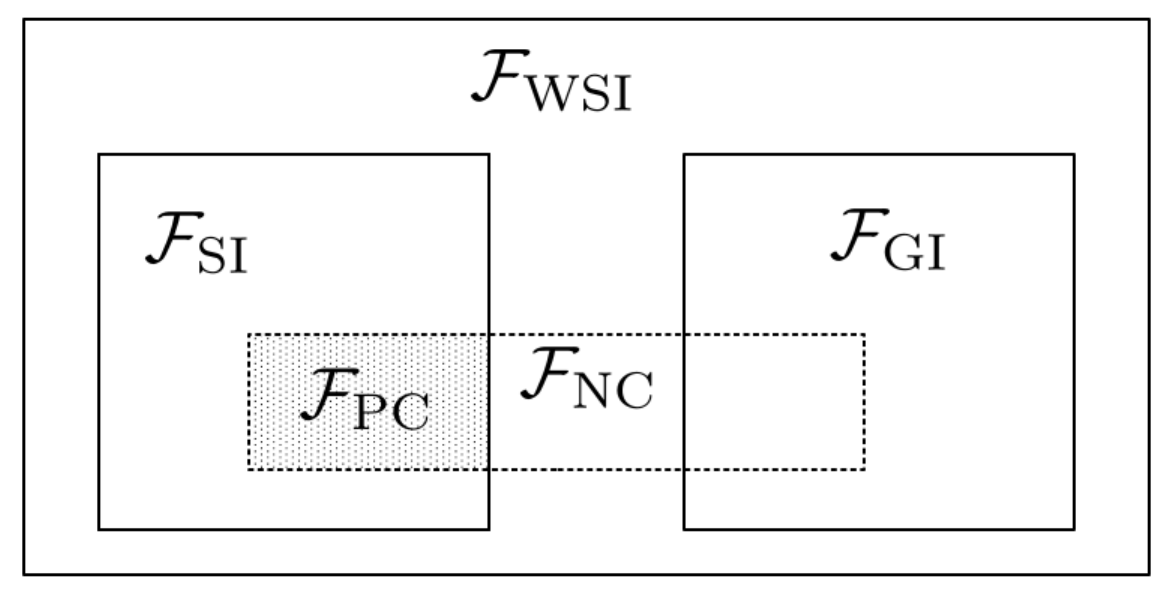}
		\caption{Subclasses of weakly standard interference mappings.} 
		\label{fig.classes}
	\end{center}

\end{figure}

\begin{example}
	\label{example.simple_affine}
	Consider the uplink of a network with one base station and $N$ transmitters represented by the elements of the set $\mathcal{K}=\{1,\ldots, N\}$. Assume that all users share the same wireless resources. Denote by $(p_k)_{k\in\mathcal{K}}\in \real_{++}^N$ and by $(g_k)_{k\in\mathcal{K}}\in \real_{++}^N$, respectively, the power and the pathloss of the transmitters. A widely used performance measure indicating the quality of the wireless links is the signal-to-interference-plus-noise ratio (SINR), which, for the  $k$th user, is given by $\gamma_k:= p_k g_k / (\sum_{j\in\mathcal{K}\backslash\{k\}}p_jg_j+\sigma_k)$, where $\sigma_k\in\real_+$ is the noise power. By rearranging terms of the definition of the SINR, we verify that $ (\forall k\in\mathcal{K})~ p_k = (\gamma_k/g_k) (\sum_{j\in\mathcal{K}\backslash\{k\}}p_jg_j+\sigma_k)$, and we conclude that the power vector $\signal{p}=[p_1,\ldots,p_N]^t$ is a fixed point of the affine mapping $T:\real_+^N\to\real_+^N:\signal{p}\mapsto [t_1(\signal{p}),\ldots, t_N(\signal{p})]^t$, where $(\forall k\in\mathcal{K})~t_k(\signal{p})=(\gamma_k/g_k) (\sum_{j\in\mathcal{K}\backslash\{k\}}p_jg_j+\sigma_k)$. This mapping can be equivalently written as $(\forall \signal{p}\in\real_+^N)~T(\signal{p})=\signal{A}\signal{p}+\signal{u}$, where $\signal{u}=[\sigma_1,\ldots,\sigma_N]^t \in\real_+^K$, and $\signal{A}\in\real_+^{N\times N}$ is a matrix constructed with the SINR values $(\gamma_k)_{k\in\mathcal{K}}$ and the pathloss $(g_k)_{k\in\mathcal{K}}$. Regardless of the values taken by the matrix $\signal{A}\in\real_+^{N\times N}$ and by the vector $\signal{u}\in\real_+^N$, the mapping  $T$ is a member of $\mathcal{F}_\mathrm{NC}$, and hence also of $\mathcal{F}_\mathrm{WSI}$. If $\signal{u}=\signal{0}$, as commonly used in signal-to-interference (SIR) balancing problems \cite[Ch.~6]{martin11}, the mapping $T$ is also a member of $\mathcal{F}_\mathrm{GI}$, but not of $\mathcal{F}_\mathrm{SI}$. If $\signal{u}\in\real_{++}^N$, we have a positive concave mapping (i.e., $T\in \mathcal{F}_\mathrm{PC}$), and the mapping $T$ is a member of $\mathcal{F}_\mathrm{SI}$, but not of $\mathcal{F}_\mathrm{GI}$. For $\signal{u}\in\real_+^N\backslash \real_{++}^N$ with $\signal{u}\ne\signal{0}$, the mapping $T$ is a member of neither $\mathcal{F}_\mathrm{SI}$ nor $\mathcal{F}_\mathrm{GI}$. 
	
\end{example}

Here, particular focus is devoted to mappings in $\mathcal{F}_\mathrm{SI}$, which have some useful properties that we extensively use in the proof of our main contributions: 

\begin{fact}
	\label{fact.p_si}
	(\emph{Selected properties of mappings in $\mathcal{F}_\mathrm{SI}$}) Let $T:\real_+^N\to\real_+^N$ be a mapping in $\mathcal{F}_\mathrm{SI}$. Then each of the following holds:
	\begin{itemize}
		\item[(i)] $(\forall \signal{x}\in\real_+^N)~ T(\signal{x})\in\real_{++}^N$. 
		\item[(ii)] $\mathrm{Fix}(T)$ is either a singleton or the empty set.
		\item[(iii)] If $\signal{x}\in\real_+^N$ satisfies $T(\signal{x})\ge \signal{x}$ (respectively, $T(\signal{x})\le \signal{x}$), then the vector sequence $(\signal{x}_n:=T^n(\signal{x}))_{n\in\Natural}$ is nondecreasing (respectively, nonincreasing) in each coordinate.
		\item[(iv)] Assume that $\mathrm{Fix}(T)\neq\emptyset$. Given an arbitrary vector $\signal{x}\in\real_+^N$, we have $\lim_{n\to\infty}T^n(\signal{x})=\signal{u}\in\mathrm{Fix}(T)$. \\ 
		\item[(v)] $\mathrm{Fix}(T)\neq\emptyset$ if and only if there exists $\signal{x}\in\real^N_+$ satisfying $T(\signal{x})\le\signal{x}$.
	\end{itemize}
\end{fact}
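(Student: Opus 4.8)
The plan is to derive all five items from only the scalability property (P1) and the monotonicity property (P4) of the coordinate functions $f_1,\dots,f_N$ (extended as in \refeq{eq.ext}), along the classical lines of \cite{yates95}; part (i) is also the statement of \cite[Lemma~1]{leung2004}. First I would record two elementary consequences of (P1) to be used throughout. Evaluating (P1) at $\signal{x}=\signal{0}$ with any $\alpha>1$ gives $\alpha f_i(\signal{0})>f_i(\signal{0})$, hence $f_i(\signal{0})>0$ (and $f_i(\signal{0})<\infty$ since $\signal{0}\in\mathrm{dom}\,f_i=\real_+^N$); with (P4) this yields $f_i(\signal{x})\ge f_i(\signal{0})>0$ for every $\signal{x}\in\real_+^N$, which is exactly part (i). Moreover, applying (P1) with $1/\alpha$ to the point $\alpha\signal{x}$ shows $f_i(\alpha\signal{x})\ge\alpha f_i(\signal{x})$ for $\alpha\in(0,1]$, while for $\alpha\ge1$ we get $f_i(\alpha\signal{x})\le\alpha f_i(\signal{x})$.

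For part (ii) I would argue by contradiction. If $\signal{x},\signal{y}$ are distinct fixed points, then by part (i) both lie in $\real_{++}^N$, so all ratios $x_i/y_i$ are well defined; since $\signal{x}\ne\signal{y}$, one of $\max_i x_i/y_i$, $\max_i y_i/x_i$ exceeds $1$, and assuming without loss of generality $\beta:=\max_i x_i/y_i>1$, we have $\signal{x}\le\beta\signal{y}$ with equality in some coordinate $j$. Then $x_j=f_j(\signal{x})\le f_j(\beta\signal{y})<\beta f_j(\signal{y})=\beta y_j=x_j$ by (P4), (P1), and $\signal{y}\in\mathrm{Fix}(T)$, a contradiction. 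Part (iii) is immediate from (P4) by induction: $T(\signal{x})\ge\signal{x}$ gives $T^{n+1}(\signal{x})=T(T^n(\signal{x}))\ge T(T^{n-1}(\signal{x}))=T^n(\signal{x})$, and symmetrically for $T(\signal{x})\le\signal{x}$.

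The crux — and the step I expect to be the main obstacle — is a lemma asserting that any orbit $(\signal{x}_n:=T^n(\signal{x}))$ that is monotone in each coordinate (which by part (iii) happens whenever $T(\signal{x})\ge\signal{x}$ or $T(\signal{x})\le\signal{x}$) and bounded in $\real_+^N$ converges to a point of $\mathrm{Fix}(T)$. The coordinatewise limit $\signal{w}$ exists by monotone convergence, and $\signal{w}\ge T(\signal{0})>\signal{0}$ by part (i); the nontrivial claim is $T(\signal{w})=\signal{w}$, which I would prove with no continuity hypothesis on $T$. In the increasing case $\signal{x}_n\uparrow\signal{w}$, monotonicity gives $\signal{x}_{n+1}=T(\signal{x}_n)\le T(\signal{w})$, hence $\signal{w}\le T(\signal{w})$; for the reverse inequality set $\alpha_n:=\min_i(\signal{x}_n)_i/w_i\in(0,1]$, note $\alpha_n\to1$, and use $\signal{x}_n\ge\alpha_n\signal{w}$ with (P4) and the $\alpha\le1$ inequality above to get $\signal{x}_{n+1}\ge T(\alpha_n\signal{w})\ge\alpha_n T(\signal{w})$, so $\signal{w}\ge T(\signal{w})$. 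The decreasing case is symmetric, using instead $\beta_n:=\max_i(\signal{x}_n)_i/w_i\ge1$ with $\beta_n\to1$ and the $\alpha\ge1$ inequality. The key idea is that scalability, applied with scalars tending to $1$ from either side, pins down the monotone limit.

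With this lemma the remaining items follow quickly. For part (v), the forward implication is trivial ($T(\signal{u})=\signal{u}\le\signal{u}$), and for the converse, if $T(\signal{x})\le\signal{x}$ then by part (iii) the orbit $(T^n(\signal{x}))$ is nonincreasing and bounded below by $\signal{0}$, so the lemma produces a fixed point. For part (iv), let $\{\signal{u}\}=\mathrm{Fix}(T)$ by part (ii). Applying part (iii) and the lemma at $\signal{0}$ (note $T(\signal{0})\ge\signal{0}$ and, by (P4)-induction, $T^n(\signal{0})\le\signal{u}$ for all $n$) gives $T^n(\signal{0})\uparrow\signal{u}$; applying them at $\beta\signal{u}$ with $\beta:=\max\{1,\max_i x_i/u_i\}$ — which satisfies $T(\beta\signal{u})\le\beta\signal{u}$ by the $\alpha\ge1$ inequality — gives $T^n(\beta\signal{u})\downarrow\signal{u}$. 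Finally, for arbitrary $\signal{x}\in\real_+^N$ we have $\signal{0}\le\signal{x}\le\beta\signal{u}$, so (P4) yields $T^n(\signal{0})\le T^n(\signal{x})\le T^n(\beta\signal{u})$ for every $n$, and the sandwich gives $T^n(\signal{x})\to\signal{u}$.
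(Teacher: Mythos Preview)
Your proof is correct and follows precisely the classical approach of \cite{yates95} (and \cite[Lemma~1]{leung2004} for part~(i)) that the paper itself cites rather than reproducing. The key contraction/sandwich argument in your lemma, using scalability with scalars $\alpha_n,\beta_n\to 1$ to force $T(\signal{w})=\signal{w}$ without any continuity assumption, is exactly the Yates mechanism, so there is no substantive difference between your route and the paper's.
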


The proof of Fact~\ref{fact.p_si}(i) is found in \cite[Lemma~1]{leung2004}, and the proof of the remaining properties is found in \cite{yates95}. Hereafter, to emphasize Fact~\ref{fact.p_si}(i), we often write $T:\real_+^N\to\real_{++}^N$ (instead of $T:\real_+^N\to\real_{+}^N$) when dealing with mappings in $\mathcal{F}_\mathrm{SI}$.

In the next sections, we derive properties of solutions to problems involving continuous mappings in $\mathcal{F}_\mathrm{SI}$ by, first, associating these mappings to mappings in $\mathcal{F}_\mathrm{GI}$ and  by, second, studying properties of the  mappings in $\mathcal{F}_\mathrm{GI}$. Although the sets $\mathcal{F}_\mathrm{SI}$ and $\mathcal{F}_\mathrm{GI}$ are disjoint, the mappings in these two sets have in common all properties that are valid for mappings in $\mathcal{F}_\mathrm{WSI}$ (see Fig.~\ref{fig.classes}). Therefore, we can avoid duplication of effort to prove these common properties by using mappings in $\mathcal{F}_\mathrm{WSI}$, or their generating coordinate functions, in the statements of our results. For example, the next lemma shows properties that are valid for both $\mathrm{SI}$ and $\mathrm{GI}$ functions. It is a slight extension of results shown in \cite[Lemma~2]{stanczak2009decentralized}, so we omit the proof for brevity.

 \begin{lemma}
 	\label{lemma.basic_properties}
 	Let $f:\real^N \to \real_{+} \cup \{\infty\}$ be a WSI function. Then we have each of the following:

 		\item[(i)] $(\forall \signal{x}\in\real_{+}^N)(\forall \alpha\in~]0,1[) \quad f(\alpha\signal{x}) \ge \alpha f(\signal{x})$, with strict inequality if $f$ is also a SI function.
 		
 		\item[(ii)] $(\forall \signal{x}\in\real_{+}^N)(\forall \alpha_1\in\real_{++})(\forall \alpha_2\in\real_{++}) \\
 		\alpha_2 > \alpha_1 \Rightarrow \dfrac{1}{\alpha_1} f(\alpha_1\signal{x})	\ge \dfrac{1}{\alpha_2} f(\alpha_2\signal{x})$, where the last inequality is strict if $f$ is also a SI function.
 		
 		\item[(iii)] $(\forall \signal{x}\in\real_{+}^N)~ \lim_{h\to\infty} f(h\signal{x})/h\in\real_+$ (i.e., the limit exists for every $\signal{x}\in\real^N_+$).

 \end{lemma}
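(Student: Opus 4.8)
The plan is to obtain all three parts from a single elementary move---rescaling the argument---applied to weak scalability (P2), with the strict versions following by using strict scalability (P1) in place of (P2); monotonicity (P4) is not needed. Before anything else I would record the one point that needs care: since $\mathrm{dom}~f=\real_+^N$ and $\alpha\signal{x}\in\real_+^N$ for every $\signal{x}\in\real_+^N$ and every $\alpha>0$, all function values encountered below are finite, so each division is legitimate.

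For part (i): given $\signal{x}\in\real_+^N$ and $\alpha\in\,]0,1[$, I would apply (P2) to the point $\signal{y}:=\alpha\signal{x}\in\real_+^N$ with the scalar $1/\alpha\ge 1$, which gives $(1/\alpha) f(\alpha\signal{x})\ge f((1/\alpha)\alpha\signal{x})=f(\signal{x})$, and then multiply through by $\alpha>0$. When $f$ is an SI function, the scalar $1/\alpha$ is strictly greater than $1$, so (P1) applies in place of (P2) and the inequality becomes strict.

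For part (ii): given $\signal{x}\in\real_+^N$ and $0<\alpha_1<\alpha_2$, I would apply (P2) to the point $\signal{y}:=\alpha_1\signal{x}\in\real_+^N$ with the scalar $\alpha_2/\alpha_1>1$, obtaining $(\alpha_2/\alpha_1) f(\alpha_1\signal{x})\ge f(\alpha_2\signal{x})$, and then divide by $\alpha_2>0$. The strict case again uses (P1), which is legitimate because $\alpha_2/\alpha_1>1$. In words, part (ii) states that $h\mapsto f(h\signal{x})/h$ is nonincreasing on $\real_{++}$.

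For part (iii): I would set $\phi:\real_{++}\to\real_+:h\mapsto f(h\signal{x})/h$, note that $\phi\ge 0$ because $f$ is $\real_+$-valued on $\real_+^N$, and invoke part (ii) to conclude that $\phi$ is nonincreasing; hence $\lim_{h\to\infty}\phi(h)=\inf_{h>0}\phi(h)$ exists. Finiteness follows from $\phi(h)\le\phi(1)=f(\signal{x})<\infty$ for $h\ge 1$, so the limit lies in $[0,f(\signal{x})]\subset\real_+$. There is no genuine obstacle in any of the three parts; the argument is exactly the one underlying \cite[Lemma~2]{stanczak2009decentralized}, with the strict inequalities used there relaxed to weak ones to cover WSI functions.
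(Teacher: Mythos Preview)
Your proof is correct and is precisely the standard argument the paper has in mind: the paper omits the proof and simply refers to \cite[Lemma~2]{stanczak2009decentralized}, whose proof proceeds by exactly the rescaling trick you use (apply (P2)/(P1) to $\alpha\signal{x}$ with scalar $1/\alpha$ or $\alpha_2/\alpha_1$), then reads off (iii) from the monotonicity of $h\mapsto f(h\signal{x})/h$. Your observation that (P4) is not invoked anywhere is also accurate.
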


We now have all ingredients to prove the first main contribution of this study. More precisely, Proposition \ref{prop.af_properties} establishes the connection between  Lemma~\ref{lemma.basic_properties}(iii) and asymptotic functions. This connection is of practical significance because it reveals that a useful simplification for the computation of $f_\infty$ if $f$ is convex \cite[Corollary~2.5.3]{aus03} is also available if $f$ belongs to a class of functions that are not necessarily convex or concave, as common in problems in wireless networks.

\begin{proposition}
	\label{prop.af_properties}
	The asymptotic function $f_{\infty}:\real^N  \to  \real \cup \{\infty\}$ associated with a WSI function $f:\real^N \to \real_{+} \cup \{\infty\}$ has the following properties:

		\item[(i)] $(\forall\signal{x}\in\real^N_+)~f_\infty(\signal{x}) = \lim_{h\to\infty} f(h \signal{x})/{h} \in\real_+$.
		
		\item[(ii)] $f_\infty$ is lower semicontinuous and positively homogeneous. If $f$ is in addition continuous if restricted to the nonnegative orthant $\real_{+}^N$, then $f_\infty$ is continuous if restricted to the nonnegative orthant $\real_{+}^N$.
		\item[(iii)] [{\it Monotonicity}] $(\forall \signal{x}_1\in\real_+^N)$ $(\forall \signal{x}_2\in\real_+^N)$ $\signal{x}_1\ge\signal{x}_2  \Rightarrow  {f}_\infty(\signal{x}_1)\ge f_\infty(\signal{x}_2)$.
		\item[(iv)] Let  $\signal{x}\in\real^N_{+}$ be arbitrary. If $f$ is continuous if restricted to $\real_{+}^N$, then
		$f_\infty(\signal{x})=\lim_{n\to\infty}{f(h_n \signal{x}_n)}/{h_n}$ for all sequences $(\signal{x}_n)_{n\in\Natural}\subset\real_{+}^N$ and $(h_n)_{n\in\Natural}\subset\real_{++}$ such that $\lim_{n\to\infty}\signal{x}_n=\signal{x}$ and $\lim_{n\to\infty}h_n=\infty$.
		\item[(v)] If $f$ is also concave if restricted to $\real_{+}^N$, then $f_\infty$ is concave if restricted to $\real_{+}^N$.

\end{proposition}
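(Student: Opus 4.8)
The plan is to reduce the entire statement to a single auxiliary function on the nonnegative orthant. Define $g:\real_+^N\to\real_+$ by $g(\signal{x}):=\lim_{h\to\infty}f(h\signal{x})/h$; this limit exists and is finite by Lemma~\ref{lemma.basic_properties}(iii), and since $h\mapsto f(h\signal{x})/h$ is nonincreasing by Lemma~\ref{lemma.basic_properties}(ii) we also have $g(\signal{x})=\inf_{h>0}f(h\signal{x})/h$. The crux is part~(i), i.e.\ $f_\infty(\signal{x})=g(\signal{x})$ for every $\signal{x}\in\real_+^N$; once this is established, parts (ii)--(v) follow with little extra work. The inequality ``$\le$'' comes from inserting the constant sequence $\signal{x}_n\equiv\signal{x}$ into the definition of $f_\infty$ and recalling that $f(h\signal{x})/h$ converges. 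For ``$\ge$'', let $h_n\to\infty$ and $\signal{x}_n\to\signal{x}$ be arbitrary: if $\signal{x}_n\in\real_+^N$ for only finitely many $n$ then $f(h_n\signal{x}_n)=\infty$ for all large $n$ and $\liminf_n f(h_n\signal{x}_n)/h_n=\infty$; otherwise, restricting to the indices with $\signal{x}_n\in\real_+^N$ and fixing $\epsilon\in(0,1)$, convergence forces $\signal{x}_n\ge(1-\epsilon)\signal{x}$ for all large such $n$ (coordinates of $\signal{x}$ equal to zero cause no difficulty, since there $\signal{x}_n\ge\signal{0}$), so monotonicity~(P4) gives $f(h_n\signal{x}_n)/h_n\ge(1-\epsilon)\,f((1-\epsilon)h_n\signal{x})/((1-\epsilon)h_n)\to(1-\epsilon)\,g(\signal{x})$, and $\epsilon\to0^+$ yields $f_\infty(\signal{x})\ge g(\signal{x})$. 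The same case distinction also shows $f_\infty(\signal{x})=\infty$ whenever $\signal{x}\notin\real_+^N$, so $f_\infty$ indeed maps into $\real_+\cup\{\infty\}$.

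With (i) in hand, I would dispatch (ii), (iii) and (v) as follows. Lower semicontinuity and positive homogeneity of $f_\infty$ on all of $\real^N$ are general properties of asymptotic functions and can be quoted from \cite[Theorem~2.5.1]{aus03} (positive homogeneity may alternatively be verified directly from the definition by a rescaling of the two sequences in the infimum). For (iii), $f_\infty=g$ on $\real_+^N$ by (i), and $g$ inherits monotonicity from (P4) applied to $f(h\,\cdot)$. For the continuity assertion in (ii): if $f$ is continuous on $\real_+^N$, then each map $\signal{x}\mapsto f(h\signal{x})/h$ is continuous on $\real_+^N$, so $g=\inf_{h>0}f(h\,\cdot)/h$ is an infimum of continuous functions, hence upper semicontinuous on $\real_+^N$; lower semicontinuity on $\real_+^N$ is immediate from the ``$(1-\epsilon)\signal{x}$'' estimate together with (iii) and positive homogeneity on $\real_+^N$, so $g$, and therefore $f_\infty|_{\real_+^N}$, is continuous. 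For (v), if $f$ is concave on $\real_+^N$ then each $\signal{x}\mapsto f(h\signal{x})/h$ is concave on $\real_+^N$ (precomposition with the linear map $\signal{x}\mapsto h\signal{x}$), and an infimum of concave functions is concave, so $f_\infty|_{\real_+^N}=g$ is concave.

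Part~(iv) is then a sandwich around $g(\signal{x})$ that uses (i) at both ends. Fix $\signal{x}\in\real_+^N$, let $(\signal{x}_n)\subset\real_+^N$ with $\signal{x}_n\to\signal{x}$ and $(h_n)\subset\real_{++}$ with $h_n\to\infty$. The bound $\liminf_n f(h_n\signal{x}_n)/h_n\ge g(\signal{x})$ is exactly the estimate used in (i). For the matching upper bound, fix $\delta>0$; convergence gives $\signal{x}_n\le\signal{x}+\delta\signal{e}$ for all large $n$ (with $\signal{e}$ the all-ones vector), so (P4) and (i) give $\limsup_n f(h_n\signal{x}_n)/h_n\le g(\signal{x}+\delta\signal{e})$, and letting $\delta\to0^+$ and invoking the continuity of $g$ from (ii) yields $\limsup_n f(h_n\signal{x}_n)/h_n\le g(\signal{x})$; combined with (i), the limit exists and equals $f_\infty(\signal{x})$. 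I expect the only real obstacle to be the bookkeeping in part~(i): handling the coordinates of $\signal{x}$ that vanish (so that the one-sided multiplicative comparison stays inside $\real_+^N$) and the sequences $\signal{x}_n$ that stray out of $\real_+^N$ (where $f\equiv\infty$ conveniently forces the liminf up) — there is no deeper idea beyond the scaling estimate.
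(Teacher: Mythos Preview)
Your proposal is correct and follows essentially the same route as the paper: establish (i) via the multiplicative comparison $\signal{x}_n\ge(1-\epsilon)\signal{x}$ (the paper writes this as $\alpha\signal{x}\le\signal{x}_n$ and invokes Lemma~\ref{lemma.basic_properties}(i) rather than rescaling $h_n$, but the two manipulations are equivalent), then read off (ii)--(v) from the representation $f_\infty|_{\real_+^N}=\inf_{h>0}f(h\,\cdot)/h$ together with the general lower-semicontinuity/homogeneity result in \cite{aus03}. Your handling of sequences $\signal{x}_n\notin\real_+^N$ in part~(i) is in fact more careful than the paper's, which tacitly assumes an optimizing sequence lying in $\real_+^N$; otherwise the arguments coincide.
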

\vspace{-.5\baselineskip}
\begin{proof}
	(i) The inequality $(\forall\signal{x}\in\real^N_+)~f_\infty(\signal{x}) \le \lim_{h\to\infty} f(h \signal{x})/{h}\in\real_+$ is immediate from Lemma~\ref{lemma.basic_properties}(iii) and the definition of asymptotic functions, so it is sufficient to prove that $(\forall\signal{x}\in\real^N_+)~f_\infty(\signal{x}) \ge \lim_{h\to\infty} f(h \signal{x})/{h}$ to obtain the desired result.
	
	Let $\alpha\in~]0,1[$ be arbitrary. From definition of asymptotic functions, we know that, for $\signal{x}\in\real_{+}^N$ arbitrary, there exist a sequence $(\signal{x}_n)_{n\in\Natural}\subset \real_{+}^N$ and an  increasing sequence $(h_n)_{n\in\Natural}\subset \real_{++}$ such that $f_\infty(\signal{x})=\lim_{n\to\infty}{f(h_n\signal{x}_n)}/{h_n}$, $\lim_{n\to\infty} \signal{x}_n=\signal{x}$, and $\lim_{n\to\infty} h_n=\infty$. Therefore, as an implication of $\lim_{n\to\infty} \signal{x}_n=\signal{x}\in\real_+^N$, we have $\alpha\signal{x}\le\signal{x}_n$ for every $n\ge L$ with $L\in\Natural$ sufficiently large. Lemma~\ref{lemma.basic_properties}(i) and the monotonicity of WSI functions yield
	$
	(\forall n\ge L)
	~~\alpha{f(h_n\signal{x})}/{h_n}\le { f(\alpha h_n\signal{x})}/{h_n} \le { f(h_n\signal{x}_n)}/{h_n}$.	Taking the limit as $n\to\infty$ and considering Lemma~\ref{lemma.basic_properties}(iii), we verify that 
	\begin{align}
	\label{eq.limit_afunc}
	\alpha \lim_{n\to\infty} \dfrac{f(h_n\signal{x})}{h_n} = \alpha {\lim_{h\to\infty}}\dfrac{f(h\signal{x})}{h} \le f_\infty(\signal{x}).
	\end{align}
	By letting $\alpha\to 1^-$, \refeq{eq.limit_afunc} shows that $f_\infty(\signal{x}) \ge  \lim_{h\to\infty} {f(h\signal{x})}/{h} \ge 0$, which completes the proof.

	(ii) The first part follows directly from \cite[Proposition~2.5.1(a)]{aus03}, which states that asymptotic functions associated with proper functions are lower semicontinuous and positively homogeneous. To prove the second part, assume that $f$ is continuous if restricted to $\real_{+}^N$, and let $(h_n)_{n\in\Natural}\subset\real_{++}$ be an arbitrary increasing sequence such that $\lim_{n\to\infty}h_n=\infty$. For each $n\in\Natural$, define $g_n:\real_+^N\to\real_+:\signal{x}\mapsto {f(h_n \signal{x})}/{h_n}$, and note that continuity of $f$ if restricted to $\real_+^N$  implies that $g_n$ is also continuous.  The property in (i) and Lemma~\ref{lemma.basic_properties}(ii)-(iii) imply that $(\forall \signal{x}\in\real_{+}^N) f_\infty(\signal{x})=\inf_{n\in\Natural} g_n(\signal{x})$. This shows that, if restricted to $\real_{+}^N$, $f_\infty$ is the pointwise infimum of continuous functions, so  $f_\infty$ is upper semicontinuous (see~\cite[Lemma~1.26]{baus11}), in addition to being lower semicontinuous as already shown. Therefore, $f_\infty$ restricted to $\real_{+}^N$ is continuous.

	(iii) Let $({h}_n)_{n\in\Natural}\subset\real_{++}$ be an arbitrary monotone sequence such that $\lim_{n\to\infty}h_n=\infty$. If $\signal{x}_1\ge\signal{x}_2\ge\signal{0}$, the monotonicity property of WSI functions shows that 
	$
	(\forall n\in\Natural)~
	{f(h_n\signal{x}_2)}/{h_n} \le  {f(h_n\signal{x}_1)}/{h_n}.$ Now let $n\to\infty$ and use the property in (i) to obtain the desired result
	$
	f_\infty(\signal{x}_2) \le  f_\infty(\signal{x}_1).
	$
	
	(iv) Let the arbitrary sequences $(\signal{x}_n)_{n\in\Natural}\subset\real_{+}^N$ and $(h_n)_{n\in\Natural}\subset\real_{++}$ satisfy $\lim_{n\to\infty}\signal{x}_n=\signal{x}$ and $\lim_{n\to\infty}h_n=\infty$. Denote by $\signal{1}\in\real^N$ the vector of ones. As a consequence of $\lim_{n\to\infty}\signal{x}_n=\signal{x}$, we know that 
	\begin{align}
	\label{eq.ineqxn}
	(\forall \epsilon>0)(\exists L\in\Natural)(\forall n\ge L)~ \signal{x}_n\le\signal{x}+\epsilon\signal{1}.
	\end{align}
	As a result,  for every $\epsilon>0$, we have
	\begin{multline}
	\label{eq.ineq_2ef}
	f_\infty(\signal{x}) \overset{(a)}{\le} \liminf_{n\to\infty} \dfrac{1}{{h}_n} f({h}_n\signal{x}_n) {\le} \limsup_{n\to\infty} \dfrac{1}{{h}_n} f({h}_n\signal{x}_n) \\ \overset{(b)}{\le} \limsup_{n\to\infty} \dfrac{1}{{h}_n} f({h}_n(\signal{x}+\epsilon\signal{1})) \overset{(c)}{=} f_\infty(\signal{x}+\epsilon\signal{1}),
	\end{multline}
	where (a) follows from the definition of asymptotic functions, (b) follows from \refeq{eq.ineqxn} and  monotonicity of $f$, and (c) is a consequence of the property we proved in (i).  By assumption, $f$ restricted to $\real_{+}^N$ is continuous, so $f_\infty$ restricted to $\real_{+}^N$ is also continuous  as shown in (ii). Therefore, by \refeq{eq.ineq_2ef} and continuity of $f_\infty$ if restricted to $\real_+^N$, we have
	\begin{multline*}
	f_\infty(\signal{x}) {\le} \liminf_{n\to\infty} \dfrac{1}{{h}_n} f({h}_n\signal{x}_n) {\le} \limsup_{n\to\infty} \dfrac{1}{{h}_n} f({h}_n\signal{x}_n) \\ {\le} \lim_{\epsilon\to 0^+} f_\infty(\signal{x}+\epsilon\signal{1})=f_\infty(\signal{x}),
	\end{multline*}	
	which implies that $\lim_{n\to\infty} f({h}_n\signal{x}_n)/{{h}_n}=f_\infty(\signal{x})$.
	
	(v) The proof follows directly from the definition of concave functions and (i).\end{proof}

The results in the previous proposition enable us to introduce a new mapping (or a transform) that plays a key role in this study:

\begin{definition}
	\label{def.amap} (Asymptotic mappings)
	 Let $T: \real^N_+ \to\real^N_{+}:\signal{x}\mapsto [t^{(1)}(\signal{x}),\cdots,t^{(N)}(\signal{x})]$ be a mapping in $\mathcal{F}_\mathrm{WSI}$, where $t^{(i)}:\real^N_+\to\real_{+}$ for each $i\in\{1,\ldots,N\}$. The asymptotic mapping $T_\infty:\real_+^N\to\real_+^N$ associated with $T\in\mathcal{F}_\mathrm{WSI}$ is defined to be the mapping given by  
	$
	T_\infty:\real^N_+\to\real^N_+:\signal{x}\mapsto [\tilde{t}^{(1)}_\infty(\signal{x}),\cdots,\tilde{t}^{(N)}_\infty(\signal{x})],
	$
	where, for each $i\in\{1,\cdots,N\}$, $\tilde{t}^{(i)}_\infty$ is the asymptotic function associated with \linebreak $\tilde{t}^{(i)}:\real^N\to\real_+\cup\{\infty\}:\signal{x} \mapsto \begin{cases} t^{(i)}(\signal{x}),&\text{if }\signal{x}\in\real^N_+ \\
	\infty&\text{otherwise}
	\end{cases}$. 
\end{definition}

As shown in Remark~\ref{remark.example_asym} and Example~\ref{example.lhopital} below, asymptotic mappings can be conveniently obtained by using Proposition~\ref{prop.af_properties}(i), possibly combined with the standard L'H\^opital's rule:

\begin{remark}
	\label{remark.example_asym}
	 Let $T:\real_+^N\to\real_+^N$ be an arbitrary mapping in $\mathcal{F}_\mathrm{WSI}$. Then, by Proposition~\ref{prop.af_properties}(i), we have 
	\begin{align}
	\label{eq.simple_computation}
	(\forall \signal{x}\in\real_+^N)~ T_\infty(\signal{x})=\lim_{h\to\infty} ((1/h)~T(h\signal{x})).
	\end{align}
\end{remark}

\begin{example} 
	\label{example.lhopital}
	Let $\alpha\in\real_+$ be fixed and define $f_\alpha:\real^2_+\to\real_{++}:(x_1,x_2)\mapsto \ln (1+x_2)+\alpha x_1+0.1$ and $g:\real_+^2\to\real_{++}:(x_1,x_2)\mapsto \sqrt{x_1+x_2+1}$.  Consider the mapping given by $(\forall \signal{x}\in\real_+^2)~T_\alpha(\signal{x})= [f_\alpha(x_1,x_2),~ g(x_1, x_2)]^t$, where $[x_1,x_2]^t:=\signal{x}$. We can verify that $T_\alpha$ is a positive concave mapping, so it is a member of $\mathcal{F}_\mathrm{SI}$ (see Fig.~\ref{fig.classes}).  Using Remark~\ref{remark.example_asym} and L'H\^opital's rule, we deduce $(\forall\signal{x}\in\real^2_+)~(T_\alpha)_\infty(\signal{x})=\lim_{h\to\infty}~[f_\alpha(h x_1, h x_2)/h, g(hx_1, h x_2)/h]^t = \lim_{h\to\infty}\left[\dfrac{\mathrm{d}}{\mathrm{d}h}f_\alpha(hx_1,hx_2),~\dfrac{\mathrm{d}}{\mathrm{d}h} g(hx_1, h x_2)\right]^t = \signal{Ax}$, where
$\signal{A}=\left[\begin{matrix}\alpha & 0 \\  0 & 0\end{matrix} \right]$.
\end{example}

 The study in \cite{boche2010} (see also \cite[Theorem~2.14]{martin11}) has shown that standard interference mappings and general interference mappings are deeply connected. The next corollary, an immediate application of Proposition~\ref{prop.af_properties}(i)-(iii), establishes an alternative view of this connection, this time using the concept of asymptotic mappings in Definition~\ref{def.amap}. In particular, Corollary~\ref{cor.connection}(iii) shows that general interference mappings can be seen as asymptotic mappings associated with standard interference mappings.
 
\begin{Cor}
	\label{cor.connection}

	(i) Mappings in $\mathcal{F}_\mathrm{GI}$ and their asymptotic mappings are the same; i.e., 
	\begin{align*}
	(\forall T\in\mathcal{F}_\mathrm{GI})(\forall \signal{x}\in\real_{+}^N)~T_\infty(\signal{x}) = T(\signal{x}).
	\end{align*}
	(ii) Denote by $\mathcal{F}_\infty:=\{T_\infty~|~T\in\mathcal{F}_\mathrm{WSI}\}$ the set of all asymptotic mappings associated with WSI mappings. Then $\mathcal{F}_\infty=\mathcal{F}_\mathrm{GI}$.	\\
	(iii) Every mapping $G:\real^N_+ \to \real^N_+$ in $\mathcal{F}_\mathrm{GI}$ is the asymptotic mapping $T_\infty:\real^N_+ \to \real^N_{+}$ associated with a (nonunique) mapping $T:\real^N_+ \to \real^N_{++}$ in $\mathcal{F}_\mathrm{SI}$; i.e., 
	\begin{align*} (\forall G\in\mathcal{F}_{\mathrm{GI}}) (\exists T\in\mathcal{F}_{\mathrm{SI}}) (\forall \signal{x}\in\real_+^N)~ G(\signal{x})=T_\infty(\signal{x}).
	\end{align*}
	In particular, let $G\in\mathcal{F}_{\mathrm{GI}}$ and $\signal{u}\in\real_{++}^N$ be arbitrary. Then $G$ is the asymptotic mapping associated with $T:\real_+^N\to\real_{++}^N:\signal{x}\mapsto G(\signal{x})+\signal{u}$. The mapping $T$ constructed in this way is a member of $\mathcal{F}_\mathrm{SI}$, and $T$ is continuous if $G$ is continuous.

\end{Cor}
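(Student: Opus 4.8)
The plan is to handle the three parts in order, with part (i) doing most of the heavy lifting and parts (ii)–(iii) following as short consequences.

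\emph{Part (i).} Let $T\in\mathcal{F}_\mathrm{GI}$ be arbitrary, with coordinate functions $t^{(i)}$. Since every $t^{(i)}$ is a GI function (hence homogeneous by (P3) and monotone by (P4)), weak scalability (P2) holds trivially with equality, so $T\in\mathcal{F}_\mathrm{WSI}$ and the asymptotic mapping $T_\infty$ is well-defined via Definition~\ref{def.amap}. Fix $\signal{x}\in\real_+^N$. By Remark~\ref{remark.example_asym} (equivalently Proposition~\ref{prop.af_properties}(i)), the $i$th coordinate of $T_\infty(\signal{x})$ equals $\lim_{h\to\infty} t^{(i)}(h\signal{x})/h$. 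But homogeneity (P3) gives $t^{(i)}(h\signal{x}) = h\,t^{(i)}(\signal{x})$ for every $h>0$, so $t^{(i)}(h\signal{x})/h = t^{(i)}(\signal{x})$ is constant in $h$, and the limit is simply $t^{(i)}(\signal{x})$. Doing this coordinate-wise yields $T_\infty(\signal{x}) = T(\signal{x})$, as claimed.

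\emph{Part (ii).} The inclusion $\mathcal{F}_\mathrm{GI}\subseteq\mathcal{F}_\infty$ is immediate from part (i): each $T\in\mathcal{F}_\mathrm{GI}\subseteq\mathcal{F}_\mathrm{WSI}$ satisfies $T = T_\infty\in\mathcal{F}_\infty$. For the reverse inclusion $\mathcal{F}_\infty\subseteq\mathcal{F}_\mathrm{GI}$, take any $T\in\mathcal{F}_\mathrm{WSI}$ and show $T_\infty\in\mathcal{F}_\mathrm{GI}$, i.e.\ that each coordinate $\tilde t^{(i)}_\infty$ restricted to $\real_+^N$ is a GI function. Monotonicity (P4) of $\tilde t^{(i)}_\infty$ on $\real_+^N$ is exactly Proposition~\ref{prop.af_properties}(iii). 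Homogeneity (P3) on $\real_+^N$ follows from the positive homogeneity of asymptotic functions (Proposition~\ref{prop.af_properties}(ii)) together with the observation that $\tilde t^{(i)}_\infty(\signal{0})=0$: indeed, by Proposition~\ref{prop.af_properties}(i), $\tilde t^{(i)}_\infty(\signal{0}) = \lim_{h\to\infty} t^{(i)}(\signal{0})/h$, and since $t^{(i)}(\signal{0})\in\real_+$ (finiteness on $\real_+^N$), this limit is $0$; thus $\tilde t^{(i)}_\infty(\alpha\signal{x})=\alpha\,\tilde t^{(i)}_\infty(\signal{x})$ holds for $\alpha>0$ by positive homogeneity and for $\alpha=0$ by the zero value. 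Nonnegativity and finiteness of $\tilde t^{(i)}_\infty$ on $\real_+^N$ are also part of Proposition~\ref{prop.af_properties}(i). Hence $T_\infty\in\mathcal{F}_\mathrm{GI}$, giving equality.

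\emph{Part (iii).} Let $G\in\mathcal{F}_\mathrm{GI}$ and $\signal{u}\in\real_{++}^N$ be arbitrary, and set $T:\real_+^N\to\real_+^N:\signal{x}\mapsto G(\signal{x})+\signal{u}$. I first check $T\in\mathcal{F}_\mathrm{SI}$, i.e.\ each coordinate $t^{(i)}(\signal{x}) = g^{(i)}(\signal{x}) + u_i$ (with $g^{(i)}$ the $i$th coordinate of $G$ and $u_i>0$) satisfies scalability (P1) and monotonicity (P4) on $\real_+^N$. Monotonicity of $t^{(i)}$ is inherited directly from monotonicity of $g^{(i)}$. For scalability, let $\alpha>1$; using homogeneity of $g^{(i)}$, $\alpha\,t^{(i)}(\signal{x}) = \alpha\,g^{(i)}(\signal{x}) + \alpha u_i = g^{(i)}(\alpha\signal{x}) + \alpha u_i > g^{(i)}(\alpha\signal{x}) + u_i = t^{(i)}(\alpha\signal{x})$, where the strict inequality uses $\alpha u_i > u_i$ since $u_i>0$. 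Hence $T\in\mathcal{F}_\mathrm{SI}$ (in fact $T:\real_+^N\to\real_{++}^N$, as $u_i>0$). Next, by Remark~\ref{remark.example_asym}, the $i$th coordinate of $T_\infty(\signal{x})$ is $\lim_{h\to\infty}(g^{(i)}(h\signal{x}) + u_i)/h = \lim_{h\to\infty}(h\,g^{(i)}(\signal{x}) + u_i)/h = g^{(i)}(\signal{x})$, using homogeneity of $g^{(i)}$ and $u_i/h\to 0$; coordinate-wise this gives $T_\infty = G$. Finally, $T = G + \signal{u}$ is continuous whenever $G$ is, since translation by a constant vector preserves continuity. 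Nonuniqueness follows because $\signal{u}\in\real_{++}^N$ was arbitrary, and distinct choices yield distinct $T$.

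\emph{Main obstacle.} None of the steps is deep; the only point requiring a little care is the treatment of the value at $\signal{0}$ and the boundary between positive homogeneity (which asymptotic functions satisfy by general theory only for $\alpha>0$) and full homogeneity (P3) on $\real_+^N$ required of GI functions in Definition~\ref{def.inter_func}. The resolution — that finiteness of $t^{(i)}$ on $\real_+^N$ forces $\tilde t^{(i)}_\infty(\signal{0})=0$ via Proposition~\ref{prop.af_properties}(i) — closes this gap cleanly, and everything else is a direct substitution using homogeneity of GI functions and the elementary limit $u_i/h\to 0$.
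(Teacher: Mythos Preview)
Your proof is correct and follows exactly the approach the paper intends: the paper merely states that the corollary is ``an immediate application of Proposition~\ref{prop.af_properties}(i)--(iii)'' without spelling out details, and your argument is precisely the natural unpacking of that claim. Your handling of the $\alpha=0$ case for homogeneity (P3) via Proposition~\ref{prop.af_properties}(i) is the only step requiring any thought, and you treat it correctly.
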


\subsection{Eigenvalues, eigenvectors, and spectral radius of possibly nonlinear mappings}

As shown in \cite{nuzman07,sun2014,renatomaxmin,tan2014wireless,renato2016maxmin, cai2012optimal,zheng2016,yang1998optimal}, among other studies, some optimization and feasibility problems in wireless networks reduce to computing eigenvectors and eigenvalues of possibly nonlinear mappings $T:\real_+^N\to\real_+^N$ such as those in Definition~\ref{definition.mappings}, and we say that $\signal{x}\in\real^N_+\backslash\{\signal{0}\}$ is an eigenvector associated with an eigenvalue $\lambda\in\real_{+}$ if $T(\signal{x})=\lambda \signal{x}$.  Therefore, we should expect the behavior of many wireless transmission schemes to be deeply connected with spectral properties of the mappings in Definition~\ref{definition.mappings}. In this subsection, we lay the foundations to establish novel connections of this type. We start by recalling the following known result:

\begin{fact}
	\label{fact.nuzman} 
	(\cite{nuzman07,krause01}) Let $\|\cdot\|$ be a monotone norm. Assume that $T:\real_{+}^N\to\real_{+}^N$ satisfies at least one of the properties below:
		\item[(a)] $T\in\mathcal{F}_\mathrm{SI}$;
		\item[(b)] $T\in\mathcal{F}_\mathrm{NC}$ and, for every $\signal{x}\in\real_{+}^N\backslash\{\signal{0}\}$, there exists $m\in\Natural$ such that  $T^n(\signal{x})>\signal{0}$ for all $n\ge m$ (these mappings are said to be \emph{primitive}).

	Then each of the following holds:
		\item[(i)] There exists a unique solution $(\signal{x}^\star, \lambda^\star)\in\real_{++}^N\times\real_{++}$ to the conditional eigenvalue problem
		\begin{problem}
			\label{problem.cond_eig}
			Find $(\signal{x}, \lambda)\in\real_{+}^N\times\real_{+}$ such that $T(\signal{x})=\lambda\signal{x}$ and $\|\signal{x}\|=1$.
		\end{problem}

		\item[(ii)] The sequence $(\signal{x}_n)_{n\in\Natural}\subset\real_{+}^N$ generated by 
		\begin{align}
		\label{eq.krause_iter}
		\signal{x}_{n+1} = T^\prime({\signal{x}_n}):=\dfrac{1}{\|T(\signal{x}_n)\|}T(\signal{x}_n),\quad\signal{x}_1\in\real_{++}^N,
		\end{align}
		converges to the uniquely existing vector $\signal{x}^\star\in\mathrm{Fix}(T^\prime):=\{\signal{x}\in\real_{+}^N~|~\signal{x}=T^\prime(\signal{x})\}$, which is also the vector $\signal{x}^\star$ of the tuple $(\signal{x}^\star,\lambda^\star)$ that solves Problem~\ref{problem.cond_eig}. Furthermore, the sequence $(\lambda_n:=\|T(\signal{x}_n)\|)_{n\in\Natural}\subset\real_{++}$ converges to $\lambda^\star$.
\end{fact}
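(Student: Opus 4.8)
This is a nonlinear Perron--Frobenius statement, and I would split it into the existence/uniqueness claim (i) and the convergence claim (ii): part (i) by parametrising the Yates-type fixed-point results of Fact~\ref{fact.p_si} over the candidate eigenvalue and using the norm constraint to pin it down, part (ii) by a monotone sandwiching argument for the normalised iteration. Throughout write $\Delta:=\{\signal{x}\in\real_+^N~|~\|\signal{x}\|=1\}$, $\Delta^\circ:=\Delta\cap\real_{++}^N$, and $T^\prime(\signal{x}):=T(\signal{x})/\|T(\signal{x})\|$; by Fact~\ref{fact.p_si}(i) (case (a)) this $T^\prime$ is a continuous self-map of $\Delta$ with image in $\Delta^\circ$, while in case (b) primitivity together with compactness of $\Delta$ yields a uniform iterate of $T$ that maps $\Delta$ into $\real_{++}^N$, which lets one pass to that iterate and run the same arguments, with primitivity taking over the role played by strict scalability in case (a). I describe case (a) and only flag where case (b) differs.

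\emph{Part (i).} Set $\rho:=\inf\{\lambda>0~|~(\exists\signal{x}\in\real_{++}^N)\ T(\signal{x})\le\lambda\signal{x}\}$, which is finite and nonnegative (take $\signal{x}=\signal{1}$, $\lambda$ large). For fixed $\lambda>0$, $\lambda^{-1}T\in\mathcal{F}_\mathrm{SI}$, so by Fact~\ref{fact.p_si}(i),(ii),(v) it has a fixed point iff the defining set contains $\lambda$, and then that fixed point is a unique $\signal{x}_\lambda\in\real_{++}^N$ with $T(\signal{x}_\lambda)=\lambda\signal{x}_\lambda$. The set is an up-ray, and it \emph{excludes} $\rho$: if $T(\signal{x}_0)\le\rho\signal{x}_0$ then $\rho^{-1}T$ has a fixed point $\signal{y}$ (Fact~\ref{fact.p_si}(v)) with $T(\signal{y})=\rho\signal{y}$, and then for $\alpha>1$ property (P1) gives $T(\alpha\signal{y})<\alpha T(\signal{y})=\rho(\alpha\signal{y})$ coordinate-wise, hence $T(\alpha\signal{y})\le(\rho-\epsilon)(\alpha\signal{y})$ for some $\epsilon>0$, contradicting $\rho=\inf$; so $\signal{x}_\lambda$ exists precisely for $\lambda\in(\rho,\infty)$. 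Using monotonicity of $T$, Fact~\ref{fact.p_si}(iii)--(iv) and continuity of $T$ one checks that $\lambda\mapsto\signal{x}_\lambda$ is continuous and strictly decreasing coordinate-wise on $(\rho,\infty)$, that $\signal{x}_\lambda\to\signal{0}$ as $\lambda\to\infty$ (otherwise $\lambda\signal{x}_\lambda=T(\signal{x}_\lambda)$ is unbounded against a bounded right-hand side), and that $\signal{x}_\lambda$ is unbounded as $\lambda\downarrow\rho$ (otherwise a finite limit would be an eigenvector of $\rho$, which was excluded). Since $\|\cdot\|$ is monotone, $\lambda\mapsto\|\signal{x}_\lambda\|$ is then a continuous, strictly decreasing bijection of $(\rho,\infty)$ onto $(0,\infty)$, so there is exactly one $\lambda^\star$ with $\|\signal{x}_{\lambda^\star}\|=1$; with $\signal{x}^\star:=\signal{x}_{\lambda^\star}$ this is the unique solution of Problem~\ref{problem.cond_eig} (any solution has $\lambda>\rho$, equals $(\signal{x}_\lambda,\lambda)$, and $\|\signal{x}_\lambda\|=1$ forces $\lambda=\lambda^\star$), and $(\signal{x}^\star,\lambda^\star)\in\real_{++}^N\times\real_{++}$. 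In case (b) the exclusion of $\rho$ and the strict monotonicity of $\lambda\mapsto\signal{x}_\lambda$ come from primitivity rather than (P1).

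\emph{Part (ii).} From $T(\signal{x}^\star)=\lambda^\star\signal{x}^\star$ and $\|\signal{x}^\star\|=1$ we get $\signal{x}^\star=T^\prime(\signal{x}^\star)$, and $\signal{x}_n\in\Delta^\circ$ for $n\ge 2$. Bracket $a_n\signal{x}^\star\le\signal{x}_n\le b_n\signal{x}^\star$ with $a_n:=\min_j(x_{n,j}/x^\star_j)\le 1\le b_n:=\max_j(x_{n,j}/x^\star_j)$ (the inequalities $a_n\le 1\le b_n$ follow from $\|\signal{x}_n\|=\|\signal{x}^\star\|=1$ and monotonicity of the norm). Applying $T$ (monotone), then weak scalability for $b_n\ge 1$ and Lemma~\ref{lemma.basic_properties}(i) for $a_n\le 1$, gives $a_n\lambda^\star\signal{x}^\star\le T(\signal{x}_n)\le b_n\lambda^\star\signal{x}^\star$, hence $\|T(\signal{x}_n)\|\in[a_n\lambda^\star,b_n\lambda^\star]$; dividing by $\|T(\signal{x}_n)\|$ yields a new bracketing with $b_{n+1}/a_{n+1}\le b_n/a_n$, the inequality being strict whenever $\signal{x}_n\ne\signal{x}^\star$ (then $a_n<1$ or $b_n>1$, making the relevant scalability estimate strict by Lemma~\ref{lemma.basic_properties}(i) / (P1); in case (b), strictness comes from primitivity after a bounded number of steps). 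Thus $b_n/a_n$ decreases to some $L\ge 1$. The orbit $(\signal{x}_n)_{n\ge 2}$ lies in $\{\signal{x}\in\Delta~|~\max_j(x_j/x^\star_j)\le(b_2/a_2)\min_j(x_j/x^\star_j)\}$, a compact subset of $\Delta^\circ$, so a subsequence $\signal{x}_{n_k}$ converges to some $\signal{x}_\infty\in\Delta^\circ$; by continuity of $T^\prime$ and of $\signal{x}\mapsto\min_j(x_j/x^\star_j),\max_j(x_j/x^\star_j)$, both $\signal{x}_\infty$ and $T^\prime(\signal{x}_\infty)$ have bracketing-ratio $L$, which by the strict-decrease step is impossible unless $\signal{x}_\infty=\signal{x}^\star$, i.e. $L=1$. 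From $L=1$ and $a_n\le 1\le b_n$ it follows that $a_n,b_n\to 1$, hence $\signal{x}_n\to\signal{x}^\star$, and continuity of $\|T(\cdot)\|$ gives $\lambda_n=\|T(\signal{x}_n)\|\to\|T(\signal{x}^\star)\|=\lambda^\star$.

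\emph{Main obstacle.} The crux is that normalisation destroys both the order structure ($\signal{u}\le\signal{v}$ does not imply $\signal{u}/\|\signal{u}\|\le\signal{v}/\|\signal{v}\|$) and every homogeneity-based identity (so $T^\prime$ is not a rescaling of $T^n$), which is why neither a naive monotone-iteration argument nor the textbook Banach/Picard route applies directly -- in fact one can exhibit mappings in $\mathcal{F}_\mathrm{SI}$ that are \emph{not} nonexpansive in Hilbert's projective metric, so any metric-contraction proof must be set up with care. The plan above therefore controls only the scalar width $b_n/a_n$ of a bracketing of $\signal{x}_n$ around the already-constructed eigenvector and closes the argument by compactness rather than by a one-line Lipschitz estimate. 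The second delicate point, underlying part (i), is showing that the infimum $\rho$ is not attained -- equivalently, that $\|\signal{x}_\lambda\|\to\infty$ as $\lambda\downarrow\rho$ -- since it is exactly this that lets the constraint $\|\signal{x}\|=1$ select a single eigenvalue; strict scalability (case (a)) and primitivity (case (b)) are the hypotheses that deliver it, and merging the two cases (by passing to an iterate in case (b)) requires a little bookkeeping.
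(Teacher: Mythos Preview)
The paper does not prove this statement: Fact~\ref{fact.nuzman} is quoted from \cite{nuzman07,krause01} without proof, so there is no ``paper's own proof'' to compare against. Your proposal is therefore a self-contained argument for a result the authors import from the nonlinear Perron--Frobenius literature.

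That said, your sketch is essentially the classical route taken in those references. For case~(a), part~(i) is correct: parametrising by $\lambda$, using Fact~\ref{fact.p_si}(v) to characterise the feasibility ray, excluding the endpoint via strict scalability, and then invoking monotonicity/continuity of $\lambda\mapsto\|\signal{x}_\lambda\|$ to hit the norm constraint exactly once is the standard argument. For part~(ii), your Birkhoff-type bracketing $a_n\signal{x}^\star\le\signal{x}_n\le b_n\signal{x}^\star$ is the right invariant; the key observation that $b_{n+1}/a_{n+1}=\dfrac{\max_j T_j(\signal{x}_n)/x^\star_j}{\min_j T_j(\signal{x}_n)/x^\star_j}\le b_n/a_n$ follows directly from $T_j(\signal{x}_n)/x^\star_j\in[a_n\lambda^\star,b_n\lambda^\star]$, and strict scalability gives the strict drop needed for the compactness endgame. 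This is precisely Krause's concave Perron theorem adapted to the SI setting.

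Two minor caveats worth tightening if you write this out in full. First, for case~(b) you lean on Fact~\ref{fact.p_si} in part~(i), but that fact is stated only for $\mathcal{F}_{\mathrm{SI}}$; NC mappings that vanish somewhere need a separate treatment (your remark about ``passing to an iterate'' is the right idea, but the bookkeeping---in particular, why a uniform primitivity index exists on $\Delta$ and why the fixed-point theory for $\lambda^{-1}T$ still goes through---is where the work is). Second, the claim that the orbit stays in a compact subset of $\Delta^\circ$ deserves one explicit line: it follows from $b_n/a_n\le b_2/a_2$ together with $\|\signal{x}_n\|=1$, but only after you have established the monotonicity of $b_n/a_n$, so the order of the argument matters.
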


To bound conditional eigenvalues of mappings in $\mathcal{F}_\mathrm{SI}$, we can use the next lemma.

\begin{lemma}
	\label{lemma.lambda_ineq}
	Let $T:\real_+^N\to\real_{++}^N$ be a mapping in $\mathcal{F}_\mathrm{SI}$ and $\|\cdot\|$ an arbitrary monotone norm. If $(\signal{x}^\prime,\lambda^\prime)\in\real_+^N\times\real_+$ satisfies $T(\signal{x}^\prime)\ge \lambda^\prime\signal{x}^\prime$ and $\|\signal{x}^\prime\|=1$, then the scalar $\lambda^\star$ of the solution $(\signal{x}^\star, \lambda^\star)\in\real_{++}^N\times\real_{++}$  to Problem~\ref{problem.cond_eig} satisfies $\lambda^\star \ge \lambda^\prime$.\footnote{See \cite[Lemma~3.2]{nussbaum1986convexity} for a related result concerning mappings in  $\mathcal{F}_\mathrm{GI}$.}
\end{lemma}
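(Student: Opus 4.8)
The plan is to reduce the statement to the \emph{uniqueness} assertion in Fact~\ref{fact.nuzman}(i) by a simple rescaling trick. If $\lambda^\prime=0$ there is nothing to prove, since $\lambda^\star\in\real_{++}$, so assume $\lambda^\prime>0$ and introduce the rescaled mapping $S:\real_+^N\to\real_{++}^N:\signal{x}\mapsto(1/\lambda^\prime)\,T(\signal{x})$. Multiplying the scalability and monotonicity inequalities for $T$ by the positive constant $1/\lambda^\prime$ shows that $S\in\mathcal{F}_\mathrm{SI}$, and the hypothesis $T(\signal{x}^\prime)\ge\lambda^\prime\signal{x}^\prime$ becomes $S(\signal{x}^\prime)\ge\signal{x}^\prime$, while the conditional eigenpair of $T$ gives $S(\signal{x}^\star)=(\lambda^\star/\lambda^\prime)\,\signal{x}^\star$. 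I argue by contradiction, assuming $\lambda^\prime>\lambda^\star$.

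Under this assumption $(\lambda^\star/\lambda^\prime)<1$, and since $\signal{x}^\star\in\real_{++}^N$ by Fact~\ref{fact.p_si}(i), this yields $S(\signal{x}^\star)<\signal{x}^\star$. Then Fact~\ref{fact.p_si}(v) forces $\mathrm{Fix}(S)\neq\emptyset$; by Fact~\ref{fact.p_si}(ii) this fixed point is unique, call it $\signal{v}^\star$, and by Fact~\ref{fact.p_si}(i) it satisfies $\signal{v}^\star=S(\signal{v}^\star)\in\real_{++}^N$. The next step is to sandwich $\signal{v}^\star$ between $\signal{x}^\prime$ and $\signal{x}^\star$. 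From $S(\signal{x}^\prime)\ge\signal{x}^\prime$, Fact~\ref{fact.p_si}(iii) makes $(S^n(\signal{x}^\prime))_{n\in\Natural}$ nondecreasing in each coordinate, and Fact~\ref{fact.p_si}(iv) (applicable because $\mathrm{Fix}(S)\neq\emptyset$) gives $S^n(\signal{x}^\prime)\to\signal{v}^\star$, so $\signal{v}^\star\ge\signal{x}^\prime$; symmetrically, $S(\signal{x}^\star)<\signal{x}^\star$ makes $(S^n(\signal{x}^\star))_{n\in\Natural}$ nonincreasing and convergent to $\signal{v}^\star$, so $\signal{v}^\star\le\signal{x}^\star$.

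Now monotonicity of $\|\cdot\|$ gives $1=\|\signal{x}^\prime\|\le\|\signal{v}^\star\|\le\|\signal{x}^\star\|=1$, i.e. $\|\signal{v}^\star\|=1$. Hence $(\signal{v}^\star,\lambda^\prime)\in\real_{++}^N\times\real_{++}$ satisfies $T(\signal{v}^\star)=\lambda^\prime\,S(\signal{v}^\star)=\lambda^\prime\signal{v}^\star$ and $\|\signal{v}^\star\|=1$, so it solves Problem~\ref{problem.cond_eig}; the uniqueness part of Fact~\ref{fact.nuzman}(i) then forces $(\signal{v}^\star,\lambda^\prime)=(\signal{x}^\star,\lambda^\star)$, contradicting $\lambda^\prime>\lambda^\star$. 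Therefore $\lambda^\star\ge\lambda^\prime$.

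I expect the only genuinely delicate point to be the final squeeze: one cannot simply rescale $\signal{v}^\star$ to unit norm and feed it into Problem~\ref{problem.cond_eig}, because $T$ is not positively homogeneous, so $T(\signal{v}^\star/\|\signal{v}^\star\|)\neq(1/\|\signal{v}^\star\|)\,T(\signal{v}^\star)$ in general. The monotone-norm sandwich between $\signal{x}^\prime$ and $\signal{x}^\star$, both of unit norm, is exactly what certifies that the fixed point of the rescaled mapping already lies on the unit sphere, which is what lets the uniqueness of the conditional eigenvector close the argument. Everything else is routine bookkeeping with the Yates-type properties collected in Fact~\ref{fact.p_si}.
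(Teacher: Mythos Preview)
Your proof is correct. It follows the same contradiction-plus-rescaling template as the paper, but with two differences worth noting. First, the paper rescales by $1/\lambda^\star$ rather than $1/\lambda'$; this has the advantage that the rescaled mapping $T_{\lambda^\star}$ already has $\signal{x}^\star$ as its (unique) fixed point, so no separate existence argument for $\signal{v}^\star$ is needed---one simply shows $T_{\lambda^\star}(\signal{x}')\ge\signal{x}'$ and concludes $\signal{x}'\le\signal{x}^\star$ directly from Fact~\ref{fact.p_si}(iii)--(iv). Second, the paper closes the contradiction by a direct norm computation, $\lambda'=\lambda'\|\signal{x}'\|\le\|T(\signal{x}')\|\le\|T(\signal{x}^\star)\|=\lambda^\star\|\signal{x}^\star\|=\lambda^\star$, rather than invoking the uniqueness part of Fact~\ref{fact.nuzman}(i). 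Your route is slightly longer because the choice of scaling forces you to manufacture the fixed point $\signal{v}^\star$ and then squeeze its norm, but the sandwich argument you give is clean and the appeal to uniqueness of the conditional eigenpair is a perfectly valid way to finish; nothing is lost or gained in generality.
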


\begin{proof}
	 Since $T\in\mathcal{F}_\mathrm{SI}$, Fact~\ref{fact.nuzman} implies that the solution $(\signal{x}^\star,\lambda^\star)\in\real_{++}^N\times\real_{++}$ to Problem~\ref{problem.cond_eig} exists, and $\signal{x}^\star$ is a fixed point of the mapping
	\begin{align*}
	T_{\lambda^\star}:\real_+^N\to\real_{++}^N:\signal{x}\mapsto \dfrac{1}{\lambda^\star} T(\signal{x}).
	\end{align*}
	 Furthermore, we also verify uniqueness of the fixed point (i.e., $\mathrm{Fix}(T_{\lambda^\star})=\{\signal{x}^\star\}$) by recalling that mappings in $\mathcal{F}_\mathrm{SI}$ are closed under multiplication by positive scalars and that these mappings have at most one fixed point [Fact~\ref{fact.p_si}(ii)]. 
Now we prove the lemma by contradiction. Assume that $\lambda^\prime>\lambda^\star$. Then, by $T(\signal{x}^\prime)\ge \lambda^\prime\signal{x}^\prime$, we have
	\begin{align*}
	T_{\lambda^\star}(\signal{x}^\prime)=\dfrac{1}{\lambda^\star} T(\signal{x}^\prime)\ge \dfrac{\lambda^\prime}{\lambda^\star}\signal{x}^\prime \ge \signal{x}^\prime,
	\end{align*}  
	which, together with Fact~\ref{fact.p_si}(iii)-(iv), implies $\signal{x}^\prime\le\signal{x}^\star\in \mathrm{Fix}(T_{\lambda^\star})$. Monotonicity of $T$ shows that $\lambda^\prime\signal{x}^\prime \le T(\signal{x}^\prime)\le T(\signal{x}^\star)=\lambda^\star \signal{x}^\star$, and, from the monotonicity of the norm $\|\cdot\|$, we obtain $\lambda^\prime=\lambda^\prime \|\signal{x}^\prime\| \le \lambda^\star \|\signal{x}^\star\| = \lambda^\star$, which contradicts $\lambda^\prime>\lambda^\star$. As a result, we must have $\lambda^\prime\le \lambda^\star$.
\end{proof}

Many contributions in the next sections are heavily based on the notion of spectral radius of continuous mappings in the set $\mathcal{F}_\mathrm{GI}$, or, equivalently, the set of continuous asymptotic mappings [see Corollary~\ref{cor.connection}(ii)]. The spectral radius of these possibly nonlinear mappings is defined as follows:

\begin{definition}
	\label{def.nl_radius} (\cite[Definition~3.2]{nussbaum1986convexity} Spectral radius) Let $T:\real^N_+\to\real^N_+$ be a continuous mapping in $\mathcal{F}_\mathrm{GI}$. The spectral radius of $T$ is defined to be 
	\begin{align}
	\label{eq.nl_radius}
	\rho(T) := \sup\{\lambda\in\real_+~|~(\exists \signal{x}\in\real_+^N\backslash\{\signal{0}\})~ T(\signal{x})=\lambda\signal{x}\}\in\real_+.
	\end{align}

\end{definition}

It is well known that there always exists a (not necessarily unique) normalized eigenvector associated with the spectral radius of continuous mappings in $\mathcal{F}_\mathrm{GI}$. For later reference, we formally state this result below.

\begin{fact}
	\label{fact.achieved} (\cite[Proposition~5.3.2(ii)]{lem13} and \cite[Corollary~5.4.2]{lem13}) Let $T:\real^N_+\to\real^N_+$ be a continuous mapping in $\mathcal{F}_\mathrm{GI}$. Then, for any norm $\|\cdot\|$, there exists $\signal{x}^\star\in C:=\{\signal{x}\in\real_+^N~|~\|\signal{x}\|=1 \}$ such that $T(\signal{x}^\star)=\rho(T)\signal{x}^\star$.
\end{fact}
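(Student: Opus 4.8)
The plan is to prove the statement in three steps: first show that the set appearing in the supremum that defines $\rho(T)$ is nonempty and that $\rho(T)$ is finite, then extract a maximizing sequence of normalized eigenvectors, and finally pass to a limit using compactness of $C$ and continuity of $T$.

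\emph{Step 1 (existence of at least one eigenvector; finiteness of $\rho(T)$).} Work with the monotone norm $\|\cdot\|_1$ and the compact simplex $C_1:=\{\signal{x}\in\real_+^N~|~\|\signal{x}\|_1=1\}$. For each $n\in\Natural$ define $T_n:\real_+^N\to\real_{++}^N:\signal{x}\mapsto T(\signal{x})+(1/n)\signal{1}$. By Corollary~\ref{cor.connection}(iii), each $T_n$ is a continuous mapping in $\mathcal{F}_\mathrm{SI}$, so Fact~\ref{fact.nuzman} provides a pair $(\signal{y}_n,\mu_n)\in\real_{++}^N\times\real_{++}$ with $T_n(\signal{y}_n)=\mu_n\signal{y}_n$ and $\|\signal{y}_n\|_1=1$. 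Continuity of $T$ on the compact set $C_1$ makes $M:=\sup_{\signal{y}\in C_1}\|T(\signal{y})\|_1$ finite, whence $\mu_n=\|T_n(\signal{y}_n)\|_1\le M+1$ for all $n$. Passing to a subsequence, $\signal{y}_{n_k}\to\signal{y}^\star\in C_1$ and $\mu_{n_k}\to\mu^\star\in[0,M+1]$; letting $k\to\infty$ in $T(\signal{y}_{n_k})+(1/n_k)\signal{1}=\mu_{n_k}\signal{y}_{n_k}$ and using continuity of $T$ yields $T(\signal{y}^\star)=\mu^\star\signal{y}^\star$ with $\signal{y}^\star\ne\signal{0}$. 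Hence the set in \refeq{eq.nl_radius} is nonempty; moreover, normalizing any eigenvector to unit $\|\cdot\|_1$-norm shows that every eigenvalue of $T$ is at most $M$, so $\rho(T)\in[0,M]$.

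\emph{Steps 2--3 (maximizing sequence and limit).} Fix now the arbitrary norm $\|\cdot\|$ of the statement, let $C:=\{\signal{x}\in\real_+^N~|~\|\signal{x}\|=1\}$ (compact), and, using the definition of the supremum in \refeq{eq.nl_radius} together with Step~1, choose eigenvalues $\lambda_n\to\rho(T)$ of $T$ with associated eigenvectors that, by positive homogeneity of $T$, may be taken in $C$; call them $\signal{z}_n$, so that $T(\signal{z}_n)=\lambda_n\signal{z}_n$. Compactness of $C$ yields a subsequence with $\signal{z}_{n_k}\to\signal{x}^\star\in C$, and continuity of $T$ gives $T(\signal{x}^\star)=\lim_{k\to\infty}T(\signal{z}_{n_k})=\lim_{k\to\infty}\lambda_{n_k}\signal{z}_{n_k}=\rho(T)\,\signal{x}^\star$, with $\|\signal{x}^\star\|=1$ because $\signal{x}^\star\in C$. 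This is exactly the asserted conclusion.

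The main obstacle I anticipate is Step~1: since a GI mapping may vanish on parts of $\real_+^N$, one cannot directly normalize $T$ on $C$, so the existence of even a single eigenvector is not immediate. The remedy is the strictly positive perturbation $T+(1/n)\signal{1}$, which belongs to $\mathcal{F}_\mathrm{SI}$ and therefore inherits the Perron--Frobenius-type conclusion of Fact~\ref{fact.nuzman}; a compactness argument then removes the perturbation. (If one prefers not to route through Fact~\ref{fact.nuzman}, the perturbed eigenvector can instead be obtained from Brouwer's fixed point theorem applied to $\signal{x}\mapsto T_n(\signal{x})/\|T_n(\signal{x})\|_1$ on the simplex $C_1$.) By contrast, Steps~2 and~3 are routine compactness-and-continuity arguments that go through verbatim for any norm.
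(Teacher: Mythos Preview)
The paper does not prove this statement; it records it as a cited fact from \cite{lem13}. So there is no ``paper's proof'' to compare against, and your argument has to stand on its own.

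It does. The perturbation $T_n:=T+(1/n)\signal{1}$ lands in $\mathcal{F}_\mathrm{SI}$ by Corollary~\ref{cor.connection}(iii), Fact~\ref{fact.nuzman} supplies the normalized eigenpairs $(\signal{y}_n,\mu_n)$, and the compactness-plus-continuity passage to the limit is clean in both Step~1 and Steps~2--3. The only slip is the bound $\mu_n\le M+1$: since $\|(1/n)\signal{1}\|_1=N/n$, you actually get $\mu_n\le M+N/n\le M+N$, but this is harmless because all you need is boundedness of $(\mu_n)_{n\in\Natural}$.

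For context, the route taken in the source you are reproducing (and in the classical nonlinear Perron--Frobenius literature) is closer to your parenthetical alternative: one applies Brouwer's fixed point theorem directly to $\signal{x}\mapsto T_n(\signal{x})/\|T_n(\signal{x})\|_1$ on the simplex, which avoids the detour through Fact~\ref{fact.nuzman}. Your choice to route through Fact~\ref{fact.nuzman} is fine within this paper because that fact is already available, but the Brouwer version is the self-contained argument.
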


For the notion of (nonlinear) spectral radius to be useful in real-world applications, we require efficient means for its computation. To this end, we can use the fixed point iterations in \refeq{eq.krause_iter} as shown in Remark~\ref{remark.comp_specrad} below, which is based on the following known result:

\begin{fact}
\label{fact.nLemma33}
\cite[Lemma~3.3]{nussbaum1986convexity}~Assume that $T\in\mathcal{F}_\mathrm{GI}$ is continuous. Let $(\signal{x}, \lambda)\in\real_{++}^N\times \real_{+}$ satisfy $T(\signal{x})\le \lambda \signal{x}$. Then $\rho(T)\le \lambda$.
\end{fact}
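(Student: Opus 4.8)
The plan is to exploit the attainment of the spectral radius by an eigenvector, combined with the homogeneity (P3) and monotonicity (P4) enjoyed by every mapping in $\mathcal{F}_\mathrm{GI}$, and a simple ``gauge'' comparison against the strictly positive vector $\signal{x}$.

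First I would invoke Fact~\ref{fact.achieved} to fix a (normalized) eigenvector $\signal{x}^\star\in\real_+^N\backslash\{\signal{0}\}$ with $T(\signal{x}^\star)=\rho(T)\signal{x}^\star$; the particular norm used for the normalization is irrelevant, all that matters is $\signal{x}^\star\neq\signal{0}$. If $\rho(T)=0$, the conclusion $\rho(T)\le\lambda$ is immediate because $\lambda\in\real_+$, so I would henceforth assume $\rho(T)>0$. Next, since every coordinate of $\signal{x}$ is strictly positive and $\signal{x}^\star$ has at least one strictly positive coordinate, the scalar $t^\star:=\max_{1\le i\le N} x^\star_i/x_i$ is a well-defined element of $\real_{++}$, and by construction it is the \emph{smallest} $t\ge 0$ with $\signal{x}^\star\le t\,\signal{x}$.

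The key step is then the chain of inequalities
\[
\rho(T)\,\signal{x}^\star = T(\signal{x}^\star) \;\le\; T(t^\star\signal{x}) \;=\; t^\star\,T(\signal{x}) \;\le\; t^\star\,\lambda\,\signal{x},
\]
where the first inequality uses monotonicity of $T$ together with $\signal{x}^\star\le t^\star\signal{x}$, the equality uses homogeneity of $T$, and the last inequality is the hypothesis $T(\signal{x})\le\lambda\signal{x}$. Dividing by $\rho(T)>0$ gives $\signal{x}^\star\le (t^\star\lambda/\rho(T))\,\signal{x}$, and the minimality property defining $t^\star$ forces $t^\star\le t^\star\lambda/\rho(T)$, i.e.\ $\rho(T)\le\lambda$, which is the claim.

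I do not expect a genuine obstacle here; the only points that require care are ensuring that $t^\star$ is finite (this is exactly where the strict positivity $\signal{x}\in\real_{++}^N$ is used) and strictly positive (this uses $\signal{x}^\star\neq\signal{0}$), so that the final division by $\rho(T)$ is legitimate, and treating the degenerate case $\rho(T)=0$ separately. As an alternative that avoids appealing to Fact~\ref{fact.achieved}, one can run the identical comparison argument for an \emph{arbitrary} eigenpair $(\signal{y},\mu)\in(\real_+^N\backslash\{\signal{0}\})\times\real_{++}$, conclude $\mu\le\lambda$ for each such pair, and then take the supremum over all admissible $\mu$ in the definition of $\rho(T)$; this is essentially the same proof.
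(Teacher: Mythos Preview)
The paper does not supply its own proof of this statement; it is quoted as a known fact from \cite[Lemma~3.3]{nussbaum1986convexity}, so there is no in-paper argument to compare against.

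Your proof is correct. The gauge comparison via $t^\star=\max_{i} x^\star_i/x_i$ is the standard device in nonlinear Perron--Frobenius theory for bounds of this type, and the two delicate points you flag are exactly the right ones: strict positivity of $\signal{x}$ is what makes $t^\star$ finite, and $\signal{x}^\star\neq\signal{0}$ is what makes $t^\star>0$ so that the final cancellation is legitimate. The alternative you sketch---running the identical comparison for an arbitrary eigenpair $(\signal{y},\mu)$ and then taking the supremum in Definition~\ref{def.nl_radius}---is in fact slightly cleaner, since it avoids invoking Fact~\ref{fact.achieved} altogether and is self-contained from the definition of $\rho(T)$.
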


 \begin{remark}
 	\label{remark.comp_specrad}
 	Let $T_\infty:\real_+^N\to\real_{+}^N$ be an asymptotic mapping associated with a continuous mapping $T\in \mathcal{F}_\mathrm{WSI}$. By Proposition~\ref{prop.af_properties}(ii), we know that $T_\infty$ is continuous, and, by Corollary~\ref{cor.connection}, we have $T_\infty\in\mathcal{F}_\mathrm{GI}$. Now assume that the sequence $(\signal{x}_n)\subset\real_+^N$ generated by $(\forall n\in\Natural)~\signal{x}_{n+1}:=(1/\|T_\infty(\signal{x}_n)\|)~T_\infty(\signal{x}_n)$, where $\signal{x}_1\in\real_+^N\backslash\{\signal{0}\}$ and $\|\cdot\|$ is a monotone norm, converges to a positive vector $\signal{x}^\star\in\real_{++}^N$ (see Fact~\ref{fact.nuzman} for \emph{sufficient} conditions). By continuity of $T_\infty$, we deduce $\signal{x}^\star:=T_\infty(\signal{x}^\star)/\|T_\infty(\signal{x}^\star)\|$, and we conclude that $\rho(T_\infty) = \|T_\infty(\signal{x}^\star)\|$ by Fact~\ref{fact.nLemma33} and the definition of the spectral radius.
 \end{remark}

\section{Properties of Some Max-Min Utility Maximization Problems Involving Standard Interference Mappings}
\label{sect.bounds}

We now use the mathematical machinery introduced in the previous section to obtain valuable insights into solutions to the following subclass of utility optimization problems originally studied in \cite{nuzman07}:

\begin{problem}
	\label{problem.canonical}
	(Canonical network utility maximization problem)
	\begin{align}
	\label{eq.canonical}
	\begin{array}{lll}
	\mathrm{maximize}_{\signal{p}, u} & u \\
	\mathrm{subject~to} & \signal{p}\in \mathrm{Fix}(uT):=\left\{\signal{p}\in\real_{+}^N~|~\signal{p}=uT(\signal{p})\right\} \\
	& \|\signal{p}\|_a \le \bar{p} \\
	& \signal{p}\in\real_{+}^N, u\in\real_{++},
	\end{array}
	\end{align}
	where $\bar{p}\in \real_{++}$ is a design parameter hereafter called budget,  $\|\cdot\|_a$ is a monotone norm, and $T:\real_{+}^N\to\real_{++}^N$ is a mapping in $\mathcal{F}_\mathrm{SI}$. 
\end{problem}

Note that well-known (weighted) max-min utility optimization problems reduce to particular instances of \refeq{eq.canonical}, and the studies in \cite{cai2012optimal,tan2014wireless,zheng2016} show simple techniques that can be adapted to identify max-min utility optimization problems that can be posed in the canonical form. Examples of these problems include the (max-min) rate optimization in load coupled networks \cite{renato2016maxmin,siomina2015b}, the joint optimization of the uplink power and the cell assignment \cite{sun2014},  the joint optimization of the uplink power and the receive beamforming vectors \cite[Sect.~1.4.2]{martin11}, and many of the applications described in \cite{cai2012optimal,tan2014wireless,zheng2016}. 

Typically, in network utility maximization problems written in the canonical form \refeq{eq.canonical}, the optimization variable $\signal{p}$ corresponds to the transmit power or load of network elements (e.g., base stations or user equipment); the optimization variable $u$, hereafter called {\it utility}, is the common desired rate or SINR of the users; $T$ is a known  mapping that captures the interference coupling among network elements; and the norm $\|\cdot\|_a$ is chosen based on the physical limitations of network elements. For example, assuming $\signal{p}$ to be a power vector, we can use the $l_1$ norm if all networks elements share the same source of energy, or the $l_\infty$ norm if the network elements have independent sources. For examples of applications where the vector $\signal{p}$ does not have the interpretation of a power vector, we refer the readers to \cite[Sect~V-A]{renato2016maxmin} and \cite{siomina2015b,qi2017}.  

The constraint with the monotone norm in \refeq{problem.canonical} can represent a large class of constraints (on the vector $\signal{p}$) that are common in communication systems. Briefly, any set $C\subset \real_+^N$ with nonempty interior that is downward comprehensible,\footnote{We say that a set $C\subset \real^N_+$ is downward comprehensible if $(\forall \signal{x}\in\real_+^N)(\forall \signal{y}\in C)~\signal{x}\le\signal{y}\Rightarrow \signal{x}\in C$.} compact, and convex can be  written as $C=\{\signal{p}\in\real_{+}^N~|~\|\signal{p}\|_a\le 1\}$ for a monotone norm satisfying $(\forall\signal{p}\in\real_+^N)~\|\signal{p}\|_a=\inf\{t>0~|~(1/t)\signal{p} \in C\}$, and we refer the readers to \cite[Proposition 2]{renatomaxmin} for details.

The objective of the remainder of this section is to derive properties of the solution to Problem~\ref{problem.canonical} as a function of the  budget $\bar{p}$. The main results are formalized in Proposition~\ref{prop.acondv} and Proposition~\ref{prop.bounds}, which show bounds for the utility and the efficiency (i.e., utility over budget -- see Definition~\ref{def.mm_ee}) of the solution to Problem~\ref{problem.canonical}. In particular, these propositions describe the asymptotic behavior of the utility and the efficiency as $\bar{p}\to 0^+$ and as $\bar{p}\to \infty$. In addition, in traditional power control problems in wireless networks, these propositions provide us with a power budget indicating an operation point in which networks are transitioning from a noise-limited regime to an interference-limited regime (see Definition~\ref{def.transition}). To proceed with the formal proof of these contributions, we need the following known fact related to  Problem~\ref{problem.canonical}:

\begin{fact}
	\label{fact.nuzman_main}
	\cite{nuzman07} Denote by $(\signal{p}_{\bar{p}}, u_{\bar{p}})\in\real_{++}^N\times\real_{++}$ a solution to Problem~\ref{problem.canonical} for a given budget $\bar{p}\in\real_{++}$. Then each of the following holds:

		\item[(i)] The solution $(\signal{p}_{\bar{p}}, u_{\bar{p}})\in\real_{++}^N\times\real_{++}$ always exists, and it is unique.
		\item[(ii)] Let $(\signal{v}_{\bar{p}}, \lambda_{\bar{p}})\in\real_{++}^N\times\real_{++}$ be the solution to the following conditional eigenvalue problem: 

		\begin{problem}
			\label{problem.sol_prob3}
			Find $(\signal{v}, \lambda)\in\real_{+}^N\times\real_{+}$ such that $T(\signal{v}) = \lambda \signal{v}$ and $\|\signal{v}\|=1$, where $\|\cdot\|$ denotes the monotone norm $\|\signal{v}\|:=\|\signal{v}\|_a/\bar{p}$.
		\end{problem}		

		Then $\signal{p}_{\bar{p}} = \signal{v}_{\bar{p}}$ and $u_{\bar{p}} = 1/\lambda_{\bar{p}}$.

	\item[(iii)] The function $\real_{++}\to\real_{++}:\bar{p} \mapsto u_{\bar{p}}$ is strictly increasing; i.e., $\bar{p}_1>\bar{p}_2>0$ implies $u_{\bar{p}_1}>u_{\bar{p}_2}>0$.
	\item[(iv)] The function $\real_{++}\to\real_{++}^N:\bar{p} \mapsto \signal{p}_{\bar{p}}$ is strictly increasing in each coordinate; i.e., $\bar{p}_1>\bar{p}_2>0$ implies $\signal{p}_{\bar{p}_1}>\signal{p}_{\bar{p}_2}>\signal{0}$.

\end{fact}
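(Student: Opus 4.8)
The plan is to reduce the whole statement to the conditional eigenvalue problem (Problem~\ref{problem.sol_prob3}), whose well-posedness is handed to us by Fact~\ref{fact.nuzman}(i) because $T\in\mathcal{F}_\mathrm{SI}$, and then to deduce monotonicity from a single auxiliary lemma on how eigenvectors of $T$ move as the eigenvalue changes. Before that I would record two elementary observations. First, every feasible $\signal{p}$ for Problem~\ref{problem.canonical} automatically lies in $\real_{++}^N$: since $u>0$ and $T(\signal{p})\in\real_{++}^N$ by Fact~\ref{fact.p_si}(i), the identity $\signal{p}=uT(\signal{p})$ forces $\signal{p}\in\real_{++}^N$; consequently $T(\signal{p})=(1/u)\signal{p}$, i.e., $\signal{p}$ is an eigenvector of $T$ for the eigenvalue $\lambda=1/u$. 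Second, for every $\lambda>0$ the scaled mapping $T_\lambda:=(1/\lambda)T$ is again in $\mathcal{F}_\mathrm{SI}$ (the class is closed under multiplication by positive scalars), so by Fact~\ref{fact.p_si}(ii) it has at most one fixed point; equivalently, $T$ has at most one eigenvector for each positive eigenvalue.

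The crucial lemma is: if $0<\lambda_1<\lambda_2$ and $T(\signal{x}_i)=\lambda_i\signal{x}_i$ with $\signal{x}_i\in\real_{++}^N$ for $i\in\{1,2\}$, then $\signal{x}_1>\signal{x}_2$ (coordinate-wise strictly). To see this, note $T(\signal{x}_2)=\lambda_2\signal{x}_2>\lambda_1\signal{x}_2$, where the inequality is strict since $\signal{x}_2\in\real_{++}^N$; hence $T_{\lambda_1}(\signal{x}_2)>\signal{x}_2$. By Fact~\ref{fact.p_si}(iii) the sequence $(T_{\lambda_1}^n(\signal{x}_2))_{n\in\Natural}$ is nondecreasing in each coordinate, and by Fact~\ref{fact.p_si}(iv) it converges to the unique fixed point $\signal{x}_1$ of $T_{\lambda_1}$, so $\signal{x}_1\ge T_{\lambda_1}(\signal{x}_2)>\signal{x}_2$. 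This lemma, together with monotonicity of the norm $\|\cdot\|_a$, is essentially all the ``analysis'' needed.

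For (i) and (ii): the map $\signal{v}\mapsto\|\signal{v}\|_a/\bar p$ is a monotone norm, so Fact~\ref{fact.nuzman}(i) gives a unique solution $(\signal{v}_{\bar p},\lambda_{\bar p})\in\real_{++}^N\times\real_{++}$ to Problem~\ref{problem.sol_prob3}. I claim $(\signal{v}_{\bar p},1/\lambda_{\bar p})$ solves Problem~\ref{problem.canonical}. Feasibility: $(1/\lambda_{\bar p})T(\signal{v}_{\bar p})=\signal{v}_{\bar p}$ and $\|\signal{v}_{\bar p}\|_a=\bar p$. Optimality: if some feasible $(\signal{p}',u')$ had $u'>1/\lambda_{\bar p}$, then by the first observation $\signal{p}'$ is an eigenvector for $\lambda':=1/u'<\lambda_{\bar p}$, so the crucial lemma gives $\signal{p}'>\signal{v}_{\bar p}$, hence $\|\signal{p}'\|_a\ge\|\signal{v}_{\bar p}\|_a=\bar p$; combined with the constraint $\|\signal{p}'\|_a\le\bar p$ this yields $\|\signal{p}'\|_a=\bar p$, so $(\signal{p}',\lambda')$ also solves Problem~\ref{problem.sol_prob3}, contradicting uniqueness. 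Thus the optimal value is exactly $1/\lambda_{\bar p}$ and a maximizer exists, proving (i) except for uniqueness; and (ii) is the identification just made. For uniqueness of the maximizer, any optimal $\signal{p}^\star$ satisfies $T(\signal{p}^\star)=\lambda_{\bar p}\signal{p}^\star$ with $\signal{p}^\star\in\real_{++}^N$, i.e., $\signal{p}^\star\in\mathrm{Fix}(T_{\lambda_{\bar p}})$, which is a singleton by the second observation, so $\signal{p}^\star=\signal{v}_{\bar p}$.

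For (iii) and (iv): fix $\bar p_1>\bar p_2>0$. If $\lambda_{\bar p_1}=\lambda_{\bar p_2}$, the second observation forces $\signal{v}_{\bar p_1}=\signal{v}_{\bar p_2}$ and hence $\bar p_1=\|\signal{v}_{\bar p_1}\|_a=\|\signal{v}_{\bar p_2}\|_a=\bar p_2$, a contradiction; if $\lambda_{\bar p_1}>\lambda_{\bar p_2}$, the crucial lemma gives $\signal{v}_{\bar p_1}<\signal{v}_{\bar p_2}$, so $\bar p_1\le\bar p_2$ by monotonicity of $\|\cdot\|_a$, again a contradiction. Hence $\lambda_{\bar p_1}<\lambda_{\bar p_2}$, which gives $u_{\bar p_1}=1/\lambda_{\bar p_1}>1/\lambda_{\bar p_2}=u_{\bar p_2}>0$ (that is (iii)), and applying the crucial lemma once more to $\lambda_{\bar p_1}<\lambda_{\bar p_2}$ gives $\signal{p}_{\bar p_1}=\signal{v}_{\bar p_1}>\signal{v}_{\bar p_2}=\signal{p}_{\bar p_2}>\signal{0}$ (that is (iv)). The only genuinely delicate point in the whole argument is obtaining the \emph{strict} coordinate-wise inequality in the crucial lemma — this is where the strict positivity in Fact~\ref{fact.p_si}(i) and the monotone convergence in Fact~\ref{fact.p_si}(iii)--(iv) must be combined carefully; everything else is bookkeeping, and it is worth noting that no strict monotonicity of $\|\cdot\|_a$ is ever needed because the weak norm inequalities are always paired with strict inequalities on the budgets.
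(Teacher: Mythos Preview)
The paper does not prove this statement at all: it is recorded as a \emph{Fact} with a citation to \cite{nuzman07}, so there is no ``paper's own proof'' to compare against. Your task was therefore to supply a self-contained argument, and the one you give is correct.

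Your approach---reducing everything to the conditional eigenvalue problem via Fact~\ref{fact.nuzman}(i) and then controlling how eigenvectors of $T$ vary with the eigenvalue through your ``crucial lemma''---is clean and uses exactly the tools the paper makes available (Fact~\ref{fact.p_si}(i)--(iv) and monotonicity of $\|\cdot\|_a$). The crucial lemma is proved correctly: the key step $\signal{x}_1\ge T_{\lambda_1}(\signal{x}_2)>\signal{x}_2$ relies on Fact~\ref{fact.p_si}(iv), which is applicable because $\signal{x}_1\in\mathrm{Fix}(T_{\lambda_1})$ by hypothesis, and the strict inequality is preserved because $\signal{x}_2\in\real_{++}^N$. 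The optimality and uniqueness arguments for (i)--(ii) are sound, and your handling of (iii)--(iv) via the trichotomy on $\lambda_{\bar p_1}$ versus $\lambda_{\bar p_2}$ is tidy; your closing remark that only weak monotonicity of $\|\cdot\|_a$ is needed (because the contradictions come from strict inequalities on the budgets) is exactly the right observation. This is essentially a reconstruction of the argument in \cite{nuzman07}, packaged around a single monotonicity lemma; there is nothing to correct.
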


A key implication of Fact~\ref{fact.nuzman_main}(ii) is that the simple fixed point iteration described in Fact~\ref{fact.nuzman}(ii) can be used to solve Problem~\ref{problem.canonical}. Furthermore, Fact~\ref{fact.nuzman_main} shows that the solution to Problem~\ref{problem.canonical} exists and is unique for every $\bar{p}\in\real_{++}$. Therefore, the following functions are well defined:

\begin{definition}
	\label{def.mm_ee}
	(Utility, budget, and $\|\cdot\|_b$-efficiency functions) Denote by  $(\signal{p}_{\bar{p}},~u_{\bar{p}})\in\real_{++}^N\times\real_{++}$ the solution to Problem~\ref{problem.canonical} for a given budget $\bar{p}\in\real_{++}$. The utility and budget functions are defined by, respectively, $U:\real_{++}\to\real_{++}:\bar{p}\mapsto u_{\bar{p}}$ and $P:\real_{++}\to\real_{++}^N:\bar{p}\mapsto \signal{p}_{\bar{p}}$. In turn, given a monotone norm $\|\cdot\|_b$, the $\|\cdot\|_b$-efficiency function is defined by $E:\real_{++}\to\real_{++}:\bar{p}\mapsto U(\bar{p})/\|P(\bar{p})\|_b$, and note that $(\forall \bar{p}\in\real_{++})~ E(\bar{p})=1/\|T(\signal{p}_{\bar{p}})\|_b=u_{\bar{p}}/\|\signal{p}_{\bar{p}}\|_b$.
\end{definition}

To fix the above concepts, we show below an example illustrating that, in some power control problems, the efficiency $E(\bar{p})$ for a given power budget $\bar{p}$ has the intuitive physical interpretation of network-wide \emph{transmit} energy efficiency. However, we emphasize that there are many \emph{nonequivalent} notions of energy efficiency in wireless networks \cite{li2011energy,renato14SPM}. Therefore, any statement related to the behavior of the function $E$ should be put into the right context. For example, power control problems written in the canonical form in \refeq{eq.canonical} often assume a particular transmission scheme (e.g., treat interference as noise). Therefore, statements related to the energy efficiency $E$ for a given power budget $\bar{p}$ do not necessarily agree with, for example, information theoretic notions of energy efficiency, which typically do not impose any restrictions on the transmission scheme or on the available computational power. We also highlight that problems written in the canonical form \refeq{eq.canonical} are not necessarily traditional power control problems \cite[Sect~V-A]{renato2016maxmin}\cite{siomina2015b,qi2017}, so the physical meaning of the efficiency function $E$ can differ depending on the application.

\begin{example} \label{example.intuition} If the norm $\|\cdot\|_b$ is chosen to be the $l_1$ norm in a utility maximization problem in which $\signal{p}$ is a vector of transmit power of base stations (in Watts) and the utility $u$ is the common rate achieved by users (in bits/second) in the downlink, then $E(\bar{p})$ shows the number of bits received by each user for each Joule spent by a network optimized for the power budget $\bar{p}$. In turn, $P(\bar{p})$ and $U(\bar{p})$ represent, respectively, the transmit power of each base station and the optimal rate. In Sect.~\ref{sect.planning} we show a network utility maximization problem with variables having this interpretation.
\end{example}

 By Fact~\ref{fact.nuzman_main}(iii)-(iv),  the utility function $U$ and each coordinate of the budget function $P$ are strictly increasing. However, the next lemma shows that the utility cannot grow faster than the budget.

\begin{lemma} 
	\label{lemma.e_monotone} The $\|\cdot\|_b$-efficiency function $E:\real_{++}\to\real_{++}$ is nonincreasing; i.e., ${\bar{p}_1}>{\bar{p}_2}>0$ implies $E(\bar{p}_1) \le E(\bar{p}_2)$.
\end{lemma}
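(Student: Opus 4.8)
The plan is to reduce the statement to a chain of monotonicity implications, using the alternative expression $E(\bar{p})=1/\|T(\signal{p}_{\bar{p}})\|_b$ recorded in Definition~\ref{def.mm_ee}. Fix $\bar{p}_1>\bar{p}_2>0$ and write $\signal{p}_{\bar{p}}:=P(\bar{p})$ for the optimal budget allocation. First I would invoke Fact~\ref{fact.nuzman_main}(iv), which gives $\signal{0}<\signal{p}_{\bar{p}_2}<\signal{p}_{\bar{p}_1}$. Next, since $T\in\mathcal{F}_\mathrm{SI}$, each coordinate function of $T$ is monotone (property (P4) of Definition~\ref{def.inter_func}), so $T(\signal{p}_{\bar{p}_2})\le T(\signal{p}_{\bar{p}_1})$; moreover, by Fact~\ref{fact.p_si}(i) both of these vectors lie in $\real_{++}^N$, and in particular $\signal{0}\le T(\signal{p}_{\bar{p}_2})\le T(\signal{p}_{\bar{p}_1})$. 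Because $\|\cdot\|_b$ is a monotone norm, this yields $0<\|T(\signal{p}_{\bar{p}_2})\|_b\le\|T(\signal{p}_{\bar{p}_1})\|_b$, and taking reciprocals gives $E(\bar{p}_1)=1/\|T(\signal{p}_{\bar{p}_1})\|_b\le 1/\|T(\signal{p}_{\bar{p}_2})\|_b=E(\bar{p}_2)$, which is exactly the claim.

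There is essentially no difficult step here; the argument is a short composition of monotonicity facts. The only point that deserves a word of care is the use of the identity $E(\bar{p})=1/\|T(\signal{p}_{\bar{p}})\|_b$, which rests on the fixed-point constraint $\signal{p}_{\bar{p}}=u_{\bar{p}}\,T(\signal{p}_{\bar{p}})$ of Problem~\ref{problem.canonical} together with positive homogeneity of $\|\cdot\|_b$; since this identity is already stated in Definition~\ref{def.mm_ee}, it can be cited directly. An alternative route would work with the conditional-eigenvalue solutions $(\signal{v}_{\bar{p}},\lambda_{\bar{p}})$ of Fact~\ref{fact.nuzman_main}(ii) and the relation $u_{\bar{p}}=1/\lambda_{\bar{p}}$, but the route sketched above is the most economical and does not require re-deriving anything.
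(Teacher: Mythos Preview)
Your proof is correct and follows essentially the same chain of implications as the paper: Fact~\ref{fact.nuzman_main}(iv) for monotonicity of $P$, monotonicity of $T\in\mathcal{F}_\mathrm{SI}$, monotonicity of the norm $\|\cdot\|_b$ together with positivity of $T$, and finally the identity $E(\bar{p})=1/\|T(P(\bar{p}))\|_b$ from Definition~\ref{def.mm_ee}. The paper presents exactly this argument as a single displayed chain of implications.
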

\vspace{-.5\baselineskip}
\begin{proof}
The result follows from  
\begin{multline*}
 {\bar{p}_1}>{\bar{p}_2}>0 \overset{(a)}{\Rightarrow} P(\bar{p}_1) > P(\bar{p}_2)  \overset{(b)}{\Rightarrow} T(P(\bar{p}_1)) \ge T(P(\bar{p}_2)) \\
 \overset{(c)}{\Rightarrow} \|T(P(\bar{p}_1))\|_b \ge \|T(P(\bar{p}_2))\|_b >0 \overset{(d)}{\Leftrightarrow} E(\bar{p}_1) \le E(\bar{p}_2),
\end{multline*}
where (a) is an implication of Fact~\ref{fact.nuzman_main}(iv), (b) is a consequence of the monotonicity of standard interference functions, (c) results from the monotonicity of the norm $\|\cdot\|_b$ and positivity of $T$, and (d) is obtained from the property $(\forall \bar{p}\in\real_+) E(\bar{p})=1/\|T(P(\bar{p}))\|_b$.
\end{proof}

 (We can prove that the functions $U$, $P$, and $E$ in Definition~\ref{def.mm_ee} are continuous in $\real_{++}$, but we omit the proof for brevity.)  We now have all the technical background to prove the first main result concerning the behavior of the solution to Problem~\ref{problem.canonical} as $\bar{p}\to\infty$:

\begin{proposition}
	\label{prop.acondv}
	Let $(\signal{p}_{\bar{p}}, \lambda_{\bar{p}}):=(P(\bar{p}), 1/U(\bar{p}))\in\real_{++}^N\times\real_{++}$ be the solution to Problem~\ref{problem.sol_prob3} for a given budget $\bar{p}\in\real_{++}$, and denote by $T_\infty:\real_{+}^N\to\real_{+}^N$ the asymptotic mapping associated with the standard interference mapping $T:\real_{+}^N\to\real_{++}^N$. Then each of the following holds:

		\item[(i)] The limit $\lim_{\bar{p}\to\infty}\lambda_{\bar{p}}=:\lambda_\infty\ge 0$ exists.
		\item[(ii)] Let the scalar $\lambda_\infty$ be as defined in (i), and assume that $T$ is continuous. In addition, let $(\bar{p}_n)_{n\in\Natural}\subset \real_{++}$ denote an arbitrary strictly increasing sequence satisfying $\lim_{n\to\infty} \bar{p}_n = \infty$, and define $\signal{x}_{n} := (1/\|\signal{p}_{\bar{p}_n}\|_a)\signal{p}_{\bar{p}_n}=(1/\bar{p}_n)\signal{p}_{\bar{p}_n}$ (recall that $\|\signal{p}_{\bar{p}_n}\|_a=\bar{p}_n$ by Fact~\ref{fact.nuzman_main}(ii)). Let  $\signal{x}_{\infty}\in\real_+^N$ be an arbitrary accumulation point of the bounded sequence $(\signal{x}_n)_{n\in\Natural}\subset\real_{++}^N$. Then the tuple $(\signal{x}_\infty,\lambda_\infty)$ solves the following conditional eigenvalue problem:

		\begin{problem}
			\label{problem.eival_tinf}
			Find $(\signal{x}, \lambda)\in\real_{+}^N\times\real_{+}$ such that $T_\infty(\signal{x}) = \lambda \signal{x}$ and $\|\signal{x}\|_a = 1$.
		\end{problem}	

		\item[(iii)] Furthermore, we have $\lambda_\infty = \rho(T_\infty)$.
		
		\item[(iv)] Moreover, if the solution $(\signal{x}^\prime, \lambda^\prime)\in\real_{+}^N\times\real_{+}$ to Problem~\ref{problem.eival_tinf} is unique, then $\lim_{n\to\infty}\signal{x}_n=\signal{x}_\infty=\signal{x}^\prime$ and $\lambda^\prime=\lambda_\infty$.

\end{proposition}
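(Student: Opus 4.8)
The plan is to establish the four claims in turn, using throughout the reformulation of Problem~\ref{problem.canonical} as the conditional eigenvalue Problem~\ref{problem.sol_prob3} (Fact~\ref{fact.nuzman_main}(ii)) and the analytic description $T_\infty(\signal{x})=\lim_{h\to\infty}(1/h)T(h\signal{x})$ from Remark~\ref{remark.example_asym}. Claim~(i) is immediate: Fact~\ref{fact.nuzman_main}(ii) gives $\lambda_{\bar{p}}=1/U(\bar{p})$, and since $U$ is strictly increasing by Fact~\ref{fact.nuzman_main}(iii), the map $\bar{p}\mapsto\lambda_{\bar{p}}$ is strictly decreasing and bounded below by $0$; hence $\lambda_\infty:=\lim_{\bar{p}\to\infty}\lambda_{\bar{p}}$ exists in $\real_+$.

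For claim~(ii), Fact~\ref{fact.nuzman_main}(ii) yields $T(\signal{p}_{\bar{p}_n})=\lambda_{\bar{p}_n}\signal{p}_{\bar{p}_n}$ with $\|\signal{p}_{\bar{p}_n}\|_a=\bar{p}_n$, so dividing by $\bar{p}_n$ gives $(1/\bar{p}_n)T(\bar{p}_n\signal{x}_n)=\lambda_{\bar{p}_n}\signal{x}_n$ and $\|\signal{x}_n\|_a=1$. Thus $(\signal{x}_n)$ lies on the $\|\cdot\|_a$-unit sphere, hence is bounded and admits accumulation points; fix one such $\signal{x}_\infty$ and a subsequence $\signal{x}_{n_k}\to\signal{x}_\infty$, so that $\|\signal{x}_\infty\|_a=1$ and in particular $\signal{x}_\infty\neq\signal{0}$. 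I then let $k\to\infty$ in $(1/\bar{p}_{n_k})T(\bar{p}_{n_k}\signal{x}_{n_k})=\lambda_{\bar{p}_{n_k}}\signal{x}_{n_k}$: the right-hand side tends to $\lambda_\infty\signal{x}_\infty$ by claim~(i), and for the $i$-th coordinate of the left-hand side I invoke Proposition~\ref{prop.af_properties}(iv) applied to the extended coordinate function $\tilde{t}^{(i)}$ (which is continuous on $\real_+^N$ since $T$ is continuous), with the sequences $h_{n_k}:=\bar{p}_{n_k}\to\infty$ and $\signal{x}_{n_k}\to\signal{x}_\infty$, obtaining convergence to $\tilde{t}^{(i)}_\infty(\signal{x}_\infty)=(T_\infty(\signal{x}_\infty))_i$. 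Hence $T_\infty(\signal{x}_\infty)=\lambda_\infty\signal{x}_\infty$, and since $\signal{x}_\infty\in\real_+^N$, $\|\signal{x}_\infty\|_a=1$, and $\lambda_\infty\ge 0$, the tuple $(\signal{x}_\infty,\lambda_\infty)$ solves Problem~\ref{problem.eival_tinf}.

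For claim~(iii), claim~(ii) already shows $\lambda_\infty$ is an eigenvalue of $T_\infty$ with a nonzero eigenvector, so $\rho(T_\infty)\ge\lambda_\infty$ by Definition~\ref{def.nl_radius}. For the converse I use the pointwise domination $T_\infty\le T$: by Lemma~\ref{lemma.basic_properties}(ii) the map $h\mapsto(1/h)T(h\signal{x})$ is nonincreasing in each coordinate on $\real_{++}$, so by Remark~\ref{remark.example_asym}, $T_\infty(\signal{x})=\lim_{h\to\infty}(1/h)T(h\signal{x})\le T(\signal{x})$ for every $\signal{x}\in\real_+^N$. In particular, for each budget $\bar{p}>0$ we get $T_\infty(\signal{p}_{\bar{p}})\le T(\signal{p}_{\bar{p}})=\lambda_{\bar{p}}\signal{p}_{\bar{p}}$ with $\signal{p}_{\bar{p}}\in\real_{++}^N$ (Fact~\ref{fact.nuzman_main}(i)); since $T_\infty$ is continuous (Proposition~\ref{prop.af_properties}(ii)) and lies in $\mathcal{F}_\mathrm{GI}$ (Corollary~\ref{cor.connection}(ii)), Fact~\ref{fact.nLemma33} gives $\rho(T_\infty)\le\lambda_{\bar{p}}$. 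Letting $\bar{p}\to\infty$ yields $\rho(T_\infty)\le\lambda_\infty$, and combining the two inequalities, $\lambda_\infty=\rho(T_\infty)$.

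Finally, claim~(iv) follows from claim~(ii): if Problem~\ref{problem.eival_tinf} has a unique solution $(\signal{x}',\lambda')$, then every accumulation point of the bounded sequence $(\signal{x}_n)$, paired with $\lambda_\infty$, must equal $(\signal{x}',\lambda')$, so $\lambda'=\lambda_\infty$ and $(\signal{x}_n)$ has the single accumulation point $\signal{x}'$; a bounded sequence in $\real^N$ with a unique accumulation point converges to it, so $\lim_{n\to\infty}\signal{x}_n=\signal{x}_\infty=\signal{x}'$. I expect the main obstacle to be claim~(iii): the reverse bound $\rho(T_\infty)\le\lambda_\infty$ is not visible from the eigenvalue equations alone and requires the comparison $T_\infty\le T$ together with Nussbaum's monotonicity lemma (Fact~\ref{fact.nLemma33}) invoked at the optimal power vector of \emph{every} finite budget before taking the limit. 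The other delicate point is the limit interchange in claim~(ii), which needs Proposition~\ref{prop.af_properties}(iv) rather than the mere definition of $T_\infty$, precisely because the argument $\signal{x}_{n_k}$ is itself varying.
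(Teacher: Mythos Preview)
Your proof is correct. Parts (i), (ii), and (iv) are essentially identical to the paper's argument, including the use of Proposition~\ref{prop.af_properties}(iv) to pass the limit through the varying argument $\signal{x}_{n_k}$ in (ii).

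Part (iii) is where you genuinely diverge. The paper obtains the hard inequality $\rho(T_\infty)\le\lambda_\infty$ by working on the $T_\infty$ side: it takes a normalized eigenvector $\signal{x}_s$ of $T_\infty$ at $\rho(T_\infty)$ (Fact~\ref{fact.achieved}), uses Proposition~\ref{prop.af_properties}(i) to write $(1/\bar{p}_n)T(\bar{p}_n\signal{x}_s)\ge(\rho(T_\infty)-\delta_n)\signal{x}_s$, and then invokes Lemma~\ref{lemma.lambda_ineq} (a comparison result for the SI mapping $T$) to conclude $\lambda_{\bar{p}_n}\ge\rho(T_\infty)-\delta_n$. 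Your route instead works on the $T$ side: from the monotone-limit inequality $T_\infty\le T$ (Lemma~\ref{lemma.basic_properties}(ii)) you get $T_\infty(\signal{p}_{\bar{p}})\le\lambda_{\bar{p}}\signal{p}_{\bar{p}}$ with $\signal{p}_{\bar{p}}\in\real_{++}^N$, and Fact~\ref{fact.nLemma33} yields $\rho(T_\infty)\le\lambda_{\bar{p}}$ directly for every $\bar{p}$. This is shorter and avoids both Fact~\ref{fact.achieved} and Lemma~\ref{lemma.lambda_ineq}; it also sidesteps the paper's case distinction on whether the accumulation point $\signal{x}_\infty$ is strictly positive, since the test vector $\signal{p}_{\bar{p}}$ is always in $\real_{++}^N$. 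The paper's argument, on the other hand, makes the role of the spectral eigenvector of $T_\infty$ explicit, which ties in more visibly with the nonlinear Perron--Frobenius machinery invoked elsewhere.
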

\vspace{-.5\baselineskip}
\begin{proof}
	 (i) The limit $\lim_{\bar{p}\to\infty} \lambda_{\bar{p}} =: \lambda_\infty\ge 0$ exists because the function $\real_{++}\to\real_{++}:\bar{p} \mapsto \lambda_{\bar{p}}$ is decreasing (and bounded below by zero), as shown in Fact~\ref{fact.nuzman_main}(iii).

	 (ii) Since $(\forall n\in \Natural)~T(\signal{p}_{\bar{p}_n}) = \lambda_{\bar{p}_n} \signal{p}_{\bar{p}_n}$ and $\|\signal{p}_{\bar{p}_n}\|_a = \bar{p}_n>0$, we have
	 \begin{multline}
	 \label{eq.xn}
	 (\forall n\in \Natural) \lambda_{\bar{p}_n} \signal{x}_n = \dfrac{\lambda_{\bar{p}_n}}{\|\signal{p}_{\bar{p}_n}\|_a} \signal{p}_{\bar{p}_n} 
	 = \dfrac{1}{\|\signal{p}_{\bar{p}_n}\|_a} T\left(\dfrac{\|\signal{p}_{\bar{p}_n}\|_a}{\|\signal{p}_{\bar{p}_n}\|_a} \signal{p}_{\bar{p}_n}\right) \\ = \dfrac{1}{\bar{p}_n} T(\bar{p}_n\signal{x}_n).
	 \end{multline}

 The vector $\signal{x}_\infty$ is an accumulation point of the bounded sequence $(\signal{x}_n)_{n\in\Natural}\subset\real_{++}^N$ by definition, so there exists a convergent subsequence $(\signal{x}_n)_{n\in K}$, $K\subset\Natural$, such that $\lim_{n\in K} \signal{x}_{n} = \signal{x}_\infty\in\real_+^N$. Therefore, from part (i) of the proposition and \refeq{eq.xn}, we conclude that
	 \begin{align}
	     \label{eq.limit_tinf}
	 	 \lambda_\infty \signal{x}_\infty = \lim_{n\in K} \dfrac{1}{\bar{p}_n} T(\bar{p}_n\signal{x}_n) = T_\infty(\signal{x}_\infty),
	 \end{align}
	 where the last equality follows from Proposition~\ref{prop.af_properties}(iv) and continuity of $T$. We complete the proof by recalling that  $\|\signal{x}_\infty\|_a = \lim_{n\in K} \|\signal{x}_n\|_a=1$.

(iii) In light of (ii), Definition~\ref{def.nl_radius}, and Fact~\ref{fact.nLemma33}, the result is immediate if there exists a positive accumulation point $\signal{x}_\infty\in\real_{++}^N$. The proof for the general case builds upon the proof of \cite[Theorem~3.1(2)]{nussbaum1986convexity}, which shows a related result for mappings in $\mathcal{F}_\mathrm{GI}$. In more detail, let $\lambda_s := \rho(T_\infty)$, and note that $T_\infty$ is continuous by Proposition~\ref{prop.af_properties}(ii). By Corollary~\ref{cor.connection}(ii) and Fact~\ref{fact.achieved}, there exists $\signal{x}_s\in C:=\{\signal{x}\in\real_+^N~|~\|\signal{x}\|_a=1\}$ such that $T_\infty(\signal{x}_s)=\lambda_s \signal{x}_s$, which, together with Proposition~\ref{prop.af_properties}(i), shows that there exists a sequence $(\delta_n)_{n\in\Natural}\subset \real_+$ satisfying both $\lim_{n\to\infty} \delta_n = 0$ and 
\begin{align}
\label{eq.ineq_deltan}
(\forall n\in\Natural) ~ \dfrac{1}{\bar{p}_n} T(\bar{p}_n\signal{x}_s) \ge (\lambda_s-\delta_n)\signal{x}_s \ge \signal{0}.
\end{align}

By considering the monotone norms $\|\signal{x}\|_{\bar{p}_n}:=(1/\bar{p}_n)\|\signal{x}\|_a$ for every $n\in\Natural$ and every $\signal{x}\in\real_+^N$ (which, in particular, implies $\|\bar{p}_n\signal{x}_s\|_{\bar{p}_n}=1$ and $\|\signal{p}_{\bar{p}_n}\|_{\bar{p}_n}=1$), we have from the inequality in \refeq{eq.ineq_deltan},  Lemma~\ref{lemma.lambda_ineq}, and Fact~\ref{fact.nuzman_main}(ii) that $(\forall n\in\Natural)~\lambda_{\bar{p}_n} \ge (\lambda_s-\delta_n)$. Passing to the limit as $n\to\infty$, we obtain $\lambda_\infty \ge \lambda_s$. In addition, the definition of the spectral radius and the equality $\lambda_\infty\signal{x}_\infty = T_\infty(\signal{x}_\infty)$ in part (ii) show that $\lambda_\infty \le \lambda_s$. Therefore, we need to have $\lambda_\infty = \lambda_s$ as claimed.

(iv) If $(\signal{x}^\prime, \lambda^\prime)\in\real_{++}^N\times\real_{++}$ is the unique solution to Problem~\ref{problem.eival_tinf}, then, as an immediate consequence of the result in (ii), $\signal{x}^\prime$ is the only accumulation point of the bounded sequence $(\signal{x}_n)_{n\in\Natural}$, which implies that $\lim_{n\to\infty}\signal{x}_n=\signal{x}_\infty=\signal{x}^\prime$. As a result, $T_\infty(\signal{x}_\infty) =\lambda^\prime \signal{x}_\infty\in\real_{++}^N$ and $\lambda^\prime=\lambda_\infty$ by also considering the result in (ii).\end{proof}

The assumption of uniqueness of the solution to Problem~\ref{problem.eival_tinf} in Proposition~\ref{prop.acondv} is often valid in utility maximization problems. In particular, Fact~\ref{fact.nuzman} shows sufficient conditions that are easily verifiable. If they are satisfied, then the spectral radius of the asymptotic mapping $T_\infty$ can be obtained directly by solving Problem~\ref{problem.eival_tinf} with the approach in Remark~\ref{remark.comp_specrad}. However, even if the assumptions in Fact~\ref{fact.nuzman} are not satisfied, we discuss below that the iterations in \refeq{eq.krause_iter} can also be used to approximate the spectral radius of arbitrary asymptotic mappings.

\begin{remark}
	\label{remark.approx}
	
	The results in Proposition~\ref{prop.acondv}(ii)-(iii) are valid even if Problem~\ref{problem.eival_tinf} does not have a unique solution. Therefore, if we need to compute the spectral radius of an asymptotic mapping $T_\infty:\real_+^N\to\real_+^N$ associated with a continuous mapping $T\in\mathcal{F}_\mathrm{SI}$, and the assumptions in Fact~\ref{fact.nuzman} are not satisfied for $T_\infty$, then we can approximate $\rho(T_\infty)$ by using $T$ and a sufficiently large budget $\bar{p}$ in Problem~\ref{problem.canonical}. The reciprocal of the utility obtained by solving this problem serves as an estimate of $\rho(T_\infty)$, as justified by  Proposition~\ref{prop.acondv}(i). In fact, this estimate is  an upper bound of $\rho(T_\infty)$ in light of Fact~\ref{fact.nuzman_main}(ii)-(iii), and recall that Problem~\ref{problem.canonical} can always be solved with the algorithm in Fact~\ref{fact.nuzman}(ii).
	 
\end{remark}

With knowledge of the spectral radius of the asymptotic mapping associated with the mapping $T$ in Problem~\ref{problem.canonical}, we show in the next proposition that performance bounds on the  utility $U$ and the efficiency $E$ become readily available. These bounds are asymptotically tight, and they are simple functions of the budget $\bar{p}$. Therefore,  they are particularly useful to gain information on the  performance of wireless networks for a given budget $\bar{p}$ by using operations that typically have remarkably low computational complexity.

\begin{proposition}
\label{prop.bounds}
For notational simplicity, let $\lambda_\infty:=\rho(T_\infty)$ be the spectral radius of the asymptotic mapping $T_\infty$ associated with the mapping $T$ in Problem~\ref{problem.canonical}. Assume that $T$ is continuous and that $\lambda_\infty>0$. Then each of the following holds:

		\item[(i)] $\sup_{\bar{p}>0} U(\bar{p}) = \lim_{\bar{p}\to \infty} U(\bar{p}) = 1/\lambda_\infty$ and $\sup_{\bar{p}>0} E(\bar{p}) = \lim_{\bar{p}\to 0^+} E(\bar{p}) = 1/\|T(\signal{0})\|_b$.
	\item[(ii)] $(\forall \bar{p}\in\real_{++})$ 
	\begin{align*} U(\bar{p}) 
	\le  \begin{cases} \bar{p}/\|T(\signal{0})\|_a,&\text{if } \bar{p}\le\|T(\signal{0})\|_a/\lambda_\infty\\ 1/\lambda_\infty,&\text{otherwise.}\end{cases}
	\end{align*}	
	\item[(iii)] $(\forall \bar{p}\in\real_{++}) E(\bar{p}) \le \min\{1/\|T(\signal{0})\|_b, \alpha/(\lambda_\infty~\bar{p})\}$, where $\alpha\in\real_{++}$ is any scalar satisfying $(\forall\signal{x}\in\real^N)~\|\signal{x}\|_a \le \alpha \|\signal{x}\|_b$ (such a scalar always exists because of the equivalence of norms in finite dimensional spaces).
	
	\item[(iv)] $U(\bar{p})\in\Theta(1)$ and $E(\bar{p})\in\Theta(1/\bar{p})$ as $\bar{p}\to\infty$.
	\item[(v)] $U(\bar{p})\in\Theta(\bar{p})$ and $E(\bar{p})\in\Theta(1)$ as $\bar{p}\to 0^+$.

\end{proposition}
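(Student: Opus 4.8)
The plan is to reduce all five items to three already-established facts: Fact~\ref{fact.nuzman_main}, which identifies $U(\bar p)=1/\lambda_{\bar p}$, $\|P(\bar p)\|_a=\bar p$, and the strict monotonicity of $U$ and $P$; Proposition~\ref{prop.acondv}(i),(iii), which says $\lambda_{\bar p}$ decreases to $\lambda_\infty=\rho(T_\infty)>0$ as $\bar p\to\infty$; and Lemma~\ref{lemma.e_monotone}, which says $E$ is nonincreasing. To these I would add only monotonicity and positivity of standard interference mappings [Fact~\ref{fact.p_si}(i), so $T(\signal{0})\in\real_{++}^N$ and $\|T(\signal{0})\|_a,\|T(\signal{0})\|_b>0$], continuity of $T$, and equivalence of norms in $\real^N$ (fix constants $\alpha,\beta>0$ with $\|\signal{x}\|_a\le\alpha\|\signal{x}\|_b$ and $\|\signal{x}\|_b\le\beta\|\signal{x}\|_a$ for all $\signal{x}\in\real^N$).

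\textbf{Item (i).} From $U(\bar p)=1/\lambda_{\bar p}$ and $\lambda_{\bar p}\ge\lambda_\infty>0$ with $\lambda_{\bar p}\to\lambda_\infty$, the function $U$ is strictly increasing with $\sup_{\bar p>0}U(\bar p)=\lim_{\bar p\to\infty}U(\bar p)=1/\lambda_\infty$. For $E$: since $E$ is nonincreasing, its supremum equals $\lim_{\bar p\to 0^+}E(\bar p)$; and because $\|P(\bar p)\|_a=\bar p\to0^+$ forces $P(\bar p)\to\signal{0}$, continuity of $T$ and of $\|\cdot\|_b$ applied to $E(\bar p)=1/\|T(P(\bar p))\|_b$ gives the limit $1/\|T(\signal{0})\|_b$, which is positive by Fact~\ref{fact.p_si}(i).

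\textbf{Items (ii)--(iii).} These rest on two one-line inequalities. First, the fixed-point identity $P(\bar p)=U(\bar p)\,T(P(\bar p))$ together with monotonicity of $T$ gives $P(\bar p)\ge U(\bar p)\,T(\signal{0})$; applying the monotone norm $\|\cdot\|_a$ yields $\bar p=\|P(\bar p)\|_a\ge U(\bar p)\,\|T(\signal{0})\|_a$, i.e.\ $U(\bar p)\le\bar p/\|T(\signal{0})\|_a$. Second, $U(\bar p)=1/\lambda_{\bar p}\le 1/\lambda_\infty$ since $\lambda_{\bar p}\ge\lambda_\infty$. Taking the minimum of these two bounds and comparing them at the crossover $\bar p=\|T(\signal{0})\|_a/\lambda_\infty$ produces the piecewise bound (ii). For (iii), the same inequality $T(P(\bar p))\ge T(\signal{0})$ with monotonicity of $\|\cdot\|_b$ gives $E(\bar p)=1/\|T(P(\bar p))\|_b\le 1/\|T(\signal{0})\|_b$; and writing $E(\bar p)=U(\bar p)/\|P(\bar p)\|_b$, using $U(\bar p)\le 1/\lambda_\infty$ and $\|P(\bar p)\|_b\ge\|P(\bar p)\|_a/\alpha=\bar p/\alpha$ gives $E(\bar p)\le\alpha/(\lambda_\infty\bar p)$; the minimum of the two is (iii).

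\textbf{Items (iv)--(v).} The upper bounds are exactly those in (ii)--(iii); I only need matching lower bounds from monotonicity. As $\bar p\to\infty$: for any fixed $x_0>0$ and all $\bar p\ge x_0$, $U(\bar p)\ge U(x_0)>0$ and $E(\bar p)=U(\bar p)/\|P(\bar p)\|_b\ge U(x_0)/(\beta\bar p)$ (using $\|P(\bar p)\|_b\le\beta\|P(\bar p)\|_a=\beta\bar p$), so $U(\bar p)\in\Theta(1)$ and $E(\bar p)\in\Theta(1/\bar p)$. As $\bar p\to0^+$: since $E$ is nonincreasing, $E(x_0)\le E(\bar p)\le 1/\|T(\signal{0})\|_b$ for $\bar p\le x_0$, giving $E(\bar p)\in\Theta(1)$; and $U(\bar p)=E(\bar p)\,\|P(\bar p)\|_b\ge E(x_0)\,\|P(\bar p)\|_a/\alpha=E(x_0)\bar p/\alpha$ together with the upper bound $U(\bar p)\le\bar p/\|T(\signal{0})\|_a$ from (ii) gives $U(\bar p)\in\Theta(\bar p)$. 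I expect the only point needing care to be the regime $\bar p\to0^+$: one must first argue $P(\bar p)\to\signal{0}$ and then invoke continuity of $T$ (this is precisely where the continuity hypothesis enters the efficiency statements), and one must keep the two competing bounds of (ii)--(iii) ordered correctly so the stated minima and crossover come out verbatim; neither difficulty is substantial.
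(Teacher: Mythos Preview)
Your proposal is correct and follows essentially the same architecture as the paper's proof: item (i) from monotonicity of $U$ and $E$ plus Proposition~\ref{prop.acondv}(iii) and continuity of $T$; items (ii)--(iii) from the single inequality $U(\bar p)\|T(\signal{0})\|_a\le U(\bar p)\|T(P(\bar p))\|_a=\bar p$ combined with $U(\bar p)\le 1/\lambda_\infty$; and items (iv)--(v) by pairing those upper bounds with monotonicity-based lower bounds. One small difference worth noting: for the lower bound $E(\bar p)\ge c/\bar p$ in (iv), the paper invokes the \emph{scalability} of $T$ (bounding $\|T(\beta\bar p\,\signal{1})\|_b\le \bar p\,\|T(\beta\signal{1})\|_b$ for $\bar p>1$), whereas your argument $E(\bar p)=U(\bar p)/\|P(\bar p)\|_b\ge U(x_0)/(\beta\bar p)$ uses only monotonicity of $U$ and norm equivalence---a slightly more direct route that avoids scalability altogether; similarly, in (v) the paper introduces the $\|\cdot\|_a$-efficiency $E_a$ so that $U(\bar p)=\bar p\,E_a(\bar p)$ exactly, while you stay with $E$ and pay one norm-equivalence constant, which is equally valid.
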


\begin{proof}
	(i) The utility function $U$ is strictly increasing by Fact~\ref{fact.nuzman_main}(iii) and $\lim_{\bar{p}\to\infty} U(\bar{p})= 1/\lambda_\infty$ by Proposition~\ref{prop.acondv}(iii),  so $\sup_{\bar{p}>0} U(\bar{p}) = \lim_{\bar{p}\to \infty} U(\bar{p}) = 1/\lambda_\infty$. 

Now, let $(\bar{p}_n)_{n\in\Natural}\subset \real_{++}$ be an arbitrary sequence such that $\lim_{n\to\infty}\bar{p}_n = 0$. To prove that $\lim_{\bar{p}\to 0^+} E(\bar{p}) = \lim_{\bar{p}\to 0^+} {1}/{\|T(P(\bar{p}))\|_b} = {1}/{\|T(\signal{0})\|_b}$, we only have to show that $\lim_{n\to\infty} {1}/{\|T(P(\bar{p}_n))\|_b} = {1}/{\|T(\signal{0})\|_b}$. By Fact~\ref{fact.nuzman_main}(ii), we have $\lim_{n\to\infty} \|P(\bar{p}_n)\|_a = \lim_{n\to\infty} \bar{p}_n = {0}$, and thus $\lim_{n\to\infty} P(\bar{p}_n)=\signal{0}$. Therefore, by continuity and positivity of $T$, we obtain $\lim_{n\to\infty} E(\bar{p}_n) = \lim_{n\to\infty} {1}/{\|T(P(\bar{p}_n))\|_b}= {1}/{\|T(\signal{0})\|_b}<\infty$. 	The equality $\sup_{\bar{p}>0} E(\bar{p}) = \lim_{\bar{p}\to 0^+} E(\bar{p}) = {1}/{\|T(\signal{0})\|_b}$ is now immediate from Lemma~\ref{lemma.e_monotone}, and the proof of (i) is complete.

(ii) By Fact~\ref{fact.nuzman_main}(ii), positivity of $U$ and $T$, and monotonicity of $T$, $\|\cdot\|_a$, and $P$, we have 
\begin{align*}
(\forall\bar{p}>0)~ 0< U(\bar{p}) \|T(\signal{0})\|_a \le  U(\bar{p}) \|T(P(\bar{p}))\|_a   = \bar{p},
\end{align*}
 and thus $(\forall\bar{p}>0)~U(\bar{p})\le \bar{p}/ \|T(\signal{0})\|_a$. Furthermore, by (i), we also have $(\forall\bar{p}>0)~U(\bar{p})\le 1/\lambda_\infty$. Combining these last two inequalities, we obtain the desired result $(\forall \bar{p}\in\real_{++}) U(\bar{p}) \le  \min\{\bar{p}/\|T(\signal{0})\|_a, 1/\lambda_\infty\}$.
 
 (iii)  By (i), Fact~\ref{fact.nuzman_main}(ii), and the definition of the efficiency function, we deduce for every $\bar{p}>0$:
 \begin{align*}
  E(\bar{p})=\dfrac{U(\bar{p})}{\|P(\bar{p}))\|_b}\le \dfrac{\alpha  U(\bar{p})}{\|P(\bar{p}))\|_a}  = \dfrac{\alpha U(\bar{p})}{\bar{p}} \le \dfrac{\alpha}{\lambda_\infty \bar{p}}. 
 \end{align*}
The desired result is obtained by combining the previous bound with the bound $(\forall \bar{p}>0)~E(\bar{p}) \le 1/\|T(\signal{0})\|_b$, which is immediate from (i).

(iv) The relation  $U(\bar{p})\in\Theta(1)$ is immediate from $\lim_{\bar{p}\to \infty} U(\bar{p}) = 1/\lambda_\infty$, as shown in (i). To prove that $E(\bar{p})\in\Theta(1/\bar{p})$ as $\bar{p}\to\infty$, recall that, from the equivalence of norms in finite dimensional spaces, there exists a scalar $\beta\in\real_{++}$ such that $(\forall \bar{p}>0)~\|P(\bar{p})\|_\infty \le \beta \|P(\bar{p})\|_a=\beta\bar{p}$, which  implies $(\forall \bar{p}>0)~P(\bar{p}) \le \beta\bar{p}\signal{1}$. Now we use the bound in (iii), the monotonicity of the norm $\|\cdot\|_a$, and the monotonicity and scalability properties of standard interference functions to verify that  $(\forall \bar{p}>1)~{\alpha}/{(\lambda_\infty \bar{p})} \ge {E}(\bar{p}) = {1}/{\|T(P(\bar{p}))\|_b} \ge {1}/{
	\|T(\beta \bar{p} \signal{1})\|_b} \ge {1}/{(\bar{p} \|T(\beta  \signal{1})\|_b)}$, which implies $E(\bar{p})\in\Theta(1/\bar{p})$ as $\bar{p}\to\infty$.

(v) The result $E(\bar{p})\in\Theta(1)$ as $\bar{p}\to 0^+$ is immediate from (i). To prove that $U(\bar{p})\in\Theta(\bar{p})$ as $\bar{p}\to 0^+$, denote by $E_a$ the $\|\cdot\|_a$-efficiency (NOTE: the monotone norm used for the efficiency functions $E$ and $E_a$ may differ). By definition of $E_a$ and Fact~\ref{fact.nuzman}(ii), we have $(\forall \bar{p}\in\real_{++})~\bar{p} E_a(\bar{p}) = U(\bar{p})$. Since $E_a$ is nonincreasing as shown in Lemma~\ref{lemma.e_monotone}, for any $\bar{p}_0\in\real_{++}$, we deduce:
\begin{align*}
(\forall \bar{p}\le p_0)~ \bar{p} E_a(\bar{p}_0) \le U(\bar{p}) \le  \bar{p} \lim_{x\to 0^+} E_a(x) = \dfrac{\bar{p}}{\|T(\signal{0})\|_a},
\end{align*}
where in the last equality we use part (i) of the proposition. The above implies $U(\bar{p})\in\Theta(\bar{p})$ as $\bar{p}\to 0^+$, and the proof is complete.
\end{proof}

The bound in Proposition~\ref{prop.bounds}(ii) motivates the definition of the following transition point:

 \begin{definition}
 	\label{def.transition}
 	If the assumptions in Proposition~\ref{prop.bounds} are valid, we say that the network operates in the \emph{low budget regime} if $\bar{p} \le \bar{p}_\mathrm{T}$ or in the \emph{high budget regime} if $\bar{p} > \bar{p}_\mathrm{T}$, where the budget $\bar{p}_\mathrm{T}:=\|T(\signal{0})\|_a/\lambda_\infty$ is the \emph{transition point}.  
 \end{definition}

To explain in simple terms the importance of Proposition~\ref{prop.bounds} and  Definition~\ref{def.transition}, let us consider a network utility optimization problem in which the budget $\bar{p}$ has the unit of power and the utility $U$ is given in bits/second, as shown in Example~\ref{example.intuition}. In this special case, the transition point gives a coarse indication of an operating point in which the performance of networks are transitioning from a noise-limited regime  to an interference-limited regime. In the low power (noise-limited) regime, if the budget $\bar{p}$ is kept sufficiently small, as an implication of Proposition~\ref{prop.bounds}(v), we verify that the budget efficiency $E$ (i.e., the network transmit energy efficiency) decreases slowly with increasing $\bar{p}$. In contrast, with increasing $\bar{p}$, the rate or utility $U$ behaves similarly to a linear function [see Proposition~\ref{prop.bounds}(ii) and Proposition~\ref{prop.bounds}(v)]. The above observations show that we should transmit with low power and low rate if high transmit energy efficiency is desired, and we emphasize that we have proved this expected result by using a very general model that is known to unify, within a single framework, the behavior of a large array of transmission technologies. In the high power (interference-limited) regime, the energy efficiency eventually decays quickly as the power budget $\bar{p}$ diverges to infinity because $E(\bar{p})\in\Theta(1/\bar{p})$ as $\bar{p}\to\infty$ (i.e., multiplying the power $\bar{p}$ by a factor $\alpha$ roughly divides the transmit energy efficiency by the same factor $\alpha$ when $\bar{p}$ is sufficiently large), while gains in rate eventually become marginal because of the uniform bound $(\forall\bar{p}\in\real_{++})~U(\bar{p})\le 1/\rho(T_\infty)$. An illustration of these properties in a concrete utility optimization problem is shown later in Sect.~\ref{sect.planning}. See, in particular, Figs. \ref{fig.utility} and \ref{fig.energy_efficiency} for a graphical illustration of the properties described above.

\section{Existence of fixed points of standard interference mappings}
\label{sect.existence}
In this section we show that spectral properties of asymptotic mappings provide us with the natural generalization of existing results related to the feasibility of network designs. More specifically, systems designers frequently have to determine whether a network configuration is able to support a given traffic profile. As shown in \cite{yates95, nuzman07,martin11,slawomir09,Majewski2010,siomina12,renato2016,feh2013} and many other studies, answering questions of this type often reduces to determining whether a mapping $T$ in  $\mathcal{F}_\mathrm{SI}$ has a fixed point (see also Sect.~\ref{sect.feasibility_load_model} for exemplary applications). If the mapping is affine, then the following result has been widely used in the context of resource allocation in  wireless networks \cite[Ch.~2]{slawomir09}\cite[Sect.~2.5.2]{chiang2008power}. Let $\signal{X}\in\real_+^{N\times N}$ be a nonnegative matrix and $\signal{u}\in\real_{++}^N$ a positive vector. Then the system $\signal{p}=\signal{u}+\signal{X}\signal{p}$ has a positive solution $\signal{p}=(\signal{I}-\signal{X})^{-1}\signal{u}\in\real_{++}^N$, or, equivalently, $\mathrm{Fix}(T)\neq\emptyset$ for $T:\real^N_+\to\real^N_{++}:\signal{p}\mapsto \signal{X}\signal{p}+\signal{u}$,  if and only if the spectral radius of the matrix $\signal{X}$ is strictly less than one. For this mapping $T$, which is affine in $\real_+^N$, a direct application of the result in Remark~\ref{remark.example_asym} shows that the corresponding asymptotic mapping is $T_\infty:\real^N_+\to\real^N_{+}:\signal{p}\mapsto \signal{X}\signal{p}$, and thus the spectral radius of $T_\infty$ (in the sense of Definition~\ref{def.nl_radius}) and the spectral radius of the matrix $\signal{X}$ (in the conventional sense in linear algebra) coincide. This observation suggests that, in feasibility analysis involving nonlinear mappings, the spectral radius of matrices should be replaced by the spectral radius of asymptotic mappings. Proposition~\ref{proposition.existence} is the formal verification of the validity of this approach {for mappings in $\mathcal{F}_\mathrm{SI}$}.

\begin{proposition}
	\label{proposition.existence}
	Let $T:\real^N_+\to\real_{++}^N$ be a continuous mapping in $\mathcal{F}_\mathrm{SI}$. Then $\mathrm{Fix}(T)\neq \emptyset$ if and only if $\rho(T_\infty)<1$.
\end{proposition}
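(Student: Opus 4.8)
The plan is to prove the two implications separately. Since $T$ is continuous, Proposition~\ref{prop.af_properties}(ii) shows that $T_\infty$ is continuous, and Corollary~\ref{cor.connection}(ii) shows that $T_\infty\in\mathcal{F}_\mathrm{GI}$, so $\rho(T_\infty)$ is well defined in the sense of Definition~\ref{def.nl_radius}. For the implication $\mathrm{Fix}(T)\neq\emptyset\Rightarrow\rho(T_\infty)<1$, I would start from a fixed point $\signal{p}^\star\in\mathrm{Fix}(T)$, which by Fact~\ref{fact.p_si}(i) satisfies $\signal{p}^\star\in\real_{++}^N$. The key point is that a standard interference function \emph{strictly} dominates its asymptotic function: for each coordinate function $f$ of $T$ and every $\signal{x}\in\real_+^N$, the strict form of Lemma~\ref{lemma.basic_properties}(ii) together with Proposition~\ref{prop.af_properties}(i) gives $f(\signal{x})>(1/2)f(2\signal{x})\ge\lim_{h\to\infty}f(h\signal{x})/h=f_\infty(\signal{x})$, hence $T(\signal{x})>T_\infty(\signal{x})$ holds coordinate-wise for every $\signal{x}\in\real_+^N$. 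Evaluating this at $\signal{x}=\signal{p}^\star$ yields $T_\infty(\signal{p}^\star)<T(\signal{p}^\star)=\signal{p}^\star$; since $\signal{p}^\star\in\real_{++}^N$ and there are only finitely many coordinates, the strict inequality can be absorbed into a scalar, i.e., there is $\epsilon\in\,]0,1[$ with $T_\infty(\signal{p}^\star)\le(1-\epsilon)\signal{p}^\star$. Applying Fact~\ref{fact.nLemma33} to the continuous mapping $T_\infty\in\mathcal{F}_\mathrm{GI}$ then gives $\rho(T_\infty)\le 1-\epsilon<1$.

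For the converse $\rho(T_\infty)<1\Rightarrow\mathrm{Fix}(T)\neq\emptyset$, I would reuse the machinery of Section~\ref{sect.bounds}. Instantiate Problem~\ref{problem.canonical} with this mapping $T$ and an arbitrary monotone norm $\|\cdot\|_a$, and for each budget $\bar{p}\in\real_{++}$ let $(\signal{p}_{\bar{p}},\lambda_{\bar{p}})\in\real_{++}^N\times\real_{++}$ be the solution, which exists and is unique by Fact~\ref{fact.nuzman_main}(i)--(ii), so that $T(\signal{p}_{\bar{p}})=\lambda_{\bar{p}}\signal{p}_{\bar{p}}$. By Proposition~\ref{prop.acondv}(i) and~(iii), $\lambda_{\bar{p}}\to\rho(T_\infty)$ as $\bar{p}\to\infty$; since $\rho(T_\infty)<1$, there exists a budget $\bar{p}_0$ with $\lambda_{\bar{p}_0}<1$. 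For that budget, $\signal{p}_{\bar{p}_0}\ge\signal{0}$ and $\lambda_{\bar{p}_0}\le 1$ give $T(\signal{p}_{\bar{p}_0})=\lambda_{\bar{p}_0}\signal{p}_{\bar{p}_0}\le\signal{p}_{\bar{p}_0}$, and Fact~\ref{fact.p_si}(v) then yields $\mathrm{Fix}(T)\neq\emptyset$.

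The substantive difficulty is concentrated in the converse implication, but it has effectively already been discharged inside Proposition~\ref{prop.acondv}, whose proof identifies $\lim_{\bar{p}\to\infty}\lambda_{\bar{p}}$ with $\rho(T_\infty)$. A proof that avoids invoking Proposition~\ref{prop.acondv} would instead have to construct directly a vector $\signal{x}\in\real_+^N$ with $T(\signal{x})\le\signal{x}$, equivalently a strictly positive $\signal{d}$ with $T_\infty(\signal{d})<\signal{d}$ (from which $T(h\signal{d})\le h\signal{d}$ for large $h$ would follow by Lemma~\ref{lemma.basic_properties}(ii) and Proposition~\ref{prop.af_properties}(i)); this is delicate precisely because the Perron eigenvector of $T_\infty$ supplied by Fact~\ref{fact.achieved} may lie on the boundary of $\real_+^N$, and pushing it into the interior requires a Collatz--Wielandt / nonlinear Perron--Frobenius argument (for instance, perturbing $T_\infty$ to a mapping sending $\real_+^N\setminus\{\signal{0}\}$ into $\real_{++}^N$ and controlling the behaviour of its spectral radius under the perturbation). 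In the forward implication the only subtlety is upgrading the non-strict bound $\rho(T_\infty)\le 1$ — which already follows from Fact~\ref{fact.nLemma33} applied at $\signal{p}^\star$ with $\lambda=1$ — to a strict one, and this is exactly where the gap between $\mathcal{F}_\mathrm{SI}$ and $\mathcal{F}_\mathrm{GI}$ is used.
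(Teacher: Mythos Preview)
Your proof is correct and follows essentially the same approach as the paper: for the forward direction, both arguments use strict scalability (via Lemma~\ref{lemma.basic_properties}(ii)) together with Proposition~\ref{prop.af_properties}(i) to obtain $T_\infty(\signal{p}^\star)<\signal{p}^\star$ at the fixed point and then invoke Fact~\ref{fact.nLemma33}, and for the converse both reduce to Proposition~\ref{prop.acondv}(iii) combined with Fact~\ref{fact.p_si}(v). Your explicit absorption of the strict inequality into a factor $(1-\epsilon)$ before applying Fact~\ref{fact.nLemma33} is a point the paper leaves implicit.
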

\begin{proof}
	In the proof below, we assume that $(\bar{p}_n)_{n\in\Natural}\subset~ ]1,\infty[$ is an arbitrary strictly increasing sequence satisfying $\lim_{n\to\infty}\bar{p}_n = \infty$. We also recall that $T_\infty$ is continuous as an implication of its definition and  Proposition~\ref{prop.af_properties}(ii).

	Suppose that $T$ has a fixed point denoted by $\signal{x}^\prime\in\mathrm{Fix}(T)\neq\emptyset$, which is unique by Fact~\ref{fact.p_si}(ii). By the scalability property of $T$, we deduce
	\begin{align*}
	(\forall n\in\Natural)~~ \dfrac{1}{\bar{p}_n}T\left(\bar{p}_n\signal{x}^\prime\right)< \dfrac{\bar{p}_n}{\bar{p}_n} T\left( \signal{x}^\prime\right) = T\left(\signal{x}^\prime\right) = \signal{x}^\prime.
	\end{align*}
	With a direct application of Proposition~\ref{prop.af_properties}(i) and  Lemma~\ref{lemma.basic_properties}(ii), we verify that
	\begin{align*}
	T_\infty(\signal{x}^\prime) = \lim_{n\to\infty} \dfrac{1}{\bar{p}_n}T\left(\bar{p}_n\signal{x}^\prime\right) \le \dfrac{1}{\bar{p}_1}T\left(\bar{p}_1\signal{x}^\prime\right) < \signal{x}^\prime\in\real^N_{++},
	\end{align*}
	and the above implies $\rho(T_\infty)<1$ by Corollary~\ref{cor.connection}(ii), continuity of $T_\infty$, and Fact~\ref{fact.nLemma33}.

	Conversely, assume that $\rho(T_\infty)<1$. Consider Problem~\ref{problem.canonical} with the mapping $T$ and an arbitrary monotone norm $\|\cdot\|_a$, and let $(\signal{p}_{\bar{p}},\lambda_{\bar{p}})\in\real^N_{++}\times\real_{++}$ be as defined in Proposition~\ref{prop.acondv}. By Proposition~\ref{prop.acondv}(iii), we have $\lim_{n\to\infty} \lambda_{\bar{p}_n}=\rho(T_\infty)<1$. Therefore, there exists $M\in\Natural$ such that $\lambda_{\bar{p}_M} \in~]0,1[$. We now obtain from the definition of the tuple $(\signal{p}_{\bar{p}},\lambda_{\bar{p}})$ and Fact~\ref{fact.nuzman_main}(i)-(ii) the relation $\signal{p}_{\bar{p}_M} = ({1}/{\lambda_{\bar{p}_M}}) T(\signal{p}_{\bar{p}_M} ) \ge T(\signal{p}_{\bar{p}_M})$, which implies $\mathrm{Fix}(T)\ne\emptyset$ by Fact~\ref{fact.p_si}(v), and the proof is complete. 	

\end{proof}

To use Proposition~\ref{proposition.existence} in practice, we need simple approaches to compute the spectral radius of $T_\infty$, which can be done, for example, with the techniques in Remark~\ref{remark.comp_specrad} and Remark~\ref{remark.approx}. More interesting, in some applications involving nonlinear mappings $T$, their corresponding  asymptotic mappings $T_\infty$ are linear in $\real_+^N$, so computing $\rho(T_\infty)$ reduces to computing the spectral radius of matrices, which is an operation for which well established algorithms with efficient implementations exist \cite[Ch.~7]{golub}. The application shown later in Sect.~\ref{sect.feasibility_load_model} involves mappings with this desirable property, and below we illustrate with a toy example that we can easily verify the existence of fixed points if the asymptotic mappings are linear.

\begin{example}
	\label{example.existence_ta}
	Fix $\alpha\in\real_+$, and let $T_{\alpha}\in\mathcal{F}_\mathrm{SI}$ be the nonlinear mapping in Example~\ref{example.lhopital}. Recall that its asymptotic mapping is given by $({T}_\alpha)_\infty:\real^2_+\to\real_+^2:\signal{x}\mapsto \signal{A}\signal{x}$, where $A=\left[\begin{matrix}\alpha & 0 \\  0 & 0\end{matrix} \right]$. In this case, the spectral radius of the matrix $\signal{A}$, in the conventional sense in linear algebra, coincides with the spectral radius of the asymptotic mapping $(T_\alpha)_\infty$, in the sense of Definition~\ref{def.nl_radius}. Therefore, we conclude that $\mathrm{Fix}(T_\alpha)\neq\emptyset$ if and only if $\rho(({T}_\alpha)_\infty)=\alpha \in [0,~1[$.
\end{example}

 Proposition~\ref{proposition.existence} is mostly useful if we do not require information about the location of the fixed point. If we also need knowledge of whether the possibly existing fixed point satisfies a constraint written in terms of a monotone norm (e.g., to verify whether SINR requirements can be achieved with a power vector $\signal{p}$ satisfying transmit power constraints), the following result can be used:

\begin{proposition}
	\label{proposition.feasibility_constraints}
	Let $\|\cdot\|$ be a monotone norm and $T:\real_+^N\to\real_{++}^N$ a mapping in $\mathcal{F}_{\mathrm{SI}}$. Denote by $(\signal{x}^\star,\lambda^\star)\in\real^N_{++}\times\real_{++}$ the solution to Problem~\ref{problem.cond_eig} with this mapping and norm. Then $\signal{x}^\prime \in \mathrm{Fix}(T)\neq\emptyset$ and $\|\signal{x}^\prime\| \le 1$ if and only if $\lambda^\star \le 1$.
\end{proposition}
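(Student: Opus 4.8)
The plan is to relate the fixed points of $T$ to the conditional eigenvalue problem via the scaling behavior of standard interference mappings. First I would establish the forward direction: assume $\signal{x}^\prime\in\mathrm{Fix}(T)$ (unique by Fact~\ref{fact.p_si}(ii)) and $\|\signal{x}^\prime\|\le 1$. If $\|\signal{x}^\prime\|=1$, then $(\signal{x}^\prime,1)$ solves Problem~\ref{problem.cond_eig}, so by uniqueness $\lambda^\star=1$. If $\|\signal{x}^\prime\|<1$, set $\signal{x}^{\prime\prime}:=\signal{x}^\prime/\|\signal{x}^\prime\|$, which has unit norm and, since $1/\|\signal{x}^\prime\|>1$, the weak-scalability/monotonicity estimate in Lemma~\ref{lemma.basic_properties}(i) gives $T(\signal{x}^{\prime\prime})=T\!\left(\tfrac{1}{\|\signal{x}^\prime\|}\signal{x}^\prime\right)\ge \tfrac{1}{\|\signal{x}^\prime\|}T(\signal{x}^\prime)=\tfrac{1}{\|\signal{x}^\prime\|}\signal{x}^\prime=\signal{x}^{\prime\prime}$, i.e., $T(\signal{x}^{\prime\prime})\ge 1\cdot\signal{x}^{\prime\prime}$ with $\|\signal{x}^{\prime\prime}\|=1$. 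By Lemma~\ref{lemma.lambda_ineq} this forces $\lambda^\star\ge 1$. On the other hand, $\signal{x}^\prime\le\signal{x}^{\prime\prime}$ together with monotonicity of $T$ and of $\|\cdot\|$ yields $\|T(\signal{x}^{\prime\prime})\|\ge\|T(\signal{x}^\prime)\|=\|\signal{x}^\prime\|$; but one can also push this to get an upper bound — more directly, apply Fact~\ref{fact.p_si}(v): since $\signal{x}^\prime$ satisfies $T(\signal{x}^\prime)\le\signal{x}^\prime$ we already know $\mathrm{Fix}(T)\ne\emptyset$, and combining $T(\signal{x}^\prime)=\signal{x}^\prime$ with the scalability inequality applied at $\signal{x}^\star$ will show $\lambda^\star\le 1$. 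Concretely, since $\mathrm{Fix}(T_{\lambda^\star})=\{\signal{x}^\star\}$ where $T_{\lambda^\star}:=\tfrac{1}{\lambda^\star}T$, and $\signal{x}^\prime$ is a fixed point of $T$, a comparison argument as in the proof of Lemma~\ref{lemma.lambda_ineq} (iterating $T_{\lambda^\star}$ from $\signal{x}^\prime$, using Fact~\ref{fact.p_si}(iii)–(iv)) pins down $\lambda^\star\le 1$; indeed if $\lambda^\star>1$ then $T_{\lambda^\star}(\signal{x}^\prime)=\tfrac{1}{\lambda^\star}\signal{x}^\prime\le\signal{x}^\prime$, so $T^n_{\lambda^\star}(\signal{x}^\prime)$ is nonincreasing and converges to $\signal{x}^\star$, whence $\signal{x}^\star\le\signal{x}^\prime$, and then $\lambda^\star=\lambda^\star\|\signal{x}^\star\|\le\|T(\signal{x}^\star)\|$... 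I would instead argue: $\signal{x}^\star\le\signal{x}^\prime$ gives $\lambda^\star\signal{x}^\star=T(\signal{x}^\star)\le T(\signal{x}^\prime)=\signal{x}^\prime$, hence taking norms $\lambda^\star=\lambda^\star\|\signal{x}^\star\|\le\|\signal{x}^\prime\|<1$, a contradiction. So $\lambda^\star\le 1$ in all cases.

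For the converse, assume $\lambda^\star\le 1$. The solution $(\signal{x}^\star,\lambda^\star)\in\real^N_{++}\times\real_{++}$ to Problem~\ref{problem.cond_eig} exists by Fact~\ref{fact.nuzman} (since $T\in\mathcal{F}_\mathrm{SI}$), and $T(\signal{x}^\star)=\lambda^\star\signal{x}^\star\le\signal{x}^\star$. By Fact~\ref{fact.p_si}(v) this immediately gives $\mathrm{Fix}(T)\ne\emptyset$; call the unique fixed point $\signal{x}^\prime$. It remains to check $\|\signal{x}^\prime\|\le 1$. Using Fact~\ref{fact.p_si}(iii)–(iv): from $T(\signal{x}^\star)\le\signal{x}^\star$ the sequence $T^n(\signal{x}^\star)$ is nonincreasing and converges to $\signal{x}^\prime$, so $\signal{x}^\prime\le\signal{x}^\star$; monotonicity of $\|\cdot\|$ then gives $\|\signal{x}^\prime\|\le\|\signal{x}^\star\|=1$, as desired.

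The main obstacle I anticipate is handling the boundary case cleanly and making sure every comparison step is justified only by the facts available: the normalized point $\signal{x}^\prime/\|\signal{x}^\prime\|$ in the forward direction when $\|\signal{x}^\prime\|<1$ requires the \emph{weak} scalability direction (Lemma~\ref{lemma.basic_properties}(i) with $\alpha=1/\|\signal{x}^\prime\|>1$ — note the inequality there is stated for $\alpha\in(0,1)$, so I must instead invoke the defining property (P1)/(P2) of $T$ at $\alpha>1$ directly, or rescale), and pinning $\lambda^\star\le 1$ from the existence of a genuine fixed point is exactly the kind of monotone-iteration argument used inside Lemma~\ref{lemma.lambda_ineq}, which I would reuse rather than reprove. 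Once those comparisons are set up, both directions reduce to Fact~\ref{fact.p_si}(iii)–(v) and Lemma~\ref{lemma.lambda_ineq}, with no new machinery needed.
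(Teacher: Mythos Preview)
Your converse direction is essentially identical to the paper's: from $\lambda^\star\le 1$ you get $T(\signal{x}^\star)\le\signal{x}^\star$, invoke Fact~\ref{fact.p_si}(iii)--(v) to produce the fixed point $\signal{x}^\prime\le\signal{x}^\star$, and conclude $\|\signal{x}^\prime\|\le\|\signal{x}^\star\|=1$ by monotonicity of the norm.

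For the forward direction, your first attempt is actually mistaken, not merely misreferenced. With $\alpha:=1/\|\signal{x}'\|>1$, scalability (P1) gives $T(\alpha\signal{x}')<\alpha T(\signal{x}')$, i.e.\ $T(\signal{x}'')<\signal{x}''$, the \emph{reverse} of what you wrote; invoking (P1)/(P2) ``directly'' does not flip the sign back. Moreover, even had your inequality been correct, Lemma~\ref{lemma.lambda_ineq} would only yield $\lambda^\star\ge 1$, which is the wrong direction. So that entire line should be dropped rather than patched. Fortunately, the contradiction argument you settle on afterwards is sound: assuming $\lambda^\star>1$, you observe $T_{\lambda^\star}(\signal{x}')=(1/\lambda^\star)\signal{x}'\le\signal{x}'$, iterate $T_{\lambda^\star}$ down to its fixed point $\signal{x}^\star\le\signal{x}'$, and then monotonicity of $T$ and of $\|\cdot\|$ gives $\lambda^\star=\|\lambda^\star\signal{x}^\star\|=\|T(\signal{x}^\star)\|\le\|T(\signal{x}')\|=\|\signal{x}'\|<1$, a contradiction. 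This is correct and is the dual of the paper's argument: the paper instead assumes $\lambda^\star>1$, notes $T(\signal{x}^\star)=\lambda^\star\signal{x}^\star>\signal{x}^\star$, iterates $T$ \emph{upward} from $\signal{x}^\star$ to $\signal{x}'$ (Fact~\ref{fact.p_si}(iii)--(iv)) to get $\signal{x}'\ge\signal{x}^\star$, and derives the contradiction $1>\|\signal{x}'\|\ge\|\signal{x}^\star\|=1$ directly. The paper's route is one step shorter because it avoids the intermediate appeal to monotonicity of $T$, but the two arguments are essentially mirror images.
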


\begin{proof}
	Assume that $\lambda^\star \le 1$. As a result, we have ${T}(\signal{x}^\star)=\lambda^\star \signal{x}^\star \le \signal{x}^\star$, which,  by Fact~\ref{fact.p_si} and monotonicity of the norm $\|\cdot\|$, implies the existence of the unique fixed point $\signal{x}^\prime \in \mathrm{Fix}(T)\neq\emptyset$ satisfying   $\|\signal{x}^\prime\| \le \|\signal{x}^\star\|=1$. Conversely, let the fixed point $\signal{x}^\prime \in \mathrm{Fix}(T)\neq\emptyset$ (which is unique  by Fact~\ref{fact.p_si}(ii)) satisfy $\|\signal{x}^\prime\| \le 1$. In particular, if $\|\signal{x}^\prime\|=1$, we have $\lambda^\star=1$ and $\signal{x}^\star=\signal{x}^\prime$, so we only need to consider the case $\|\signal{x}^\prime\| < 1$. To obtain a contradiction, assume that $\lambda^\star > 1$. Therefore, ${T}(\signal{x}^\star)=\lambda^\star \signal{x}^\star > \signal{x}^\star$. As a result, by Fact~\ref{fact.p_si}(iii)-(iv), we have $\signal{x}^\prime = \lim_{n\to\infty} T^n(\signal{x}^\star) > \signal{x}^\star$, which implies the contradiction $1>\|\signal{x}^\prime\| \ge \|\signal{x}^\star\|=1$ because the norm $\|\cdot\|$ is monotone.
\end{proof}

\section{Exemplary applications}
\label{sect.applications}
To show concrete applications of the theory developed in the previous sections, we consider network feasibility and optimization problems based on the load coupled interference model described in \cite{Majewski2010,renato14SPM,renato2016power,siomina12,ho2014data,ho2015,feh2013,siomina2012load,renato2017load}, among other studies. This model approximates the long-term behavior of modern communication systems (and, in particular, OFDMA-based systems), and it has successfully addressed many system-level optimization tasks such as data offloading \cite{ho2014data}, load optimization \cite{siomina2012load,renato2017load}, antenna tilt optimization \cite{feh2013}, energy savings \cite{ho2015,renato14SPM,renato2016power}, rate optimization \cite{renato2016maxmin}, and resource muting \cite{qi2017}, to cite a few. 

\subsection{The system model}
\label{sect.system_model}
In the load coupling interference model, we divide the time and frequency grid into $K\in\Natural$ units called resource blocks. Users assigned to the same base station are not allowed to share resource blocks, but intercell interference is present because different base stations can concurrently allocate the same portion of the spectrum to serve their users. The sets of $N$ users and $M$ base stations in the network are denoted by $\setn:=\{1,\ldots,N\}$ and $\setm:=\{1,\ldots,M\}$, respectively. The set $\setn_i\subset \setn$, assumed nonempty, is the set of users connected to base station $i\in\setm$. The pathloss between base station $i\in\setm$ and user $j\in\setn$ is given by $g_{i,j}\in\real_{++}$. The vector of transmit power per resource block and the load vector are given by, respectively, $\signal{p}=[p_1,\ldots,p_M]^t\in\real_{++}^M$ and $\signal{x}=[x_1,\ldots,x_M]^t\in\real_{++}^M$, where the $i$th coordinate of these vectors correspond to the power per resource block or the load at base station $i\in\setm$. Here, load is defined to be the fraction of resource blocks that a base station uses for data transmission. Note that this model assumes uniform transmit power per resource block, and it also assumes that all resource blocks experience the same (long-term) pathloss. In this model, the downlink achievable rate of a resource block assigned by base station $i\in\setm$ to user $j\in\setn$ is given by \cite{Majewski2010,renato14SPM,renato2016power,siomina2012load,ho2014data,ho2015,feh2013}:
\begin{align*}
\omega_{i,j}(\signal{x},\signal{p})=B\log_2\left(1+\dfrac{p_i g_{i,j}}{\sum_{k\in\setm\backslash\{i\}}{x}_k p_k g_{k,j}+\sigma^2}\right),
\end{align*}
where $\sigma^2\in\real_{++}$ is the noise per resource block and $B\in\real_{++}$ is the bandwidth of each resource block. By denoting by $d_j\in\real_{++}$ the data rate requested by user $j\in\setn$, if we fix the power allocation $\signal{p}\in\real_{++}^M$, the load at the base stations is obtained by computing the fixed point (if it exists) of the continuous positive concave mapping given by   \cite{Majewski2010,siomina12,renato14SPM,feh2013}:
\begin{align}
\label{eq.load_mapping}
T:\real^M_{+}\to\real^M_{++}:\signal{x}\mapsto [ t^{(1)}(\signal{x}),\ldots,  t^{(M)}(\signal{x})]^t,
\end{align}
where, for each $i\in\setm$,
\begin{align*}
t^{(i)}:\real^M_+\to\real_{++}:\signal{x}\mapsto\sum_{j\in\setn_i} \dfrac{d_j}{K\omega_{i,j}(\signal{x},\signal{p})}.
\end{align*}
(For a formal proof of concavity of the above functions, we refer the readers to, for example, \cite[Theorem~2]{siomina12} and \cite[p. 29]{renato14SPM}.) Note that we allow the components of the load vector to take values greater than one. In practice, the load of a base station cannot exceed the value one (otherwise the base station would be transmitting with more resources than available in the system), but knowledge of these values during the planning stage of networks is useful to rank base stations according to their unserved traffic demand \cite{siomina12}. 

Instead of computing the load for a given power allocation, we can also compute the power allocation inducing a given load, and we call this problem \emph{the reverse problem}. As originally demonstrated in \cite{ho2015}, the reverse problem is important, for example, in the design of energy-efficient networks. That study has also proved that there exists a mapping in  $\mathcal{F}_\mathrm{SI}$ having as its fixed point the solution to the reverse problem, and a possible mapping has been derived in \cite{renato2016}. In more detail, given a desired load $\signal{x}\in\real_{++}^M$, the power vector $\signal{p}\in\real_{++}^M$ inducing this load, if it exists, is the fixed point of the continuous positive concave mapping given by \cite{renato2016,renato2016power,renato2016maxmin}:
\begin{align} 
\label{eq.power_mapping}
H:\real_{+}^M\to\real_{++}^M:\signal{p}\mapsto [{h}^{(1)}(\signal{p}),\ldots,{h}^{(M)}(\signal{p})], 
\end{align}
where, for each $i\in\setm$ and every $\signal{p}\in\real_{+}^M$,
\begin{align*}
{h}^{(i)}(\signal{p}):=\begin{cases}
\dfrac{p_i}{x_i} \sum_{j\in\setn_i} \dfrac{d_j}{K\omega_{i,j}(\signal{x},\signal{p})},
\quad \mathrm{if}~~ p_i\ne 0  \\ 
\sum\limits_{j\in\setn_i} \dfrac{d_j\ln 2}{KBg_{i,j}{x}_i}\left(\sum\limits_{k\in\setm\backslash\{i\}}{x}_k p_k g_{k,j}+\sigma^2\right), \\ \qquad\qquad \mathrm{otherwise}.
\end{cases}
\end{align*}

\subsection{Feasibility analysis}
\label{sect.feasibility_load_model}
We now turn our attention to the important problem of feasibility analysis of networks, which has been the focus of many previous studies \cite{Majewski2010,siomina12,renato2016,ho2015,renato14SPM}. In mathematical terms, the task is to identify conditions that guarantee the existence of the fixed point of the mapping $T$ in \refeq{eq.load_mapping} or the mapping $H$ in \refeq{eq.power_mapping}. Here we show that many results in the literature emerge as corollaries of Proposition~\ref{proposition.existence}, which can also be used in the analysis of improved models for which existing results are not applicable.

For brevity, we focus on the load mapping $T$ in \refeq{eq.load_mapping}. Using Remark~\ref{remark.example_asym}, we verify that the asymptotic mapping associated with $T$ is given by 
\begin{align}
\label{eq.load_mapping_asymptotic}
T_\infty:\real_{+}^M\to\real_{+}^M:\signal{x}\mapsto \mathrm{diag}(\signal{p})^{-1}\signal{M}\mathrm{diag}(\signal{p})\signal{x},
\end{align}
where $\mathrm{diag}(\signal{p})\in\real_+^{M\times M}$ is a diagonal matrix with diagonal elements given by the power vector $\signal{p}$, and the component $[\signal{M}]_{i,k}$ of the $i$th row and $k$th column of the matrix $\signal{M}\in\real^{M\times M}_+$ is given by
\begin{align*}
[\signal{M}]_{i,k} = \begin{cases}
	0, &\mathrm{if}~~ i= k \\
	\sum_{j \in \mathcal{N}_i} \dfrac{\mathrm{ln}(2)  d_j g_{k,j}}{KB g_{i,j}} & \mathrm{otherwise}.
	\end{cases}
\end{align*}
By Proposition~\ref{proposition.existence}, we know that $T$ has a fixed point if and only if $\rho(T_\infty)<1$. The asymptotic mapping $T_\infty$ is the linear mapping shown in \refeq{eq.load_mapping_asymptotic}, so the spectral radius $\rho(T_\infty)$ of the mapping $T_\infty$  and the spectral radius of the matrix $\mathrm{diag}(\signal{p})^{-1}\signal{M}\mathrm{diag}(\signal{p})$, and hence of the matrix $\signal{M}$, are the same. Therefore, we conclude that $\mathrm{Fix}(T)\neq \emptyset$ if and only if $\rho(\mathrm{diag}(\signal{p})^{-1}\signal{M}\mathrm{diag}(\signal{p}))=\rho(\signal{M})<1$ (here $\rho(\cdot)$ should be understood as the spectral radius of a matrix, in the conventional sense of linear algebra), which is exactly the result obtained in \cite{siomina12,ho2014data,renato2016} by using arguments with different levels of generality. This fact shows that Proposition~\ref{proposition.existence} unifies and generalizes all these existing results in the literature. Note that none of these existing results can be applied to more elaborate interference models in which, for example, the rate of a user is upper bounded because of the limited number of modulation and coding schemes.  These models have been considered in \cite{feh2013,renato14SPM}, and we now show that Proposition~\ref{proposition.existence} can also be used to verify feasibility. More specifically, for fixed $\signal{p}\in\real_{++}^M$, the study in \cite{feh2013} (see also \cite{renato14SPM}) has proposed to replace the mapping $T$ in \refeq{eq.load_mapping} with the mapping
\begin{align*}
	\bar{T}:\real^M_{+}\to\real^M_{++}:\signal{x}\mapsto [ \bar{t}^{(1)}(\signal{x}),\ldots,  \bar{t}^{(M)}(\signal{x})]^t,
\end{align*}
where, for each $i\in\setm$,
\begin{align*}
\bar{t}^{(i)}:\real^M_+\to\real_{++}:\signal{x}\mapsto\sum_{j\in\setn_i} \max\left\{\dfrac{d_j}{K\omega_{i,j}(\signal{x},\signal{p})}, \dfrac{d_j}{u}\right\}
\end{align*}
and $u\in\real_+$ is the maximum rate that each resource block can achieve because of the limited choice of modulation and coding schemes. With the above modification, $\bar{T}$ is not concave,  but it is a member of $\mathcal{F}_\mathrm{SI}$ because this set is closed under positive sums and component-wise maximum \cite{yates95,martin11,renato14SPM,feh2013}. The load is now the fixed point of the mapping $\bar{T}$, and its associated asymptotic mapping is again the mapping $T_\infty$; i.e., 
	$\bar{T}_\infty:\real_{+}^M\to\real_{+}^M:\signal{x}\mapsto \mathrm{diag}(\signal{p})^{-1}\signal{M}\mathrm{diag}(\signal{p})\signal{x},$
where $\signal{M}$ is the matrix also used in \refeq{eq.load_mapping_asymptotic}.
Therefore, by Proposition~\ref{proposition.existence}, we conclude that the improved mapping $\bar{T}$ has a fixed point if and only if $\rho(\signal{M})<1$, which is the same criterion used to identify feasibility of the mapping $T$. Therefore, the existence of a load vector satisfying $\signal{\rho}=\bar{T}(\signal{\rho})$ does not depend on whether the maximum achievable rate per resource block is limited or not. Its existence only depends on the requested rates, the available bandwidth, and the pathlosses (the information required to construct the matrix $\signal{M}$). 

If we also require information about whether the fixed point of $T$ or $\bar{T}$ has every entry with value less than one, once we know that the fixed point exists, we can compute this fixed point by using any existing algorithm \cite{yates95,nuzman07,martin11,renato2016}. Alternatively, we can also use the result in Proposition~\ref{proposition.feasibility_constraints}, which only requires  the solution to Problem~\ref{problem.cond_eig}} by using the iterations in \refeq{eq.krause_iter}, for example. The advantage of the last approach is that we avoid the initial feasibility assessment. 

\subsection{Max-min utility optimization with the load coupled interference model}
\label{sect.planning}
In the analysis in Sect.~\ref{sect.feasibility_load_model}, both the users' rates and the power of base stations are given design parameters. We now consider the problem of computing the transmit power and the corresponding load in order to maximize the minimum achievable rate of the users. In the context of the load coupled interference models described above, this utility maximization problem has been addressed in \cite{renato2016power}, which has shown that, for optimality, all users should transmit with the same rate, and the load at all base stations should be set to one \cite[Proposition~2]{renato2016power}.  Therefore, as shown in that study, the utility optimization problem can be formally written as 

\begin{problem}
	\label{problem.maxmin}
	\begin{align}
	\label{eq.maxmin}
	\begin{array}{lll}
	\mathrm{maximize}_{\signal{p}, u} & u \\
	\mathrm{subject~to} & \signal{p}\in \mathrm{Fix}(uH):=\left\{\signal{p}\in\real_{+}^M~|~\signal{p}=uH(\signal{p})\right\} \\
	& \|\signal{p}\|_\infty \le \bar{p} \\
	& \signal{p}\in\real_{+}^M, u\in\real_{++},
	\end{array}
	\end{align}
	where $\|\cdot\|_\infty$ is the standard $l_\infty$ norm,  $\bar{p}\in \real_{++}$ is the allowed transmit power (per resource block) of base stations, $H$ is the mapping in \refeq{eq.power_mapping}, and $u$ is the common achievable rate of users.
\end{problem}

The above problem is a particular instance of Problem~\ref{problem.canonical}, hence all results derived in Proposition~\ref{prop.bounds} are available. In particular, Proposition~\ref{prop.bounds} provides us with not only sharp bounds, but also with the asymptotic behavior of the solution to Problem~\ref{problem.maxmin} as $\bar{p}\to 0^+$ and $\bar{p}\to \infty$. To compute the bounds in Proposition~\ref{prop.bounds}, we only need the spectral radius of $H_\infty$, and in this particular application $H_\infty$ is the original mapping $H$ in \refeq{eq.power_mapping} with the noise power set to $\sigma=0$. \footnote{In max-min utility optimization problems involving wireless networks, we do not necessarily obtain asymptotic mappings by setting the noise power to zero. See, for example, the discussion on the mapping $T$ in Sect.~\ref{sect.feasibility_load_model}. This mapping has been used in particular instances of Problem~\ref{problem.canonical} in \cite[Sect~V-A]{renato2016maxmin} and \cite{siomina2015b}.}

For a concrete numerical example, we use the same dense network used to produce \cite[Fig.~2]{renato2016maxmin}. Briefly, we consider a stadium with 43,800 simultaneously active users and 142 micro base stations. We set the noise power spectral density to {-154~dBm/Hz}, and we vary the power budget. For brevity, we refer readers to \cite[Sect.~V-B]{renato2016maxmin} for other simulation parameters. 

Fig.~\ref{fig.utility} and Fig.~\ref{fig.energy_efficiency} show the utility (in bits/second) and the $\|\cdot\|_\infty$-transmit energy efficiency (in bits/Joule) obtained in the simulations, respectively. They also show the bounds in Proposition~\ref{prop.bounds} and the transition point in Definition~\ref{def.transition}. To construct the curves in these figures, we proceeded as follows. First, we generated $100$ values for the power budget $\bar{p}$. These values were uniformly spaced within the interval $[-30~\mathrm{dBm}, 100~\mathrm{dBm}]$. Then, for each power budget value $\bar{p}\in\real_{++}^N$ (with the power converted from dBm to Watts), we constructed a sequence $(\signal{p}_n)_{n\in\Natural}$ with the recursion $(\forall n\in\Natural)~\signal{p}_{n+1} = (1/\|H(\signal{p}_n)\|)~H(\signal{p}_n)$, where $\signal{p}_1=\signal{0}$ and $\|\cdot\|$ is the monotone norm given by $(\forall\signal{x}\in\real^N) \|\signal{x}\|=\|\signal{x}\|_\infty/\bar{p}$. By Fact~\ref{fact.nuzman} and Fact~\ref{fact.nuzman_main}, the sequence $(\signal{p}_n)_{n\in\Natural}$ converges to the optimal power allocation $\signal{p}^\star\in\real_{++}^N$, and the optimal utility $u^\star\in\real_{++}$ is given by $u^\star=1/\|H(\signal{p}^\star)\|$. In practice, we computed only the first $M$ terms of $(\signal{p}_n)_{n\in\Natural}$ for a sufficiently large $M$ (i.e., until no numerical progress could be observed), and we used $\signal{p}_M$ as the approximation of $\signal{p}^\star$. As a result, for the plots in Fig.~\ref{fig.utility} and Fig.~\ref{fig.energy_efficiency}, we show  $1/\|H(\signal{p}_M)\|$ and $1/(\bar{p}\|H(\signal{p}_M)\|)$ as the numerical approximations for, respectively, $U(\bar{p})$ and $E(\bar{p})$. In turn, to generate the graphs of the upper bounds, we computed the spectral radius with the approach in Remark~\ref{remark.comp_specrad}, and we used the simple equations in Proposition~\ref{prop.bounds}(ii)-(iii) for each value of $\bar{p}$ (since we use the $l_\infty$-norm as the norm $\|\cdot\|_b$ Proposition~\ref{prop.bounds}, we can set $\alpha=1$). Note that the upper bounds are generated by solving only one conditional eigenvalue problem, while the graphs of the functions $E$ and $U$ require the solution to a conditional eigenvalue problem for each budget $\bar{p}$ being sampled. 

As expected, in Fig.~\ref{fig.utility} and Fig.~\ref{fig.energy_efficiency} we see that all technology-agnostic properties described in the last paragraph of Sect.~\ref{sect.bounds} are present. In particular, Fig.~\ref{fig.energy_efficiency} shows that the transmit energy efficient decreases with increasing power budget $\bar{p}$, as proved in Lemma~\ref{lemma.e_monotone}. We emphasize that this result does not contradict previous studies arguing that the energy efficiency of networks is not necessarily a monotone function in the transmit power \cite{renato14SPM,li2011energy}. As discussed below Definition~\ref{def.mm_ee}, there are many nonequivalent notions of energy efficiency, and the notion of efficiency in  Definition~\ref{def.mm_ee} specialized to Problem~\ref{problem.maxmin} does not take into account, for example, the energy required to power hardware. The analysis of other notions of energy efficiency in power control problems is an important topic that we leave to a future study.

\begin{figure}
	\begin{center}
		\includegraphics[width=\columnwidth]{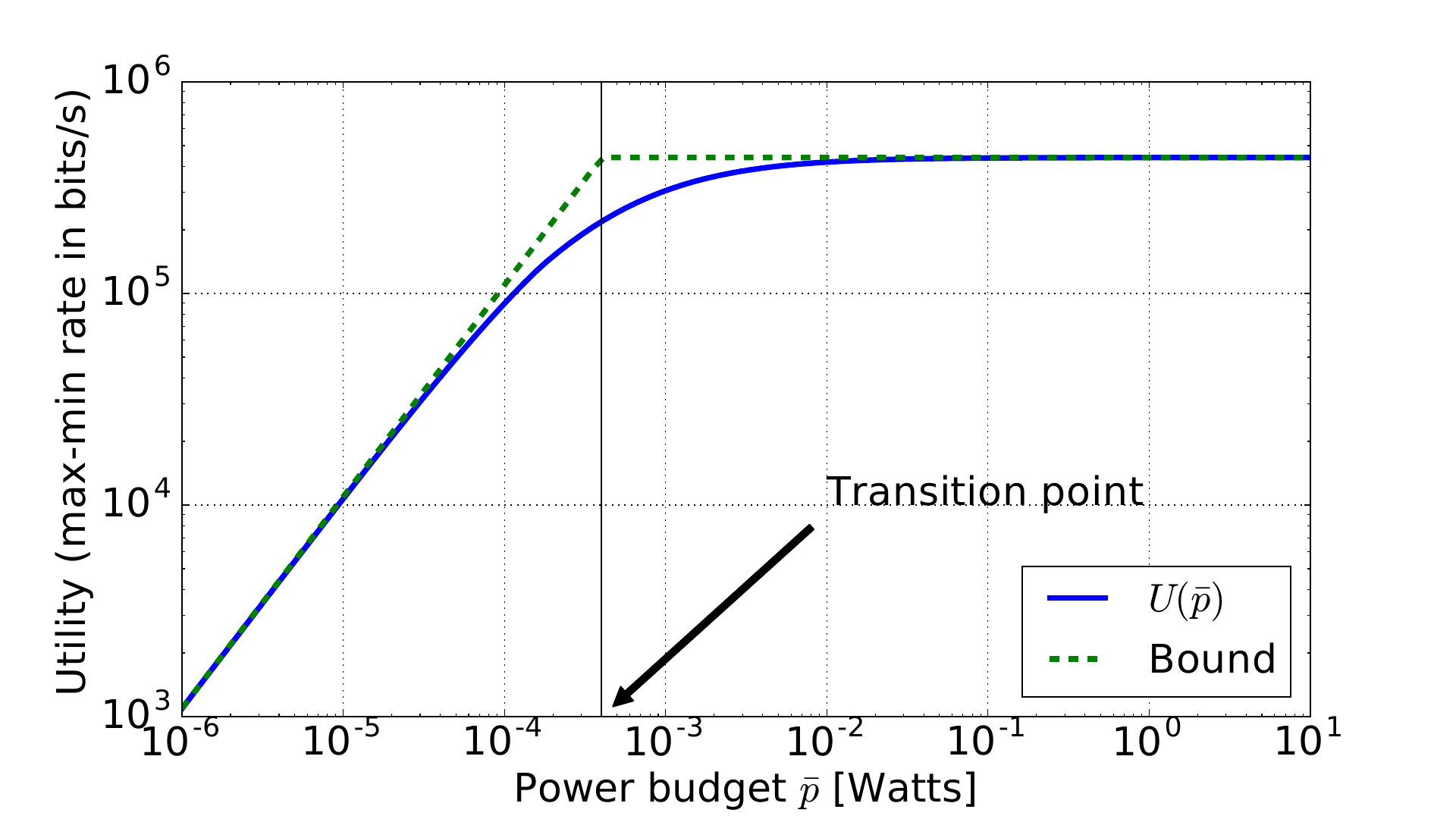}
		\caption{Network utility as a function of the power budget $\bar{p}$ for the problem described in \cite[Sect.~V-B]{renatomaxmin}.}
		\label{fig.utility}
	\end{center}
	\vspace{-.5cm}
\end{figure}

\begin{figure}
	\begin{center}
		\includegraphics[width=\columnwidth]{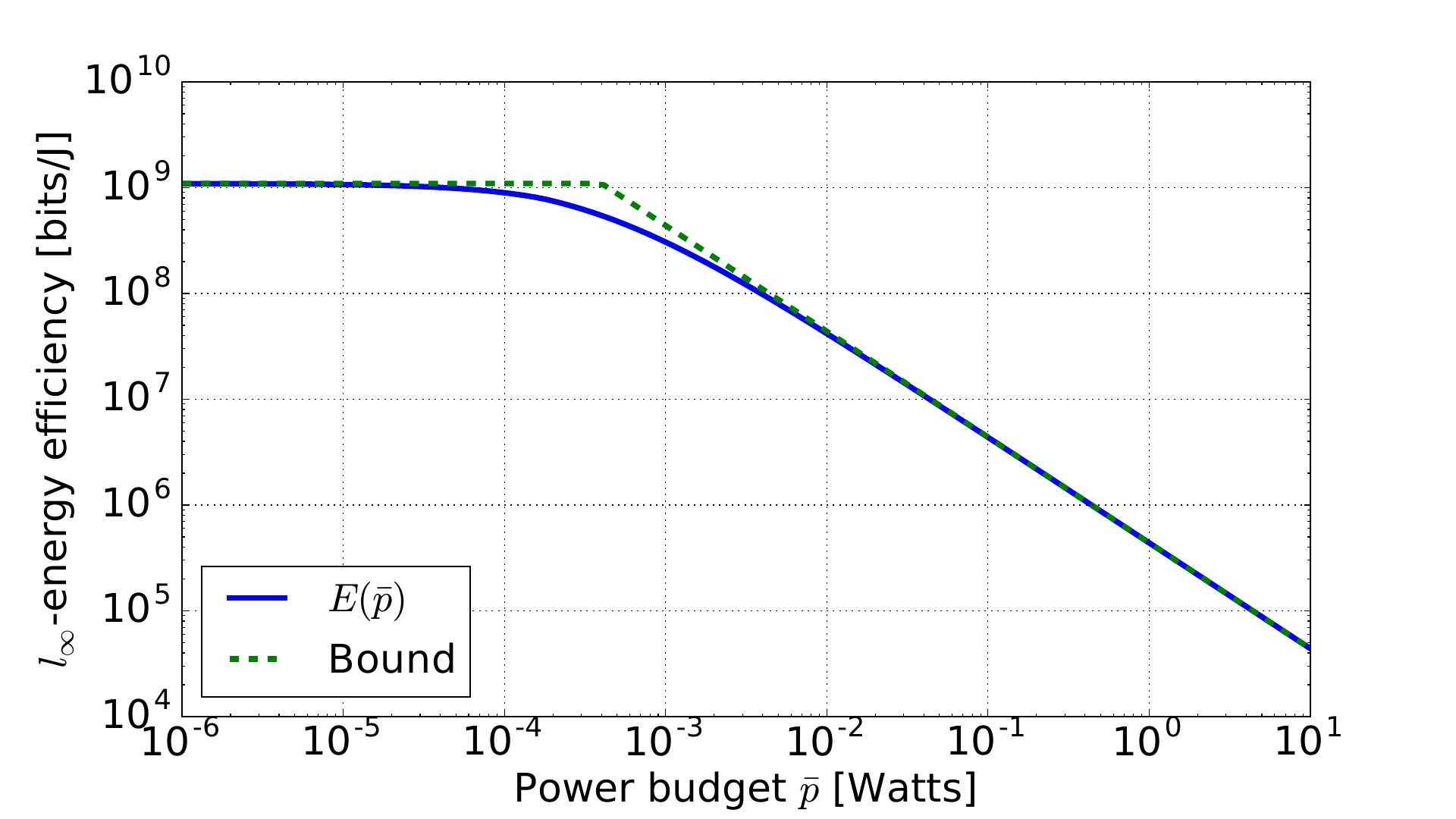}
		\caption{Transmit energy efficiency as a function of the power budget $\bar{p}$ for the problem described in \cite[Sect.~V-B]{renatomaxmin}.}
		\label{fig.energy_efficiency}
	\end{center}
		\vspace{-.5cm}
\end{figure}

\section{Summary and conclusions}

In this study, we have introduced the concept of asymptotic mappings associated with a class of mappings that are common in problems in wireless networks. As a first application of this concept, we showed that the spectral radius of asymptotic mappings can be used to explain the behavior of solutions to max-min utility optimization problems involving standard interference mappings. For example, in some max-min power control problems, the spectral radius of asymptotic mappings provides us with a power budget, called transition point, that indicates power regions in which a network is likely to be interference limited or noise limited. In these problems, gains in utility are eventually marginal if the power budget is above the transition point. In contrast, with a power budget in the region below the transition point, gains in utility are likely to be close to linear as the power budget increases. From these results, we derived analogous properties of a particular notion of transmit energy efficiency. We also showed that the spectral radius of asymptotic mappings provides us with easily computable upper bounds on the utility and the energy efficiency for any given power budget. With this result, network designers can obtain knowledge of limits of the optimal utility for a wide range of budget levels without solving a max-min utility optimization problem for different budget levels, which can be a time-consuming operation. 

As a second application, we showed that knowledge of the spectral radius of the asymptotic mapping associated with an arbitrary standard interference mapping gives a necessary and sufficient condition for the standard interference mapping to have a fixed point. This result unifies and generalizes previous tools for the feasibility analysis of networks, and it has unlocked a powerful tool for the study of networks using interference models that could not be analyzed with previous methods in the literature. Finally, we derived results providing information about the location of the fixed point of standard interference mappings, and we verified all the above contributions in concrete problems in wireless networks. 

\bibliographystyle{IEEEtran}
\bibliography{IEEEabrv,references}

\end{document}